\documentclass{article}
\usepackage{spconf,amsmath,graphicx}
\usepackage{algorithm}
\usepackage{algorithmic}
\usepackage{amsthm}

\usepackage{subfigure}
\usepackage{amssymb}
\usepackage{xcolor}
\usepackage{booktabs}
\usepackage{diagbox}
\usepackage{tikz}
\usetikzlibrary{shadings}
\usepackage{amsmath}
\newtheorem{thm}{\bf Theorem}[section]

\title{Knowledge-Free Black-Box Watermark and Ownership Proof for Image Classification Neural Networks}
%
\name{Fangqi Li, Shilin Wang}
\address{\texttt{ \{solour\_lfq\}@sjtu.edu.cn}\\
School of Electronic Information and Electrical Engineering, Shanghai Jiao Tong University.}
%
%
%
\begin{document}
%
\maketitle
\begin{abstract}
Watermarking has become a plausible candidate for ownership verification and intellectual property protection of deep neural networks.
Regarding image classification neural networks, current watermarking schemes uniformly resort to backdoor triggers.
However, injecting a backdoor into a neural network requires knowledge of the training dataset, which is usually unavailable in the real-world commercialization.
Meanwhile, established watermarking schemes oversight the potential damage of exposed evidence during ownership verification and the watermarking algorithms themselves.
Those concerns decline current watermarking schemes from industrial applications.
To confront these challenges, we propose a knowledge-free black-box watermarking scheme for image classification neural networks.
The image generator obtained from a data-free distillation process is leveraged to stabilize the network's performance during the backdoor injection.
A delicate encoding and verification protocol is designed to ensure the scheme's security against knowledgable adversaries.
We also give a pioneering analysis of the capacity of the watermarking scheme.
Experiment results proved the functionality-preserving capability and security of the proposed watermarking scheme.
\end{abstract}
\begin{keywords}
Deep neural network watermarking, machine learning security.
\end{keywords}
\section{Introduction}
\label{sec:1}
Since artificial intelligence models, especially Deep Neural Networks (DNN) can be readily deployed as commercial services, it is necessary to regulate them as Intellectual Properties (IP), after which their accountability can be addressed.

To achieve Ownership Verification (OV) and protect DNNs as IPs, various watermarking schemes have been proposed for distinct network architectures, such as U-Nets~\cite{zhang2021deep}, image segmentation networks~\cite{ours}, natural language processing networks~\cite{9647115}, generative adversarial networks~\cite{ong2021protecting}, etc.
There are diversified watermarking schemes for image classification DNNs~\cite{li2019persistent,zhang2018protecting,zhu2020secure}, which have been playing a central role in industrial applications including identity recognition, autonomous vehicle, human-machine interaction, etc.

So far, the most critical issue that prevents watermarking schemes from the application is the absence of knowledge regulation.
It is uniformly assumed that the agent who runs the watermarking scheme has unlimited access to the training dataset, which is contradictory to real-world model purchasing and industrial pipeline.
Moreover, the security of most watermarking schemes depends on the secrecy of algorithms. 
An adversary knowing the watermarking algorithm can trivially compromise the ownership proof, making such watermarks of little pragmatic value.
Finally, current watermarking schemes usually focus on identifying a suspicious DNN.
Once such an ownership proof has to be presented to any third party, especially when it has to be repeated multiple times, the exposed knowledge can be utilized to escape IP regulation.

To address these concerns regarding knowledge management during DNN watermarking, we design a knowledge-free watermarking scheme for image classification DNNs.
We adopt the data-free distillation to generate substitute samples for ordinary training samples during watermark injection.
An delicate encoding process generates triggers that carry the ownership information while remain persistent against adversarial tuning with the knowledge of the watermark.
Furthermore, one-way hash functions are incorporated into the trigger generation and the OV procedure, so the owner's identity information would not be erased even if the ownership proof is eavesdropped on.
The contributions of this paper are three folded:
\begin{itemize}
\item We introduce the scenario of knowledge-free watermarking for DNN and raise corresponding security requirements.
\item By combining the data-free distillation and cryptological primitives, a knowledge-free watermarking scheme is proposed for image classification DNN.
We also analyzed the corresponding watermarking capacity of this scheme, the first analytical bound of such a metric.
\item Extensive experiments on numerous datasets and network architectures verified that our method can provide reliable protection of DNNs without accessing the training dataset while resists knowledgeable and eavesdropping adversaries.
\end{itemize}

\section{Preliminaries and Challenges}
\label{sec:2}
\subsection{Watermarks for Image Classification DNN}
To establish the intellectual property regulation of DNNs, the owner of a DNN embeds its identity information, denoted as \texttt{key}, into its product to produce a watermarked DNN $M_{\text{WM}}$.
Such embedding can be done simultaneously with the normal training process or after it~\cite{zhang2021deep}.
After distributing its product, the owner can prove its ownership over a suspicious DNN model $M$ by retrieving \texttt{key} from it using an ownership examination module \texttt{verify} defined by the watermarking algorithm.
To complete the proof, the owner generates an evidence information \texttt{e} from \texttt{key} and presents it to the arbiter, who can then independently compute $\texttt{verify}(M,\texttt{e})$ and take the outcome as the ownership justification. 

The basic accuracy and unambiguity of a watermarking scheme requires that for any legal \texttt{key} and the corresponding $\texttt{e}\leftarrow\texttt{key}$,
\begin{equation}
\label{equation:accuracy}
\texttt{verify}(M_{\text{WM}},\texttt{e})=\texttt{Pass},
\end{equation}
and
\begin{equation}
\label{equation:unambiguity}
\texttt{verify}(M_{\text{WM}},\texttt{e}')=\texttt{Fail},
\end{equation}
where $\texttt{e}'$ is an incorrect piece of evidence. 
Conditions Eq.~\eqref{equation:accuracy} and Eq.~\eqref{equation:unambiguity} have to hold with asymptotical probability one. 

Considering the black-box OV scenario for image classification DNNs, where the owner or the arbiter can only interact with the suspicious DNN as an API, watermarking schemes uniformly utilize the backdoor, i.e., a set of triggers with artificially assigned labels.
Predictions for these triggers differentiate the watermarked DNN from an unprotected model and justify the possession. 
To reveal the ownership, the evidence \texttt{e} can be reduced to the mapping between triggers and their labels, which is missing in an unwatermarked DNN.
Triggers can be specific stamps~\cite{adi2018turning,zhang2018protecting}, white noises~\cite{zhu2020secure}, images with oversized pixels~\cite{li2019persistent}, etc. 
By adversarially tuning the triggers, they can acquire penetrative ability against filtering~\cite{oursicip} or be adopted in federated learning~\cite{waffle}.

Compared with white-box watermarking schemes, black-box watermarking schemes have several advantages. 
(i) In real-world applications, the white-box access to the pirated DNN is usually unavailable, making black-box schemes more practical.
(ii) The black-box watermark remains valid against functionality-preserving attacks~\cite{oursicassp}, which can trivially spoil white-box watermarks within parameters.

However, the backdoor injection has to be conducted along with the normal training process, otherwise, the watermarked model's performance on normal inputs might be sacrificed.
Moreover, these watermarking schemes have to submit the trigger set for ownership proof, resulting in extra security risks.

\subsection{Knowledge Concerns in Backdoor Injection}
\label{section:2.2}
The agent who demands the watermark service might not know the training dataset.
As the instance shown in Fig.~\ref{figure:threat1}, the agent inside an industrial pipeline who is in charge of the IP regulation might be completely oblivious of the training process, let alone the training data.
For another example, after purchasing a DNN model, the purchaser wants to write its identity into the model. 
It is neither feasible to leave this task to the seller, nor proper for the seller to provide the training dataset. 
Unfortunately, most backdoor-based watermarking schemes require the training data, since tuning a DNN on the trigger set could bring irreversible damage to the model's performance on normal inputs.
In DNN commercialization, it is necessary that: \textbf{the watermarking scheme functions independently from the training dataset. (C1)}

\begin{figure}[!t]
\centering
\includegraphics[width=8cm]{./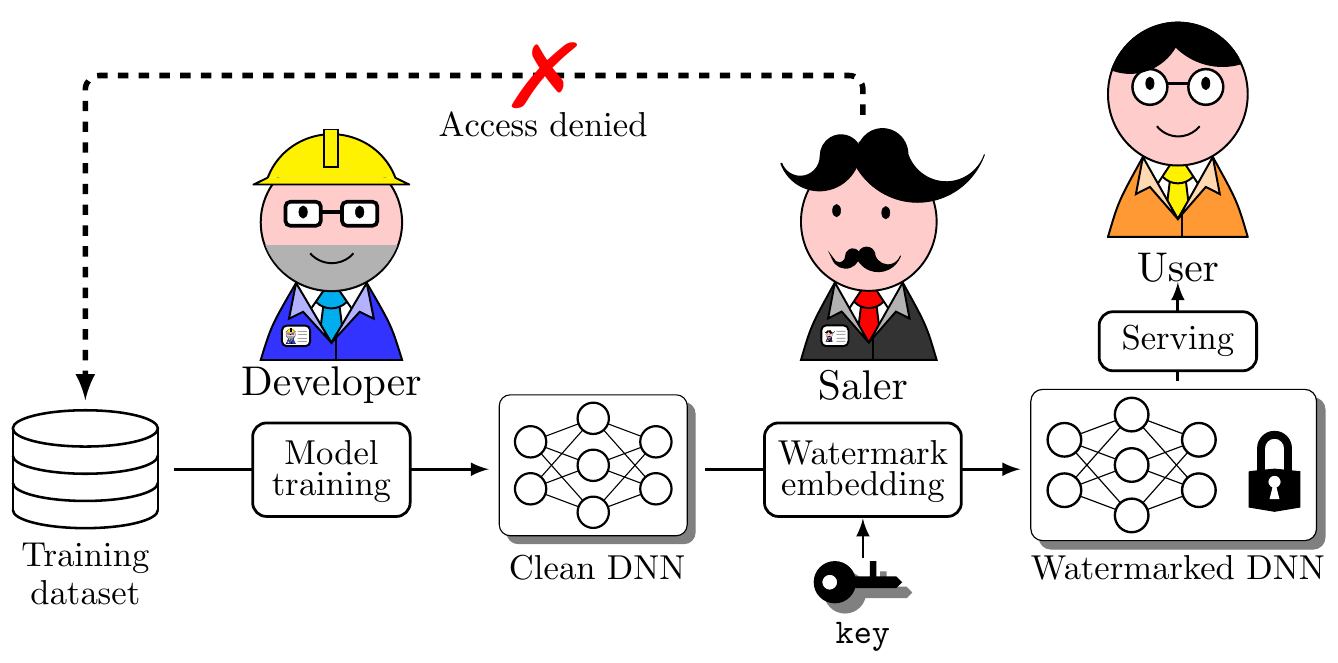}
\caption{DNN model packaging during a commercial pipeline.}
\label{figure:threat1}
\end{figure}


\begin{figure*}[htbp]
\centering
\subfigure[The first ownership verification.]{
\includegraphics[width=5cm]{./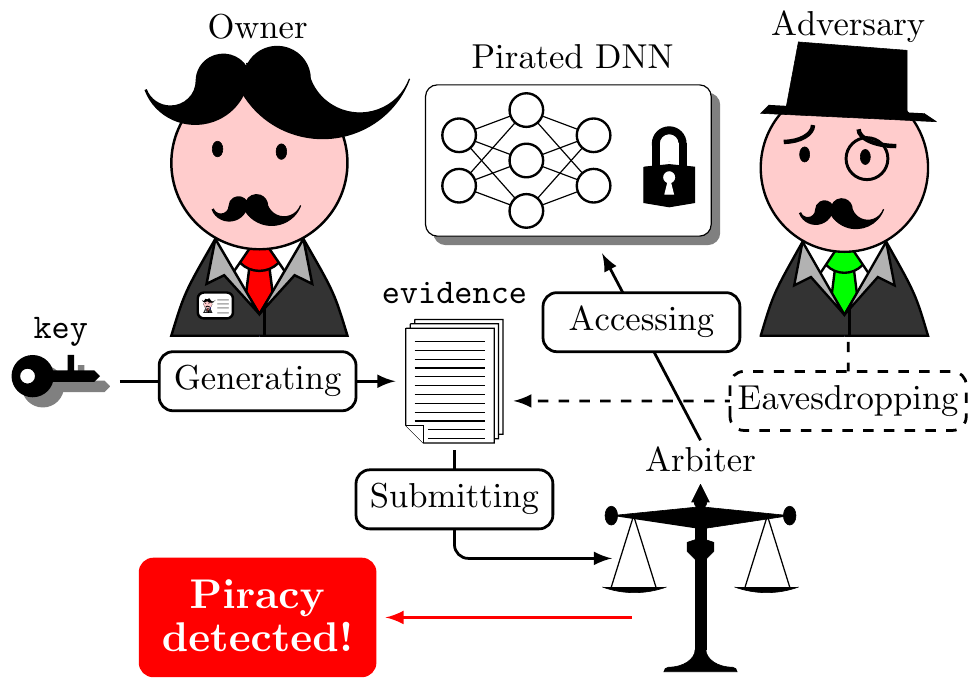}
}\subfigure[The adverarial tuning using exposed evidence.]{
\includegraphics[width=4.5cm]{./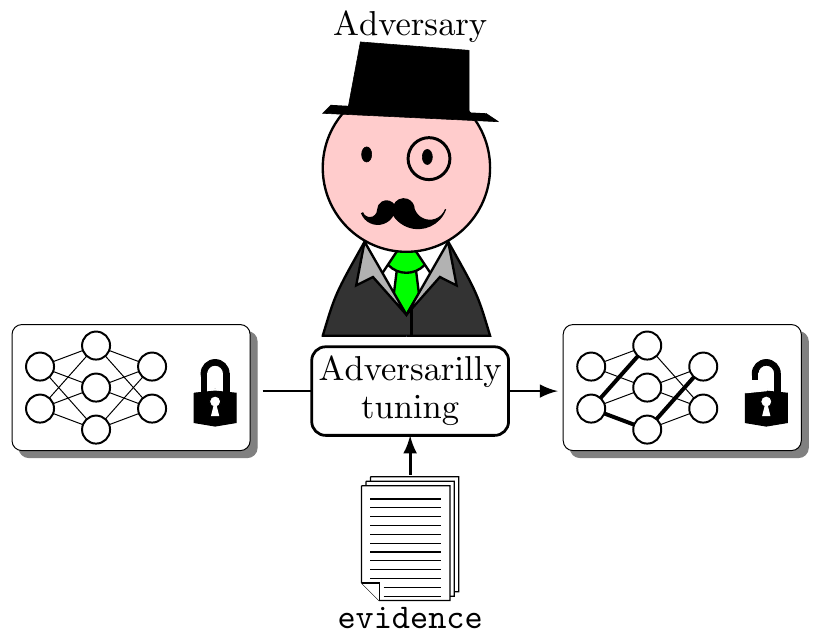}
}\subfigure[The second ownership verification.]{
\includegraphics[width=5cm]{./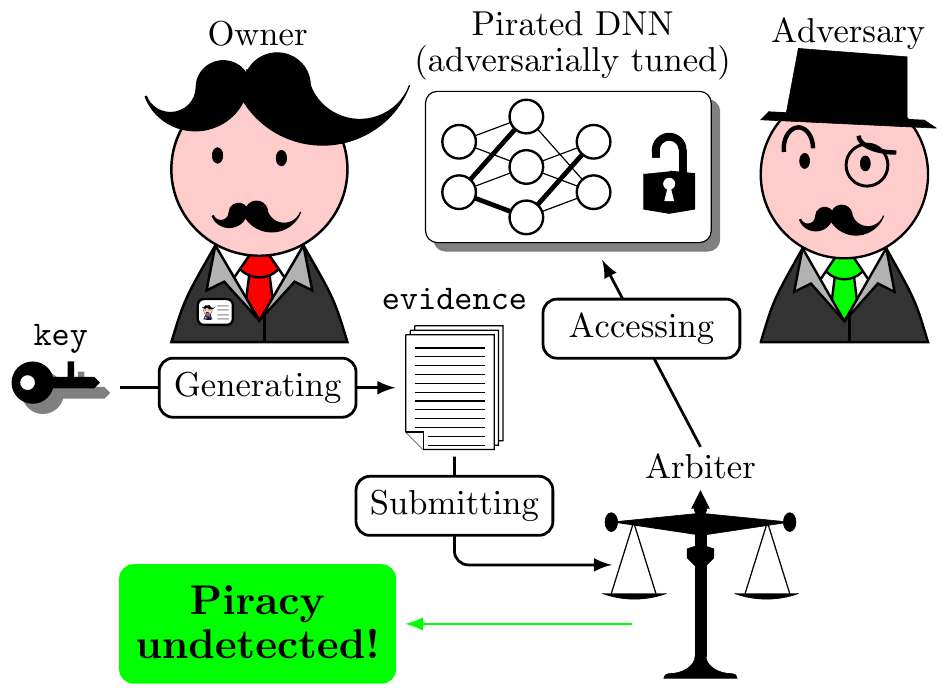}
}
\caption{The exposed evidence during ownership verification might be utilized to unvalidate the second proof.}
\label{figure:threat2}
\end{figure*}

\subsection{Knowledge Concerns in Ownership Verification}
\label{section:2.3}
Regulating DNN IP does not only involve evaluating whether two DNNs are identical~\cite{zjsp}, which can hardly convince any third party on the authentic ownership.
In practice, the owner has to present its evidence to an arbiter, e.g., the court or the host of competition, to legalize its liability.
During this ownership proof process, some evidence has to be exposed.
Consequently, it is necessary that:
\begin{itemize}
\item The evidence provided by the owner is mathematically unforgeable, this property has been formulated as the unambiguity in Eq.~\eqref{equation:unambiguity}.
\item \textbf{The knowledge of the watermarking scheme cannot be utilized to invalidate the watermark. (C2)}
\item \textbf{The exposed evidence cannot be utilized to erase the watermark from the DNN and escape the IP regulation. (C3)}
\end{itemize}
The requirement \textbf{(C2)} builds the security of IP upon the secrecy of \texttt{key} rather than that of the algorithm, following the principles of modern computer security.
Consequently, all zero-bit watermarks~\cite{zhang2018protecting} and those whose mechanisms can be trivially reversed~\cite{guan2020reversible} are impractical.
The motivation behind requirement \textbf{(C3)} is that the ownership proof shall be presented to a semi-honest third party and there remains a risk of eavesdropping as shown in Fig.~\ref{figure:threat2}.
Given both the watermarking algorithm and the identity evidence, an adversary's damage can be much more powerful.
The white-box watermark that modifies specific parameters can be trivially spoiled given their locations.
The black-box watermark can also be invalidated by blocking trigger pattern inputs.

These two additional challenges along with corresponding encoding methods and security analyses have never been extensively examined by current watermarking schemes.

\subsection{The Data-Free Distillation}
The unavailability of the training dataset haunts not only the IP manager of a DNN but also model extractors.
To unconditionally distill knowledge from a given DNN, Data-Free Distillation (DFD) methods have been proposed~\cite{fang2019data,micaelli2019zero}.
The motivation is that the decision boundaries of DNN models share similar topological properties due to their architectures.
\begin{figure}[!t]
\centering
\includegraphics[width=6.5cm]{./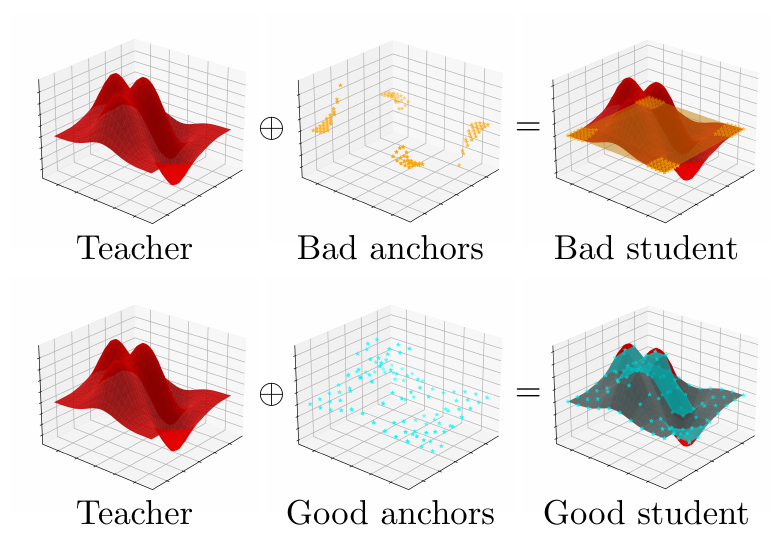}
\caption{Data-free distillation, an illustration.}
\label{figure:dfd1}
\end{figure}
Therefore, the student model only has to approximate the teacher model on a collection of \textbf{anchors}, where their outputs are mostly differentiated as shown by Fig.~\ref{figure:dfd1}.
Anchors are produced by a generator $G$. 
The training of the student model and $G$ iterates adversarially as demonstrated in Fig.~\ref{figure:dfd2}.
\begin{figure}[!t]
\centering
\includegraphics[width=6.5cm]{./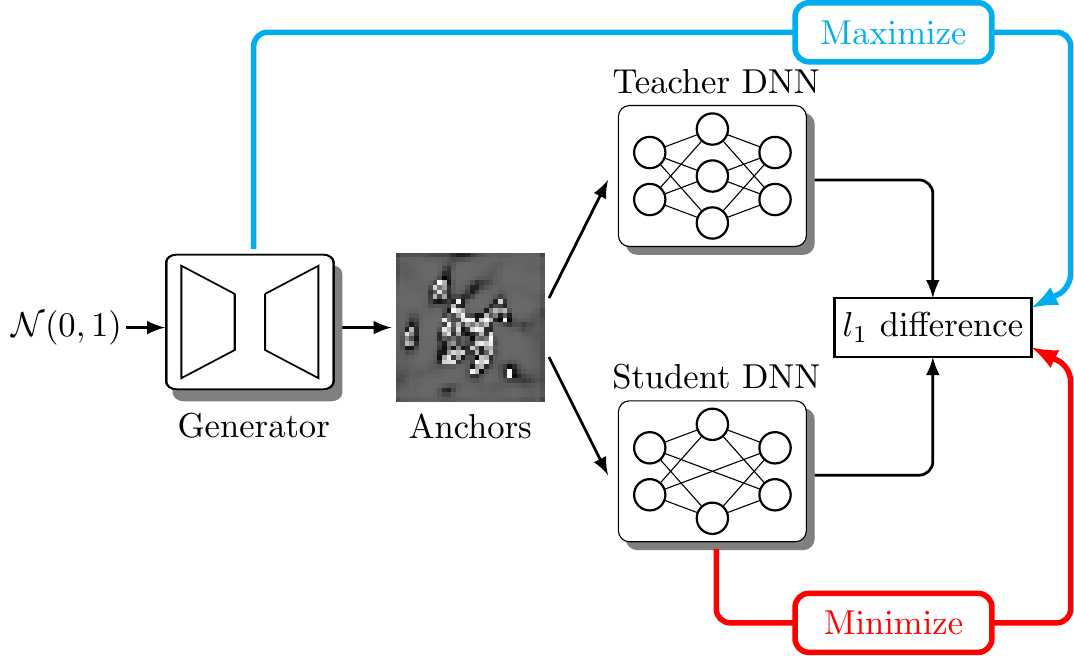}
\caption{The workflow of data-free distillation.}
\label{figure:dfd2}
\end{figure}
The generated anchors can be taken as a substitute for the training dataset for performance preservation.

\section{The Proposed Method}
\label{sec:3}
\begin{figure*}[htbp]
\centering
\includegraphics[width=14cm]{./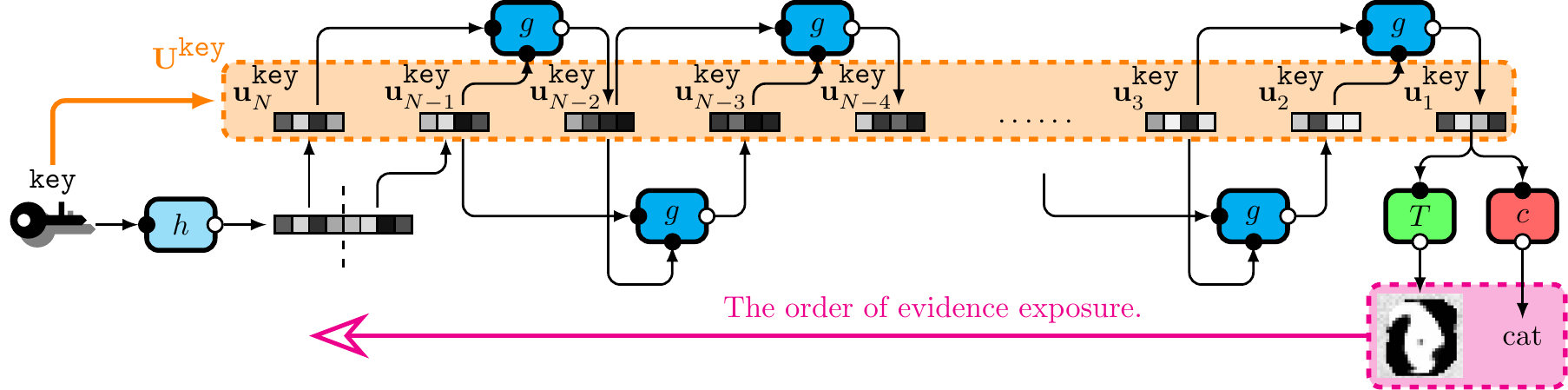}
\caption{The encoding process of the identity information and the ownership verification order.}
\label{figure:encode}
\end{figure*}
\subsection{The Motivation}
It is possible to distill a DNN from the anchors, so they can be taken as proper substitutes for the normal training dataset.
The IP manager without knowledge of the original training data can use these anchors during trigger injection to stabilize the DNN product's performance.
To ensure that the exposure of a subset of triggers would not leak information about unrevealed triggers, we leverage the one-wayness of cryptological secure hash functions to generate privacy-preserving triggers.
Finally, we reuse the generated anchors to fuzzify the triggers' pattern to increase their robustness against adversarial filtering and model tuning.

\subsection{Identity Information Encoding}
An arbitrary digital information \texttt{key} is firstly mapped into a sequence of $N$ codes
\begin{equation}
\nonumber
\textbf{U}^{\texttt{key}}=(\textbf{u}_{1}^{\texttt{key}},\textbf{u}_{2}^{\texttt{key}},\cdots,\textbf{u}_{N}^{\texttt{key}}),
\end{equation}
where $\textbf{u}_{n}^{\texttt{key}}\in\mathcal{U}$ is a binary string, and $\mathcal{U}$ is the space of all legal codes.
Each code can be uniquely mapped into a trigger image $T(\textbf{u}_{n}^{\texttt{key}})$ and a label $c(\textbf{u}_{n}^{\texttt{key}})$, these two mappings are publicly available.
In particular, $c(\cdot)$ is a pseudorandom function whose range is identical to that of the classification task.
The collection of codes $\textbf{U}^{\texttt{key}}$ is built as a one-way sequence reversely using two one-way hash functions: $h(\cdot)$ with arbitrary input length and $g(\cdot)$ with reduce factor 2.
Initially we have
\begin{equation}
\nonumber
(\textbf{u}_{N-1}^{\texttt{key}},\textbf{u}_{N}^{\texttt{key}})=h(\texttt{key}),
\end{equation}
then $\forall n \in\left\{1,2,\cdots,N-2\right\}$
\begin{equation}
\nonumber
\textbf{u}_{n}^{\texttt{key}}=g(\textbf{u}_{n+1}^{\texttt{key}},\textbf{u}_{n+2}^{\texttt{key}}).
\end{equation}
The labelled trigger set is
\begin{equation}
\nonumber
\mathcal{B}^{\texttt{key}}=\left\{(T(\textbf{u}_{n}^{\texttt{key}}),c(\textbf{u}_{n}^{\texttt{key}})) \right\}_{n=1}^{N}.
\end{equation}

Having trained the DNN model on the labeled trigger set, OV is done by exposing the codes sequentially.
The owner submits the following $K\times 3$ array as the evidence:
\begin{equation}
\label{equation:evidence}
\texttt{e}=
\begin{pmatrix}
\textbf{u}_{K'+1}^{\texttt{key}} & T\left(\textbf{u}_{K'+1}^{\texttt{key}}\right) & c\left(\textbf{u}_{K'+1}^{\texttt{key}}\right) \\
\textbf{u}_{K'+2}^{\texttt{key}} & T\left(\textbf{u}_{K'+2}^{\texttt{key}}\right) & c\left(\textbf{u}_{K'+2}^{\texttt{key}}\right) \\
\cdots & \cdots & \cdots \\
\textbf{u}_{K'+K}^{\texttt{key}} & T\left(\textbf{u}_{K'+K}^{\texttt{key}}\right) & c\left(\textbf{u}_{K'+K}^{\texttt{key}}\right)
\end{pmatrix},
\end{equation}
with $K\leq K+K'\leq N$, for the $t$-th time of OV, $K'=t-1$.
Having received the evidence, the arbiter examines whether they satisfy the relationships defined by $T(\cdot)$, $c(\cdot)$, $h(\cdot)$, $g(\cdot)$, and the suspicious model $M$. 
The verifier proceeds as Algo.~\ref{algorithm:OV}, where $\Phi(\cdot)$ is the Gaussian cumulative distribution function.
The entire process is visualized in Fig.~\ref{figure:encode}.

\begin{algorithm}[!t]
\caption{The OV module \texttt{verify}.
$T(\cdot)$, $c(\cdot)$, $g(\cdot)$, and the number of class $C$ are given.}
\label{algorithm:OV}
\begin{algorithmic}[1]
\REQUIRE The suspicious model $M$, evidence \texttt{e}, and the sensitivity threshold $\tau$.
\ENSURE The verification result.
\STATE $\text{acc}=0$;
\FOR {$k=1$ to $K$}
\IF {
$\left\{
\begin{aligned}
T(\texttt{e}_{k,1})=\texttt{e}_{k,2} &\textbf{ and } \\
c(\texttt{e}_{k,1})=\texttt{e}_{k,3} &\textbf{ and } \\
M(\texttt{e}_{k,2})=\texttt{e}_{k,3} &\textbf{ and }\\
m\geq K-2 &\textbf{ or }\texttt{e}_{k,1}=g(\texttt{e}_{k+1,1},\texttt{e}_{k+2,1})
\end{aligned}\right.
$
}
\STATE $++\texttt{acc}$;
\ENDIF
\ENDFOR
\STATE $\hat{\mu}=\frac{\text{acc}}{K}$;
\STATE $\hat{\sigma}=\sqrt{\frac{\hat{\mu}(1-\hat{\mu})}{K}}$;
\IF {$\Phi\left(\frac{\frac{1}{C}-\hat{\mu}}{\hat{\sigma}} \right)\leq \tau$}
\STATE Return \texttt{Pass};
\ELSE
\STATE Return \texttt{Fail};
\ENDIF
\end{algorithmic}
\label{algorithm:OV}
\end{algorithm}

The security of the proposed scheme is analyzed in the following theorems.
\begin{thm}
\textbf{(Accuracy)}
The legal owner whose trigger's classification accuracy is statistically higher than random guessing can pass the ownership examination with confidence level $(1-\tau)$, satisfying the accuracy defined by Eq.~\eqref{equation:accuracy}.
\end{thm}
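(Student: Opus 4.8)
The plan is to treat the verifier's accuracy counter $\texttt{acc}$ as a sum of Bernoulli-type indicator variables and invoke a normal approximation. First I would observe that for the legal owner submitting the honest evidence $\texttt{e}$ built from $\textbf{U}^{\texttt{key}}$, the three syntactic checks $T(\texttt{e}_{k,1})=\texttt{e}_{k,2}$, $c(\texttt{e}_{k,1})=\texttt{e}_{k,3}$, and the hash consistency $\texttt{e}_{k,1}=g(\texttt{e}_{k+1,1},\texttt{e}_{k+2,1})$ all hold deterministically by construction of the one-way code sequence; hence the $k$-th term contributes to $\texttt{acc}$ exactly when $M(\texttt{e}_{k,2})=\texttt{e}_{k,3}$, i.e., when the suspicious model classifies the $k$-th trigger to its assigned label. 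So $\texttt{acc}=\sum_{k=1}^{K}\mathbf{1}[M(T(\textbf{u}^{\texttt{key}}_{K'+k}))=c(\textbf{u}^{\texttt{key}}_{K'+k})]$, and $\hat\mu=\texttt{acc}/K$ is the empirical trigger-classification accuracy on the exposed block.

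Next I would introduce the hypothesis being tested: the "random guessing" null says each trigger is classified correctly with probability $1/C$, so under this null $\texttt{acc}$ is stochastically dominated by (or equal in the worst case to) a $\mathrm{Binomial}(K,1/C)$ variable with mean $K/C$ and standard deviation $\sqrt{K\cdot\frac1C(1-\frac1C)}$. The decision rule returns $\texttt{Pass}$ precisely when $\Phi\!\left(\frac{1/C-\hat\mu}{\hat\sigma}\right)\le\tau$, which is a one-sided test: it rejects the null (declares ownership) when the standardized gap between the observed accuracy $\hat\mu$ and the chance level $1/C$ is large enough that the left-tail Gaussian probability falls below $\tau$. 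I would then state the premise precisely — the legal owner's true trigger accuracy $\mu^{*}$ satisfies $\mu^{*}>1/C$ with a statistically significant margin — and show that under this premise, as $K\to\infty$ (or $K$ taken large enough, consistent with the "asymptotical probability one" convention in Eq.~\eqref{equation:accuracy}), the Central Limit Theorem gives $\hat\mu\to\mu^{*}$ and $\hat\sigma\to\sqrt{\mu^{*}(1-\mu^{*})/K}\to 0$, so the argument $\frac{1/C-\hat\mu}{\hat\sigma}\to-\infty$ and therefore $\Phi(\cdot)\to 0\le\tau$ with probability one, yielding $\texttt{Pass}$. To get the stated confidence level $(1-\tau)$ for finite $K$, I would note that the test is calibrated so that the false-positive rate (passing on a model that genuinely guesses at rate $1/C$) is at most $\tau$ by the Gaussian approximation to the binomial tail, hence a legal owner whose accuracy exceeds chance is accepted at confidence $(1-\tau)$.

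The main obstacle I anticipate is making the "statistically higher than random guessing" hypothesis quantitatively precise enough that the Gaussian approximation is valid — specifically, justifying the CLT/Berry–Esseen replacement of the binomial tail by $\Phi$ at the relevant sample size $K$, and handling the two boundary terms ($k=K-1,K$) where the hash-chain check is vacuously replaced by the condition $m\ge K-2$, so that they do not disturb the Bernoulli structure. A secondary subtlety is independence: the triggers $T(\textbf{u}^{\texttt{key}}_{n})$ are deterministic functions of the code chain, so the events $\{M(T(\textbf{u}^{\texttt{key}}_{n}))=c(\textbf{u}^{\texttt{key}}_{n})\}$ need not be independent for a fixed $M$; I would address this by either appealing to the pseudorandomness of $c(\cdot)$ and $T(\cdot)$ (so the labels look like independent uniform draws, making the per-trigger success probability $1/C$ under the null and the counts effectively binomial) or, more conservatively, by defining the owner's premise directly in terms of the expectation and variance of $\texttt{acc}$ rather than assuming i.i.d. trials, which is all the normal approximation in Algo.~\ref{algorithm:OV} actually requires. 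Everything else is a routine monotonicity-of-$\Phi$ computation.
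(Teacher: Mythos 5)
Your proposal is correct and follows essentially the same route as the paper: the paper's own proof simply observes that lines 7--9 of Algo.~\ref{algorithm:OV} constitute a one-sided statistical hypothesis test, at confidence level $(1-\tau)$, of whether the evidence's trigger accuracy exceeds the chance level $\frac{1}{C}$, which is exactly the test structure you spell out. Your additional care about the Bernoulli decomposition, the boundary terms, and independence is a more detailed elaboration of the same argument rather than a different approach.
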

\begin{proof}
This is evident from the seventh to the ninth line within Algo.~\ref{algorithm:OV}, which forms a statistical hypothesis with confidence level $(1-\tau)$ on the event: the accuracy of the submitted evidence is higher than the random guess.
Consequently, a larger $K$ with a smaller $\tau$ results in a more convincing ownership proof.
\end{proof}

\begin{thm}
\textbf{(Unambiguity)}
An adversary who has never tuned the watermarked DNN fails the OV with a probability no less than $(1-\tau)$, satisfying the accuracy defined by Eq.~\eqref{equation:unambiguity}.
\end{thm}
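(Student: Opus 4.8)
The plan is to reduce Eq.~\eqref{equation:unambiguity} to the type-I error of the one-sided statistical test carried out in the seventh through twelfth lines of Algo.~\ref{algorithm:OV}, after first showing that an adversary who never tuned $M$ cannot push the per-trigger agreement rate above random guessing. As a first step I would pin down what the verifier actually counts. The first two conjuncts of the \textbf{if} clause, $T(\texttt{e}_{k,1})=\texttt{e}_{k,2}$ and $c(\texttt{e}_{k,1})=\texttt{e}_{k,3}$, are public deterministic checks, so a row can contribute to \texttt{acc} only if it is internally consistent, i.e.\ $\texttt{e}_{k,2}=T(\texttt{e}_{k,1})$ and $\texttt{e}_{k,3}=c(\texttt{e}_{k,1})$; on such rows the surviving nontrivial requirements are $M\big(T(\texttt{e}_{k,1})\big)=c(\texttt{e}_{k,1})$ together with the chain relation $\texttt{e}_{k,1}=g(\texttt{e}_{k+1,1},\texttt{e}_{k+2,1})$ for every position but the last two. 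Hence the adversary's only freedom is the choice of a length-$K$ window $(\texttt{e}_{1,1},\dots,\texttt{e}_{K,1})$ of a valid $g$-chain, and \texttt{acc} equals the number of codes $\textbf{u}$ in that window with $M(T(\textbf{u}))=c(\textbf{u})$.

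The second step is to bound the agreement rate by $1/C$. Since $c(\cdot)$ is a pseudorandom function whose range is the set of $C$ class labels, $c(\textbf{u})$ is computationally indistinguishable from a uniform label and, in particular, essentially independent of the fixed untuned classifier $M$ evaluated at the public point $T(\textbf{u})$; therefore $\Pr[M(T(\textbf{u}))=c(\textbf{u})]\le\tfrac1C+\mathrm{negl}$. The only way to beat this would be to hand-pick codes on which $M$ agrees with $c$ while keeping them on a valid $g$-chain (ultimately anchored at a committed \texttt{key} through $h$), which forces the adversary to invert $g$ (or $h$) and contradicts their one-wayness; a union bound over the polynomially many chain windows an efficient adversary can attempt contributes only another negligible term. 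Up to this negligible slack, \texttt{acc} therefore behaves as $\mathrm{Binomial}(K,p)$ with $p\le 1/C$.

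The third step invokes the standard analysis of the one-sided $z$-test (equivalently a Wald test). With $\hat\mu=\texttt{acc}/K$ and $\hat\sigma$ the plug-in standard error, $\Phi\big((\tfrac1C-\hat\mu)/\hat\sigma\big)$ is precisely the asymptotic $p$-value for $H_0:\mu\le 1/C$ against $H_1:\mu>1/C$, and the algorithm returns \texttt{Pass} exactly when this $p$-value is $\le\tau$. Because Step~2 gives $\mu\le 1/C$, consistency of the test (the estimated variance $\hat\sigma^2$ converges to the true one, so the Wald test attains nominal size) yields $\Pr[\texttt{Pass}]\le\tau+\mathrm{negl}$ as $K\to\infty$; hence the adversary fails with probability at least $1-\tau$, which is Eq.~\eqref{equation:unambiguity}.

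The delicate part is Step~2: rigorously excluding that the adversary exploits a correlation between $M$ and $c$ by searching over codes. This is where the cryptographic assumptions do the real work — one must show that any strategy inflating $p$ noticeably above $1/C$ can be converted either into a distinguisher for the pseudorandom function $c(\cdot)$ or into an inverter for $g(\cdot)/h(\cdot)$, and that the adversary's query budget stays polynomial so the union bound over attempted chain windows remains negligible. The statistical Step~3, by contrast, is routine.
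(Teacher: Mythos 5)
Your proposal is correct and follows essentially the same route as the paper's own (much terser) proof: the adversary can reconstruct a valid $g$-chain from the public hash functions, but since $c(\cdot)$ is pseudorandom the agreement $M(T(\textbf{u}))=c(\textbf{u})$ on an untuned model holds only with probability about $\frac{1}{C}$, so the one-sided test in lines seven through twelve of Algo.~\ref{algorithm:OV} passes with probability at most $\tau$. Your Step~2 (the reduction to PRF-distinguishing or inverting $g$/$h$, plus the union bound over candidate chain windows) simply makes explicit the cryptographic slack that the paper leaves implicit.
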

\begin{proof}
It is easy to construct the code sequence by using the public hash functions.
However, the third condition within the third line in Algo.~\ref{algorithm:OV} holds for approximately $\frac{1}{C}$ since $c(\cdot)$ is a pseudorandom function.
Therefore, an adversary cannot pass the test without tuning the DNN with its triggers and risking its performance.
\end{proof}

\begin{thm}
\label{thm:3}
\textbf{(Identity-Preserving)}
An adversary obtaining \texttt{e} cannot infer any remained evidence, i.e., $\textbf{u}_{K'+K+1}^{\texttt{key}},\cdots,\textbf{u}_{N}^{\texttt{key}}$ and spoil further ownership proof.
\end{thm}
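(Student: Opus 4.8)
The plan is to reduce the theorem to the one-wayness of the hash functions $h(\cdot)$ and $g(\cdot)$. The key observation is that the code sequence is built \emph{reversely}: $(\textbf{u}_{N-1}^{\texttt{key}},\textbf{u}_{N}^{\texttt{key}})=h(\texttt{key})$ and $\textbf{u}_{n}^{\texttt{key}}=g(\textbf{u}_{n+1}^{\texttt{key}},\textbf{u}_{n+2}^{\texttt{key}})$, so each code is a hash image of the codes with larger indices. The evidence $\texttt{e}$ that has been exposed after $t$ rounds consists of the codes $\textbf{u}_{1}^{\texttt{key}},\dots,\textbf{u}_{K'+K}^{\texttt{key}}$ (the running index $K'$ only grows), i.e. a prefix of the sequence. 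Recovering any unrevealed code $\textbf{u}_{j}^{\texttt{key}}$ with $j>K'+K$ would let one recompute all of $\textbf{u}_{1}^{\texttt{key}},\dots,\textbf{u}_{j-1}^{\texttt{key}}$ via $g$, but the adversary already possesses those; the point is that going the other direction is infeasible.

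First I would set up the reduction: suppose an adversary $\mathcal{A}$, on input the exposed prefix $\textbf{u}_{1}^{\texttt{key}},\dots,\textbf{u}_{m}^{\texttt{key}}$ (writing $m=K'+K$), outputs a valid continuation $\textbf{u}_{m+1}^{\texttt{key}}$ with non-negligible probability — "valid" meaning it passes the $g$-consistency check $\textbf{u}_{m-1}^{\texttt{key}}=g(\textbf{u}_{m}^{\texttt{key}},\textbf{u}_{m+1}^{\texttt{key}})$ in Algo.~\ref{algorithm:OV} together with the $T(\cdot)$, $c(\cdot)$ and model-prediction conditions. I would then build an inverter $\mathcal{B}$ for $g$: given a challenge image $y$ in the range of $g$, sample a random legal prefix, plant $y$ at position $\textbf{u}_{m-1}^{\texttt{key}}$ (or, more carefully, run the honest generation far enough that the distribution of the prefix handed to $\mathcal{A}$ is indistinguishable from a genuine one), invoke $\mathcal{A}$, and read off a preimage $(\textbf{u}_{m}^{\texttt{key}},\textbf{u}_{m+1}^{\texttt{key}})$ of $y$ from its output. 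Because $g$ has reduce factor $2$ its preimages are generically of constant size, so $\mathcal{A}$'s forged code must coincide with the true one up to negligible collision probability; hence $\mathcal{A}$ cannot do better than guessing, contradicting one-wayness. The argument for forging past $\textbf{u}_{N}^{\texttt{key}}$, or for inferring $\textbf{u}_{N-1}^{\texttt{key}},\textbf{u}_{N}^{\texttt{key}}$ from $\texttt{key}$'s absence, reduces analogously to one-wayness of $h$. I would also note that the $c(\cdot)$-condition adds an independent factor $C^{-1}$ success penalty for any code whose model prediction the adversary cannot control, reinforcing the bound.

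The key steps, in order: (i) formalize what "spoil further ownership proof" means, namely produce evidence for round $t+1$ that passes \texttt{verify} without having tuned the model, which forces the adversary to exhibit a fresh code $\textbf{u}_{m+1}^{\texttt{key}}$ consistent with $g$; (ii) show the exposed evidence is exactly a prefix, so it gives the adversary only $g$-images, never $g$-preimages, of the target codes; (iii) carry out the reduction to the one-wayness of $g$ (and $h$ for the two terminal codes), handling the distributional subtlety of embedding the challenge; (iv) invoke the constant preimage size of $g$ to convert "finds some preimage" into "finds the correct code", and fold in the $c(\cdot)$ pseudorandomness factor. The main obstacle I anticipate is step (iii): making the embedding of the hash challenge into a syntactically well-formed, correctly-distributed prefix precise — in particular ensuring that the images $T(\textbf{u}_{n}^{\texttt{key}})$ and labels $c(\textbf{u}_{n}^{\texttt{key}})$ that the adversary also sees do not leak anything that would let it distinguish the planted instance from a real one, which is where the "these mappings are publicly available" and "$c(\cdot)$ is a pseudorandom function" assumptions have to be used carefully. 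Since this is a two-page conference-style paper, I expect the actual proof to state the reduction at the level of "this follows from the one-wayness of $h$ and $g$" rather than spelling out the hybrid argument, and I would write it accordingly while flagging the assumptions invoked.
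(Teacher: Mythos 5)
Your proposal is correct and takes essentially the same approach as the paper: the paper's proof is exactly the observation that deducing $\textbf{u}_{K'+K+1}^{\texttt{key}}$ from \texttt{e} would require inverting $g$ in the relation $\textbf{u}_{K'+K-1}^{\texttt{key}}=g(\textbf{u}_{K'+K}^{\texttt{key}},\textbf{u}_{K'+K+1}^{\texttt{key}})$ with a fixed output and partially known input, which contradicts its one-wayness. As you anticipated, the paper states this in three informal lines and does not spell out the formal reduction, the challenge-embedding subtlety, or the roles of $h$ and $c(\cdot)$ that you flag.
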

\begin{proof}
To deduce $\textbf{u}^{\texttt{key}}_{K'+K+1}$ from $\texttt{e}$, the adversary has to inverse the hash function $g$
\begin{equation}
\nonumber
\textbf{u}_{K'+K-1}^{\texttt{key}}=g(\textbf{u}_{K'+K}^{\texttt{key}},\textbf{u}_{K'+K+1}^{\texttt{key}})
\end{equation}
with a fixed output and partial input, which is contradictory to its one-wayness, so it cannot shuffle its label and adaptively compromise further ownership tracing.
\end{proof}

The master identity information \texttt{key} need not to be exposed during the entire ownership proof.
In cases where \texttt{e} is eavesdropped on and is claimed to be the adversary's identity evidence, the owner only has to increase $K'$ and present further evidence, i.e., $\textbf{u}^{\texttt{key}}_{K'+K+1},\textbf{u}^{\texttt{key}}_{K'+K+2}\cdots$, to claim the ownership according to Theorem~\ref{thm:3}.
This setting can resist no less than $(N-K)$ eavesdropping attacks, by adopting $N\gg K$, the challenge \textbf{(C2)} raised in Sec.~\ref{section:2.3} is solved. 
Notice that renaming labels has no influence to the integrity of OV since both the owner and the arbiter can easily reconstruct the original mapping by comparing triggers with the same predicted label. 

\subsection{Post-Trigger Generation and Injection}
To enhance the persistency of the watermark, it is necessary to choose a proper image mapping function $T(\cdot)$.
Otherwise, an adversary can trivially filter them.
To reduce this risk, it is desirable that the triggers slightly deviate from the original distribution defined by $T(\cdot)$ so adversarial filtering is either impossible or has to sacrifice the DNN's performance, and tuning or pruning with insufficient knowledge cannot invalidate them.
As the normal training dataset is unavailable, we resort to the DNN itself and the DFD image generator to produce adaptive perturbation.
The triggers $T(\textbf{u}_{n}^{\texttt{key}})$ are tuned into \textbf{post-triggers} that:
(i) Pixelwisely similar to triggers.
(ii) Have labels invariable against fine-tuning with the normal training dataset (anchors) and adversarial tuning.
This intuition is demonstrated in Fig.~\ref{figure:advencode}.
\begin{figure}[!t]
\centering
\includegraphics[width=7cm]{./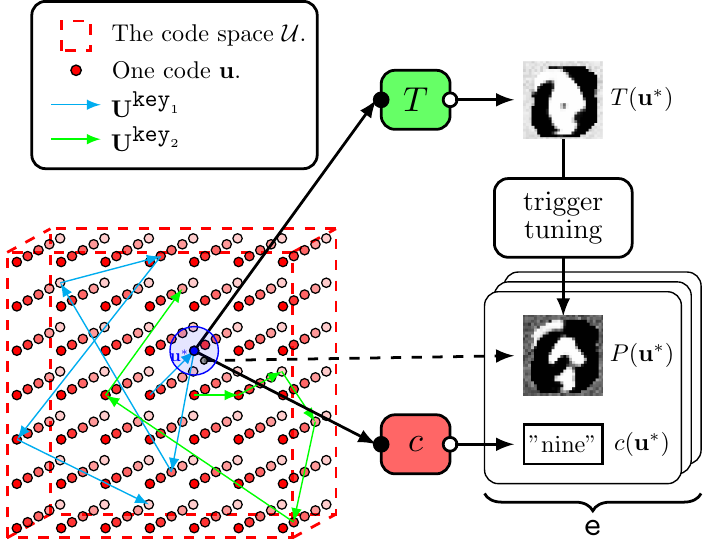}
\caption{The evidence generation process}
\label{figure:advencode}
\end{figure}

Formally, the post-trigger $P(\textbf{u}_{n}^{\texttt{key}})$ corresponding to the code $\textbf{u}_{n}^{\texttt{key}}$ minimizes the following loss function:
\begin{equation}
\label{equation:advloss}
\begin{aligned}
\mathcal{L}(P(\textbf{u}_{n}^{\texttt{key}}))=&\text{CE}(c(\textbf{u}_{n}^{\texttt{key}}),\hat{M}(P(\textbf{u}_{n}^{\texttt{key}})))\\
&+\lambda_{1}\cdot \|P(\textbf{u}_{n}^{\texttt{key}})-T(\textbf{u}_{n}^{\texttt{key}}) \|_{2}^{2},
\end{aligned}
\end{equation}
where $\text{CE}(\cdot,\cdot)$ is the cross-entropy loss, $\hat{M}$ is the DNN fine-tuned with anchors (as a substitute of the training dataset) and current post-triggers.
The target of Eq.~\eqref{equation:advloss} is to find a post-trigger that is numerically tractable from code $\textbf{u}_{n}^{\texttt{key}}$ while preserves the robustness against data-dependent fine-tuning.
By searching for patterns whose labels remain invariable against fine-tuning, we end up with post-triggers that are indistinguishable from normal images from the DNN's perspective.
Since the tuned DNN $\hat{M}$ on the r.h.s. of Eq.~\eqref{equation:advloss} implicitly depends on $P(\textbf{u}_{n}^{\texttt{key}})$, the optimization has to be done in a two-step manner, where $\hat{M}$ and $P(\textbf{u}_{n}^{\texttt{key}})$ are updated alternately.
Details are given in Algo.~\ref{algorithm:triggergen}.
\begin{algorithm}[!t]
\caption{The post-trigger generation process.}
\label{algorithm:OV}
\begin{algorithmic}[1]
\REQUIRE The clean model $M$, the trigger $T(\textbf{u}_{n}^{\texttt{key}})$ and its label $c(\textbf{u}_{n}^{\texttt{key}})$, hyperparamter $\lambda_{1}$, rounds of iteration $Q$, the number of anchors $R$, and the DFD algorithm.
\ENSURE The post-trigger $P(\textbf{u}_{n}^{\texttt{key}})$.
\STATE Run the DFD algorithm for $M$ to obtain the anchor generator $G$.
\STATE $P_{1}=T(\textbf{u}_{n}^{\texttt{key}})$;
\FOR {$q=1$ to $Q$}
\STATE \textbf{\# Step 1: optimizing the r.h.s. of Eq.~\eqref{equation:advloss}.}
\STATE Fine-tune $M$ on $(P_{q},c(\textbf{u}_{n}^{\texttt{key}}))$ to obtain $\tilde{M}_{q}$;
\STATE Generate anchors $\mathcal{A}_{q}=\left\{G(z_{q,r}),M(G(z_{q,r})) \right\}_{r=1}^{R}$, where each $z_{q,r}\leftarrow\mathcal{N}(0,1)$.
\STATE Fine-tune $\tilde{M}_{q}$ on $\mathcal{A}_{q}$ to obtain $\hat{M}_{q}$;
\STATE \textbf{\# Step 2: optimizing the l.h.s. of Eq.~\eqref{equation:advloss}.}
\STATE Obtain $P_{q+1}$ w.r.t. $\hat{M}_{q}$ by minimizing Eq.~\eqref{equation:advloss}.
\ENDFOR
\STATE Return $P_{Q+1}$;
\end{algorithmic}
\label{algorithm:triggergen}
\end{algorithm}

Post-triggers are injected simultaneously with anchors (as substitutes for the normal training dataset) to preserve the watermarked DNN's performance as a solution to challenge \textbf{(C1)} in Sec.~\ref{section:2.2}.
Concretely, the loss function for post-trigger injection is
\begin{equation}
\label{equation:lossembed}
\begin{aligned}
\mathcal{L}(M_{\text{WM}})=&\sum_{n=1}^{N}\text{CE}(M_{\text{WM}}(T(\textbf{u}_{n}^{\texttt{key}})),c(\textbf{u}_{n}^{\texttt{key}}))\\
+\lambda_{2}\cdot&\sum_{\begin{aligned}s&=1,\\z_{s}\leftarrow&\mathcal{N}(0,1)\end{aligned}}^{S}\|M_{\text{WM}}(G(z_{s}))-M_{\text{clean}}(G(z_{s}))\|_{1},
\end{aligned}
\end{equation}
in which the parameters of $M_{\text{WM}}$ are initialized as those of the clean model $M_{\text{clean}}$'s and $S$ is the number of anchors generated as the substitute of the training dataset.
The loss function for the post-triggers' labels is the normal cross-entropy for classification, while the second term in Eq.~\eqref{equation:lossembed} is the $l_{1}$ loss of the output logits as suggested by~\cite{fang2019data} for functionality-preserving.

The current evidence $\texttt{e}$ deviates from that defined in Eq.~\eqref{equation:evidence} by replacing all triggers with corresponding post-triggers.
As the current post-triggers are no longer the exact output of the image encoder $T(\cdot)$, the first examination condition within the third line in Algo.~\ref{algorithm:OV} is modified into the fuzzy version:
\begin{equation}
\label{equation:fuzzyOV}
\|T(\texttt{e}_{k,1})-\texttt{e}_{k,2} \|_{1}\leq\epsilon,
\end{equation}
in which $\epsilon$ is the sensitivity hyperparameter.

Another advantage of using post-triggers is that the adversary acquiring $T(\cdot)$ cannot trivially distinguish ownership probing from normal inputs.
Because post-triggers can be seen as the hybrid of randomized triggers and adversarial images, their distribution lies on a more complex manifold intractable for the adversary.
This property is especially important when $T(\cdot)$ is reversible.
Post-triggers constitute a particial solution to the challenge \textbf{(C3)} raised in Sec.~\ref{section:2.3}.

\subsection{The Watermarking Capacity}
The capacity w.r.t. a DNN and a watermarking scheme measures the amount of identity information that can be embedded into the watermarked model~\cite{oursijcai}, or the upper bound of legal IP managers in federated learning~\cite{waffle}. 
Such quantity can hardly be calculated for established watermarking schemes, yet an analytic bound is tractable in our setting. 
Recall that \texttt{key} is only involved in generating the tail of the triggers' code sequence.
Therefore, we are interested in the number of \texttt{key}s (where each \texttt{key} is correlated with $N$ post-triggers) that can be correctly injected and retrieved.

Denote the maximal number of post-triggers that can be injected into the DNN (without reducing its performance lower than an unacceptable threshold $\gamma$) as $\hat{N}(\gamma)$, then $\frac{\hat{N}(\gamma)}{N}$ is an upper bound of the watermarking capacity.

Since OV takes a fuzzy comparison of the post-triggers, it is expected that the post-trigger of a specific code $\textbf{u}$ is indistinguishable from that generated from its neighbor, e.g., $\tilde{\textbf{u}}\in\mathcal{U}$.
This is implied by Eq.~\eqref{equation:fuzzyOV} and the continuity of $T(\cdot)$.
Consequently, the difference in $c(\textbf{u})$ and $c(\tilde{\textbf{u}})$ results in a confusion during the ownership proof, such event is defined as a \emph{collision}.
Denote the size of such confusion neighbor within $\mathcal{U}$ as $S(\epsilon)$, i.e., for any $\textbf{u}\in\mathcal{U}$, there exist at most $S(\epsilon)$ codes (they constitute the light blue sphere in Fig.~\ref{figure:advencode}) whose corresponding post-triggers are with a pixelwise sphere of radius $\epsilon$ centered at $T(\textbf{u})$.
The probability that two independent codes collide with each other is therefore $\frac{S(\epsilon)}{|\mathcal{U}|}\times \frac{C-1}{C}$.
Such collision would invalidate at least one of the codes, and the accumulation of such collisions would finally invalidate the ownership evidence.

The event that the $J$-th \texttt{key} cannot be correctly embedded is tantamount to: no less than $\frac{NC}{C-1}$ codes derived from this key are found to collide with established triggers.
The number of collisions follows a binomial distribution, which can be reduced to a Gaussian $\mathcal{N}(\mu,\sigma^{2})$ with
\begin{equation}
\label{equation:mus}
\begin{aligned}
\mu(J)&=\frac{J N^{2} (C-1)\cdot S(\epsilon)}{|\mathcal{U}| C},\\
\sigma^{2}(J)&=\frac{J N^{2} (C-1)\cdot S(\epsilon)}{|\mathcal{U}| C}\left(1-\frac{J N (C-1)\cdot S(\epsilon)}{|\mathcal{U}| C}\right),
\end{aligned}
\end{equation}
when $N$ is large.
So the probability of the invalidation event can be approximated as
\begin{equation}
\nonumber
P_{\text{Fail}}(J)=\Phi\left(\frac{\mu(J)-\frac{N C}{C-1}}{\sigma(J)}\right),
\end{equation}
where the dependence on $J$ is through Eq.~\eqref{equation:mus}.
Consequently, the probability that the first $J$ \texttt{key}s can be correctly embedded into the DNN is:
\begin{equation}
\nonumber
\begin{aligned}
P_{\text{Success}}(J)&=\prod_{j=1}^{J}(1-P_{\text{Fail}}(j))\\
&=\prod_{j=1}^{J}\left(1-\Phi\left(\frac{\mu(j)-\frac{N C}{C-1}}{\sigma(j)}\right)\right).
\end{aligned}
\end{equation}
Compactly, to ensure that all \texttt{key}s can be embedded into the DNN product with probability no less than $\zeta$ and its performance remains above $\gamma$, the watermarking capacity is upper bounded by:
\begin{equation}
\label{equation:bound}
\min\left\{\frac{\hat{N}(\gamma)}{N},\arg\max_{J}\left\{\prod_{j=1}^{J}\left(1-\Phi\left(\frac{\mu(j)-\frac{N C}{C-1}}{\sigma(j)}\right)\right) \geq\zeta \right\} \right\}.
\end{equation}

\section{Experiments and Discussions}
\label{sec:4}
\subsection{Settings}
Four image classification tasks were adopted for experiments: MNIST~\cite{deng2012mnist}, CIFAR10, CIFAR100~\cite{krizhevsky2009learning}, and Caltech101~\cite{fei2006one}.
Four candidate trigger encoders were considered: the Gaussian noise~\cite{zhu2020secure}, a random image generator network~\cite{chen2021learning}, the DFD generator itself~\cite{fang2019data,micaelli2019zero}, and Wonder Filter (WF)~\cite{li2019persistent}.
The corresponding DNN architectures are detailed in Table~\ref{table:settings}.
To examine the utility of the proposed knowledge-free watermarking scheme, we evaluated the performance of the watermark along with the watermarked DNN for three trigger injection schemes: backdoor injection with no data ($\mathcal{B}$), injection with the training dataset ($\mathcal{D}$), injection with anchors ($\mathcal{A}$), and two types of backdoors: triggers ($T$) and post-triggers ($P$).
We used the DFD module from~\cite{fang2019data}, adopted \texttt{PyTorch} for implementation, and the codes have been made available~\footnote{some github lib}.

\begin{table*}[!t]
\centering
\caption{The basic settings and configurations of experiments.}
\scalebox{0.78}{
\begin{tabular}{c|c|c|c|c|c}
\toprule
\multicolumn{4}{c||}{\textbf{Basic settings}} & \multicolumn{2}{c}{\textbf{Configuration ($3\times2$)}}\\
\toprule
\textbf{Index} & \textbf{Dataset} & \textbf{DNN architecture} & \textbf{Trigger encoder $T(\cdot)$} & \textbf{Injection scheme} & \textbf{Backdoor category} \\
\midrule
\textbf{E1},\textbf{E2},\textbf{E3},\textbf{E4} & MNIST ($C=10$) & LeNet~\cite{wei2019development} & Gaussian, Random, DFD,WF & (Post-)Triggers solely ($\mathcal{B}$). & Triggers ($T$). \\
\cline{5-5}
\textbf{E5},\textbf{E6},\textbf{E7},\textbf{E8} & CIFAR10 ($C=10$) & ResNet-34~\cite{he2016deep} & Gaussian, Random, DFD,WF & Injection with the & \\
\cline{6-6}
\textbf{E9},\textbf{E10},\textbf{E11},\textbf{E12} & CIFAR100 ($C=100$) & ResNet-34~\cite{he2016deep} & Gaussian, Random, DFD,WF & training dataset ($\mathcal{D}$). & Post-triggers ($P$). \\
\cline{5-5}
\textbf{E13},\textbf{E14},\textbf{E15},\textbf{E16} & Caltech101 ($C=101$) & ResNet-34~\cite{he2016deep} & Gaussian, Random, DFD,WF & Injection with anchors ($\mathcal{A}$). & \\
\bottomrule
\end{tabular}}
\label{table:settings}
\end{table*}

\begin{figure*}[htbp]
\subfigure[\textbf{E1.}]{
\begin{minipage}[htbp]{0.25\linewidth}
\includegraphics[width=4cm]{./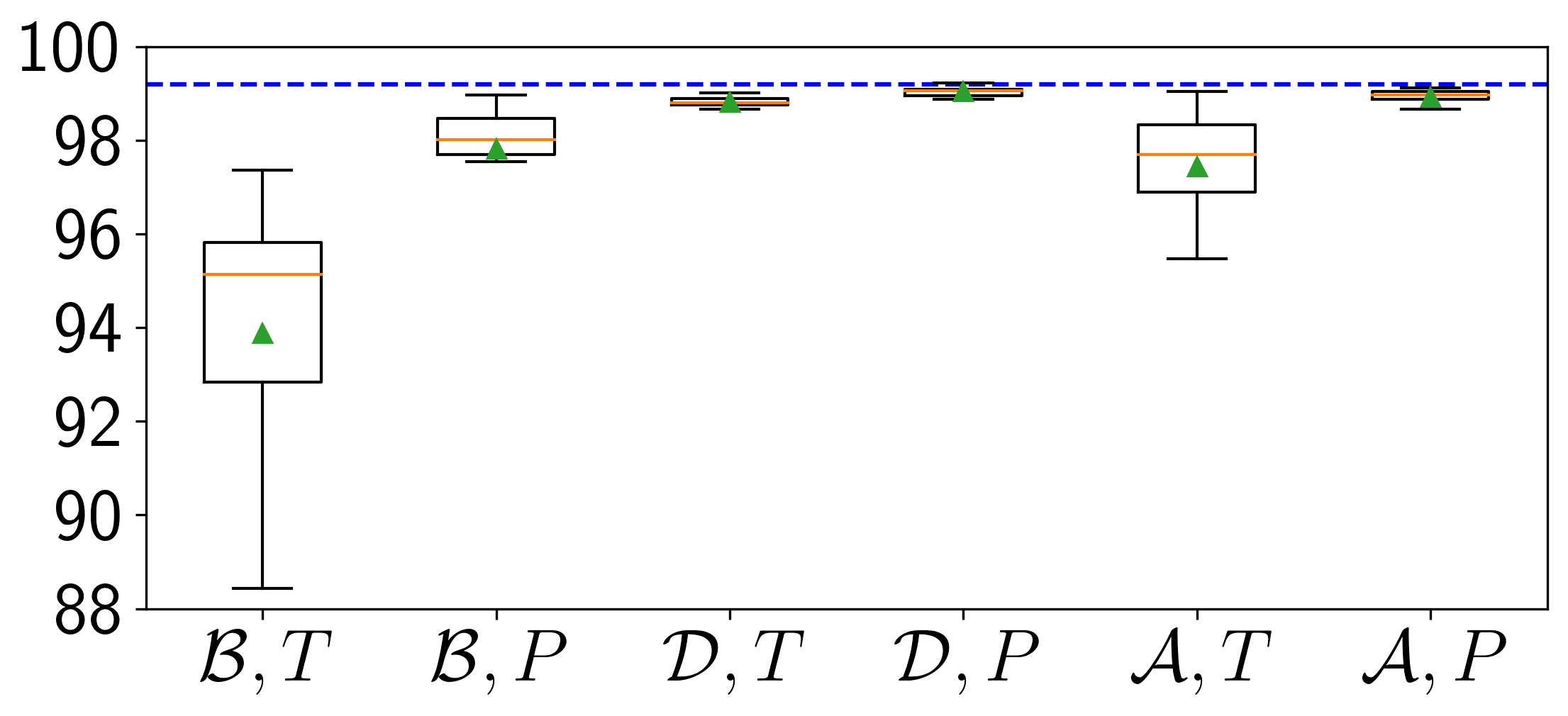}
\end{minipage}%
}%
\subfigure[\textbf{E2.}]{
\begin{minipage}[htbp]{0.25\linewidth}
\includegraphics[width=4cm]{./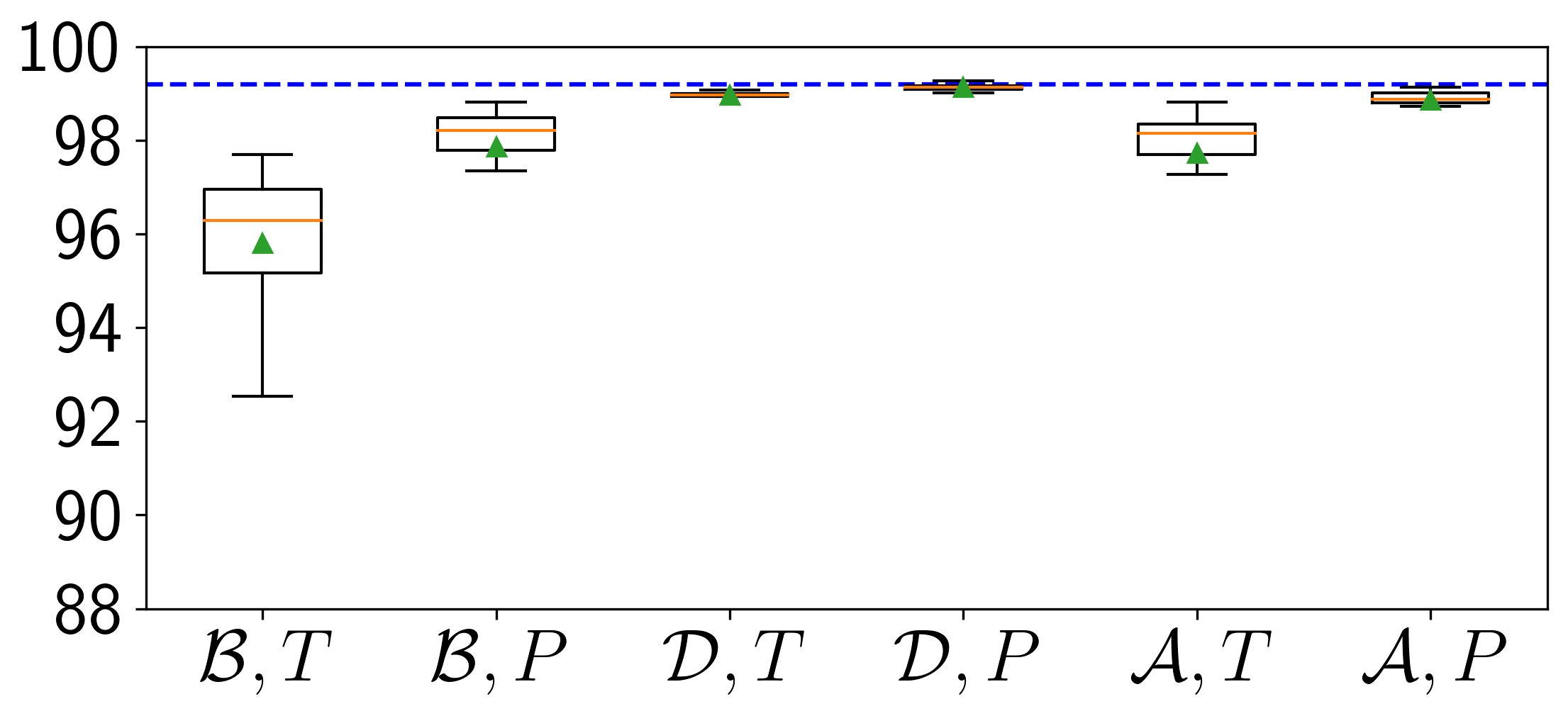}
\end{minipage}%
}%
\subfigure[\textbf{E3.}]{
\begin{minipage}[htbp]{0.25\linewidth}
\includegraphics[width=4cm]{./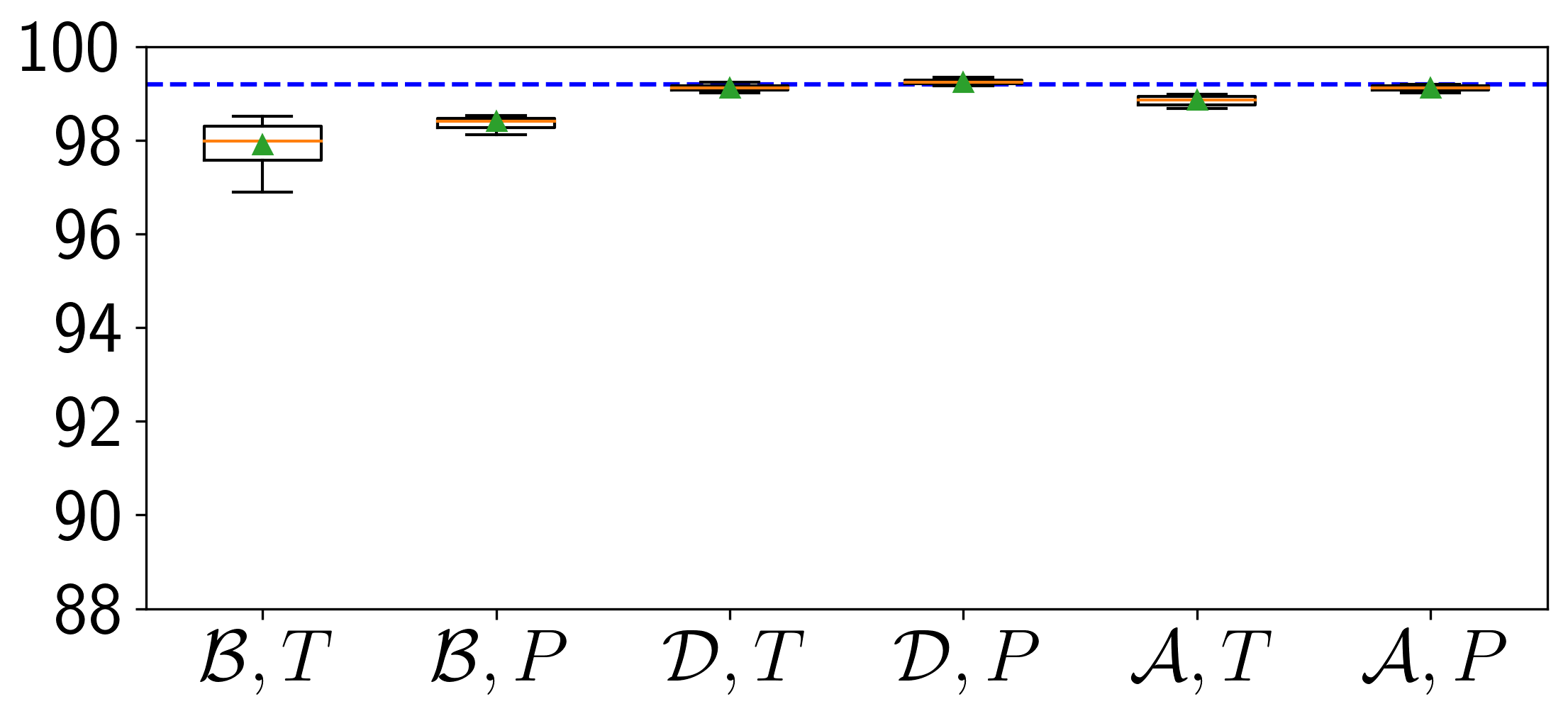}
\end{minipage}%
}%
\subfigure[\textbf{E4.}]{
\begin{minipage}[htbp]{0.25\linewidth}
\includegraphics[width=4cm]{./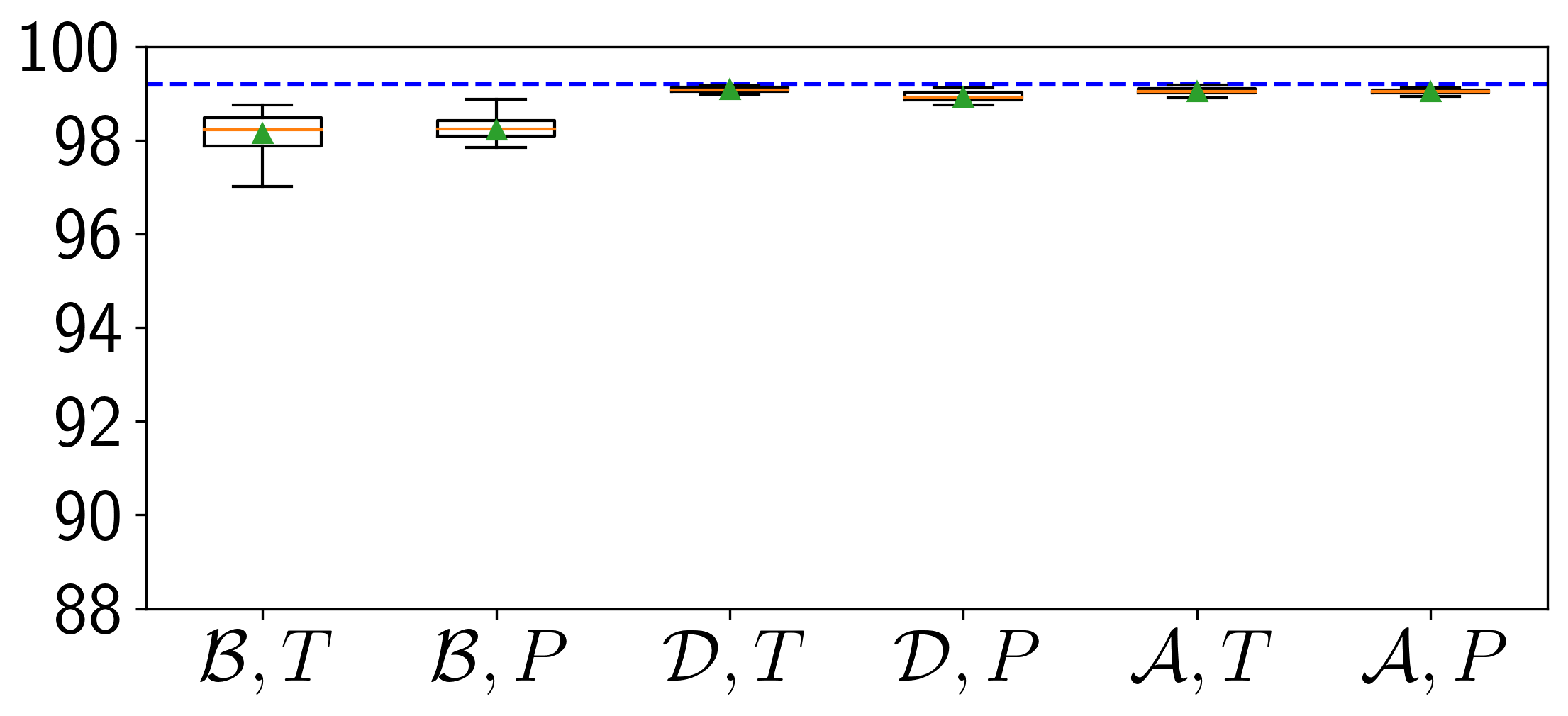}
\end{minipage}%
}
\subfigure[\textbf{E5.}]{
\begin{minipage}[htbp]{0.25\linewidth}
\includegraphics[width=4cm]{./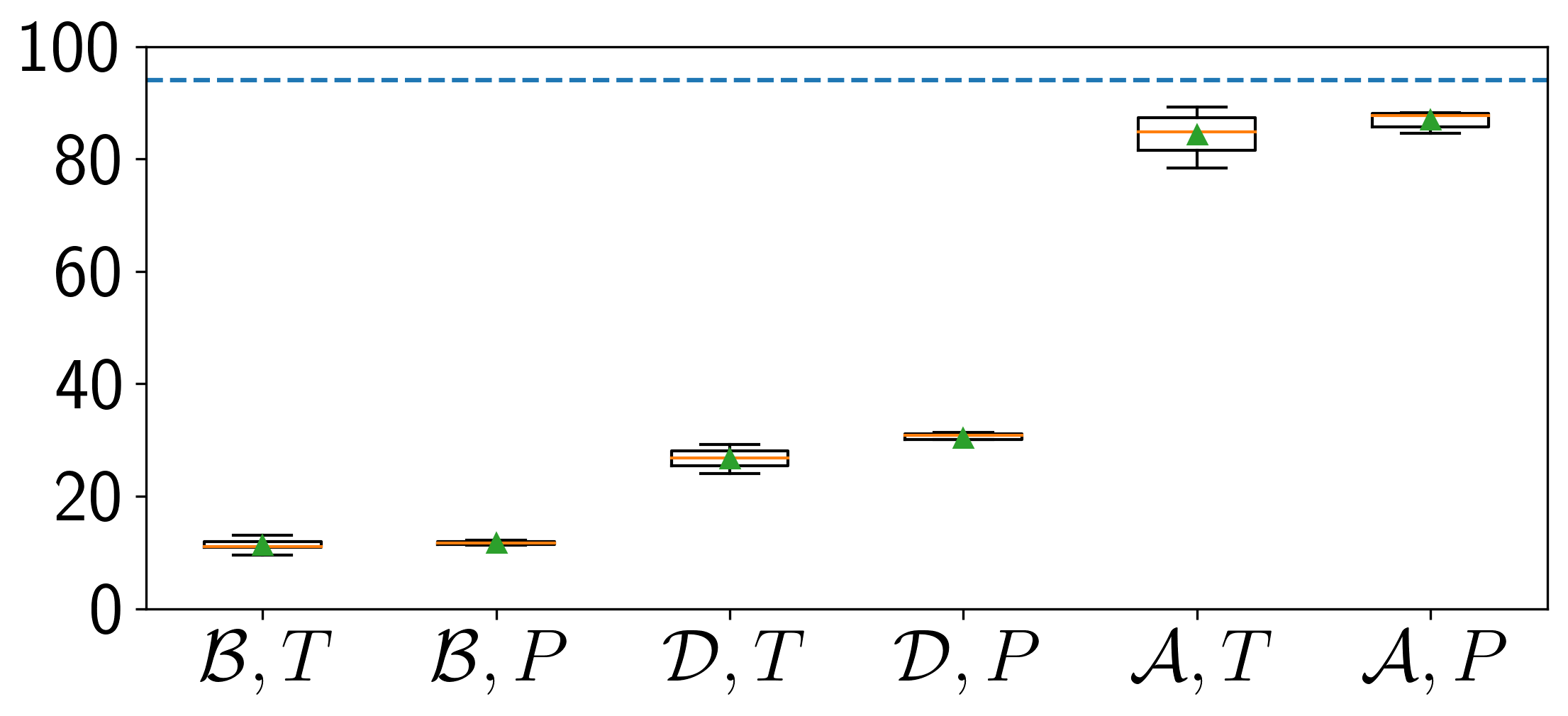}
\end{minipage}%
}%
\subfigure[\textbf{E6.}]{
\begin{minipage}[htbp]{0.25\linewidth}
\includegraphics[width=4cm]{./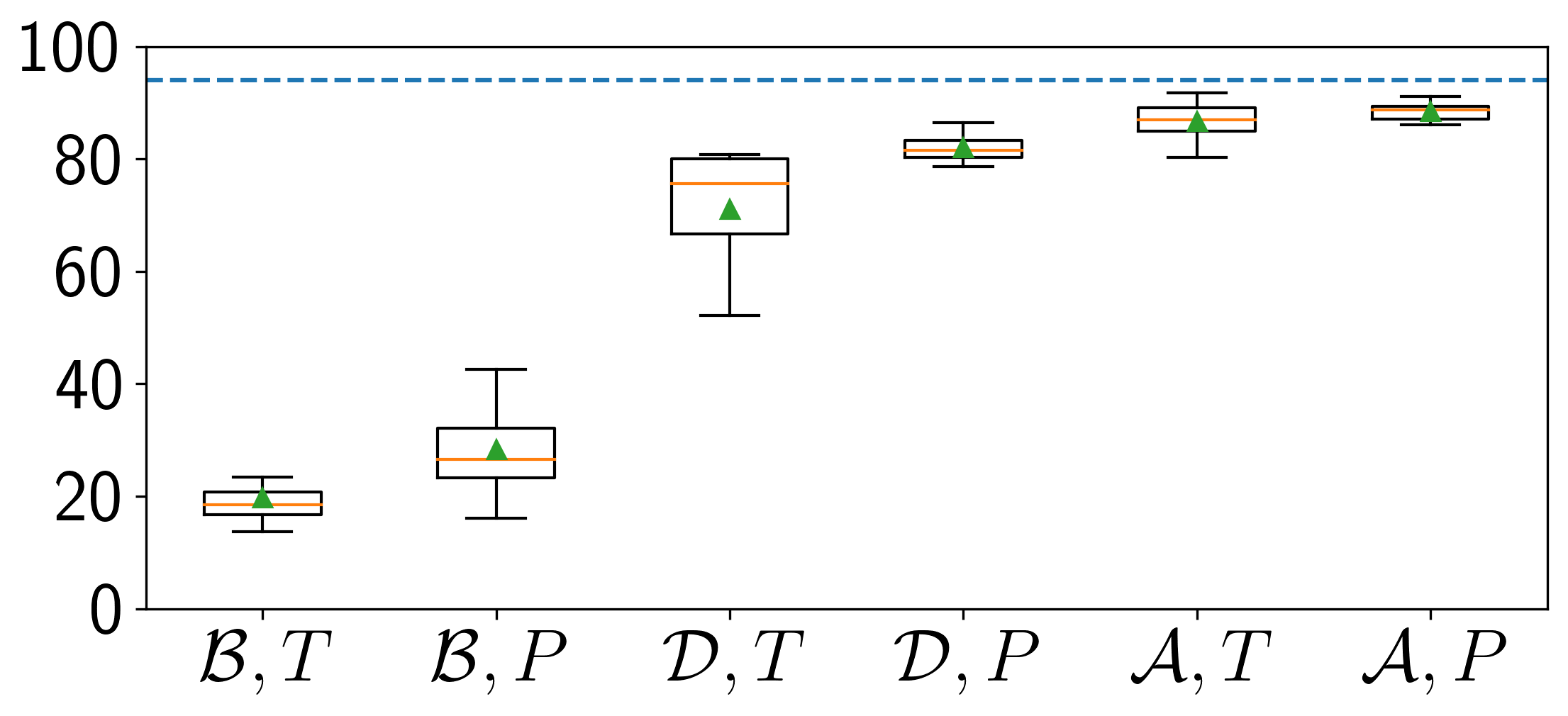}
\end{minipage}%
}%
\subfigure[\textbf{E7.}]{
\begin{minipage}[htbp]{0.25\linewidth}
\includegraphics[width=4cm]{./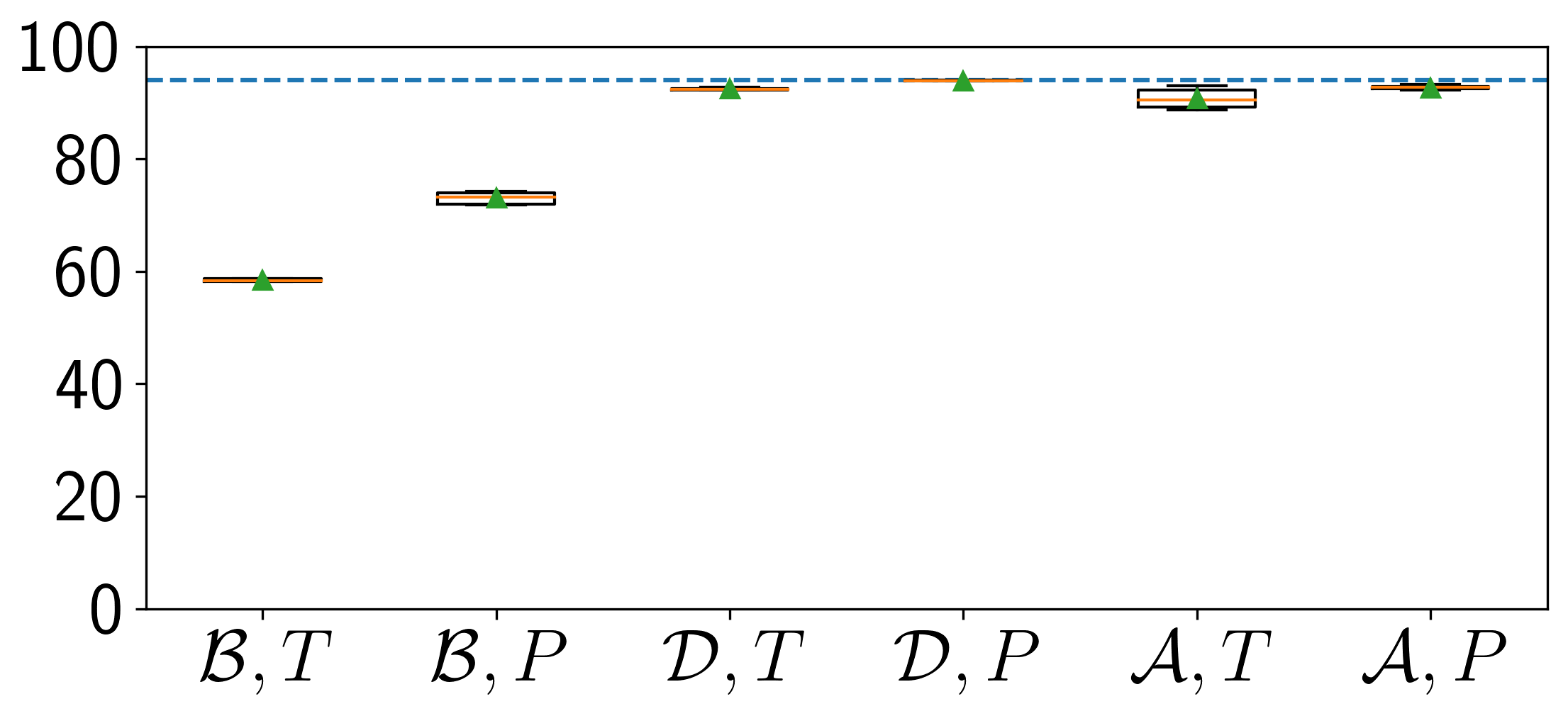}
\end{minipage}%
}%
\subfigure[\textbf{E8.}]{
\begin{minipage}[htbp]{0.25\linewidth}
\includegraphics[width=4cm]{./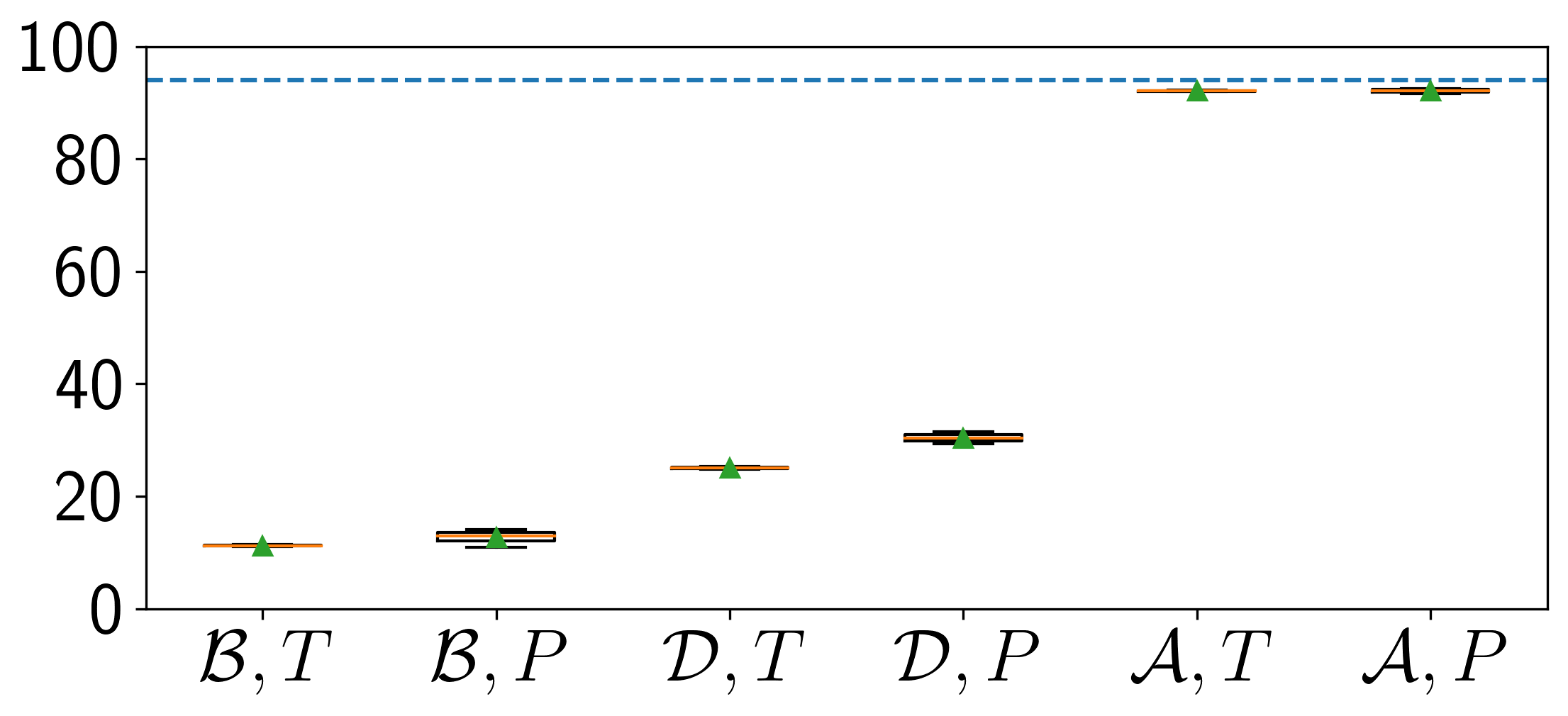}
\end{minipage}%
}
\subfigure[\textbf{E9.}]{
\begin{minipage}[htbp]{0.25\linewidth}
\includegraphics[width=4cm]{./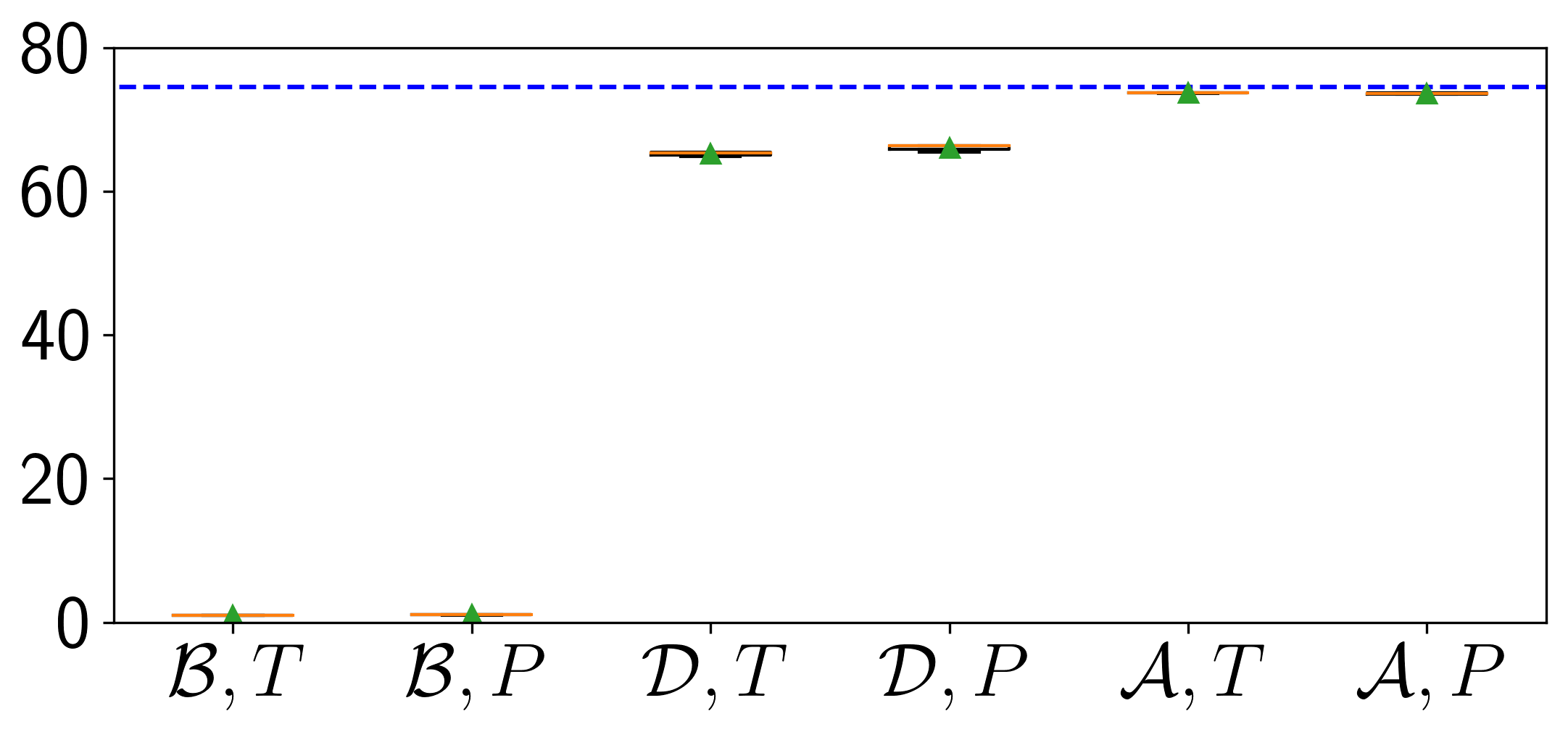}
\end{minipage}%
}%
\subfigure[\textbf{E10.}]{
\begin{minipage}[htbp]{0.25\linewidth}
\includegraphics[width=4cm]{./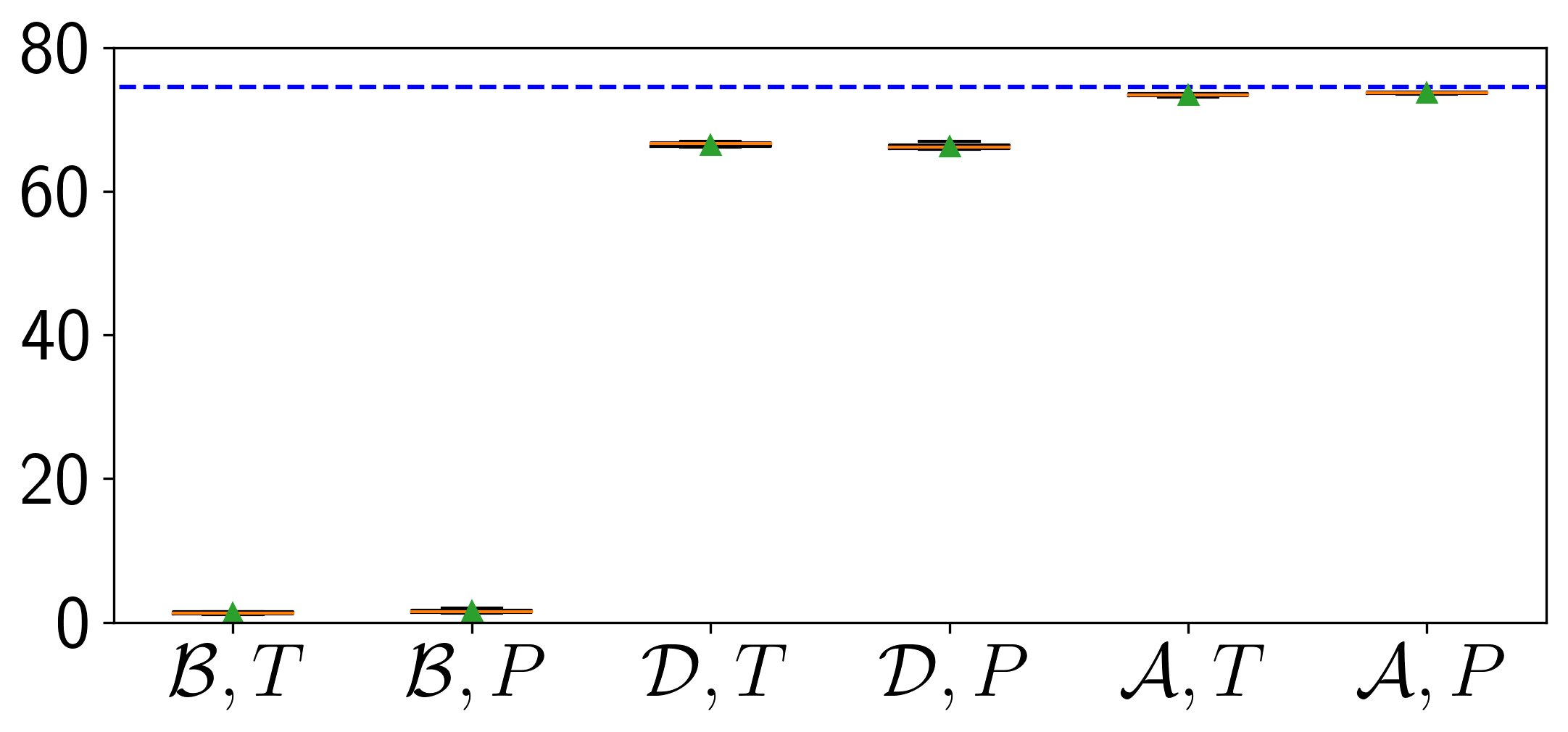}
\end{minipage}%
}%
\subfigure[\textbf{E11.}]{
\begin{minipage}[htbp]{0.25\linewidth}
\includegraphics[width=4cm]{./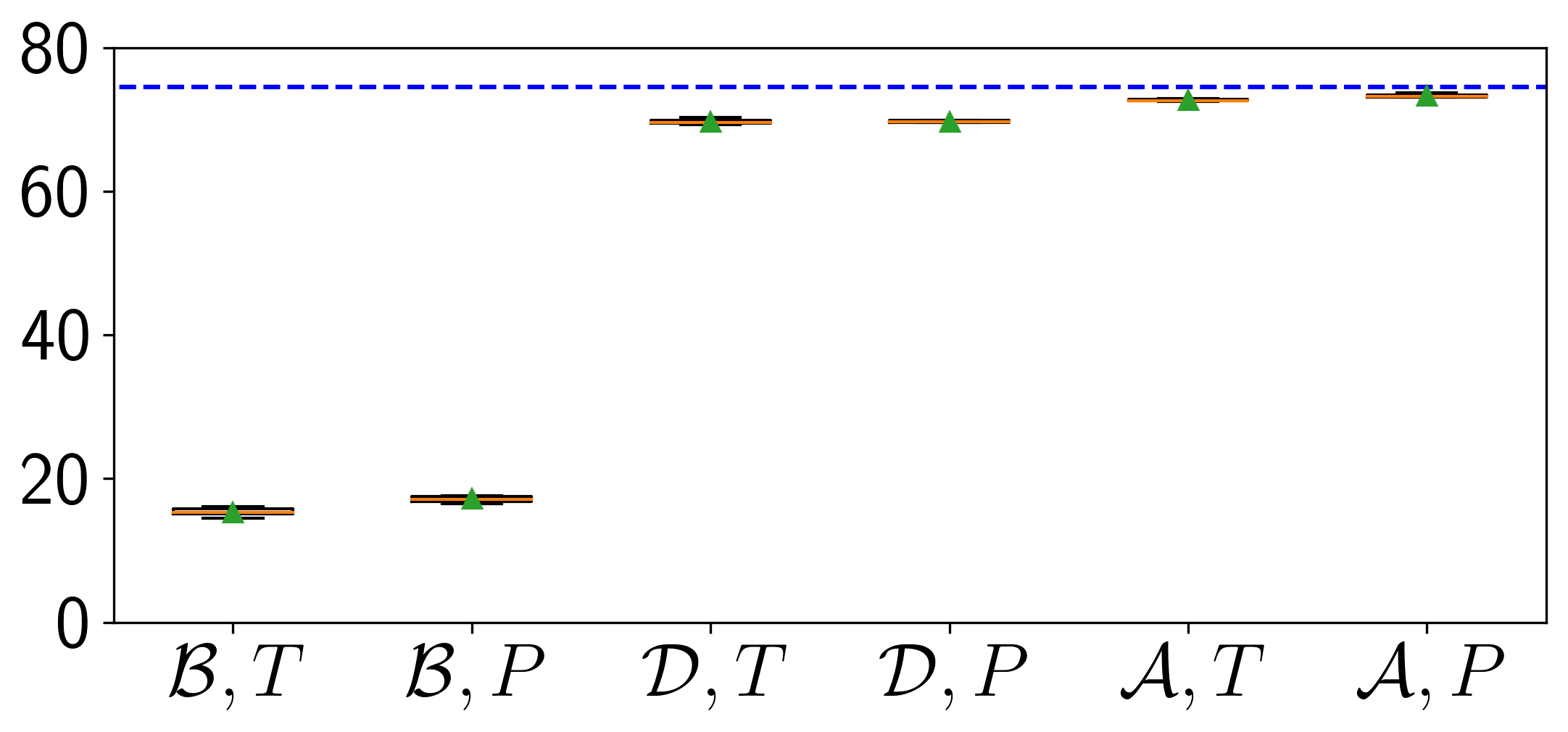}
\end{minipage}%
}%
\subfigure[\textbf{E12.}]{
\begin{minipage}[htbp]{0.25\linewidth}
\includegraphics[width=4cm]{./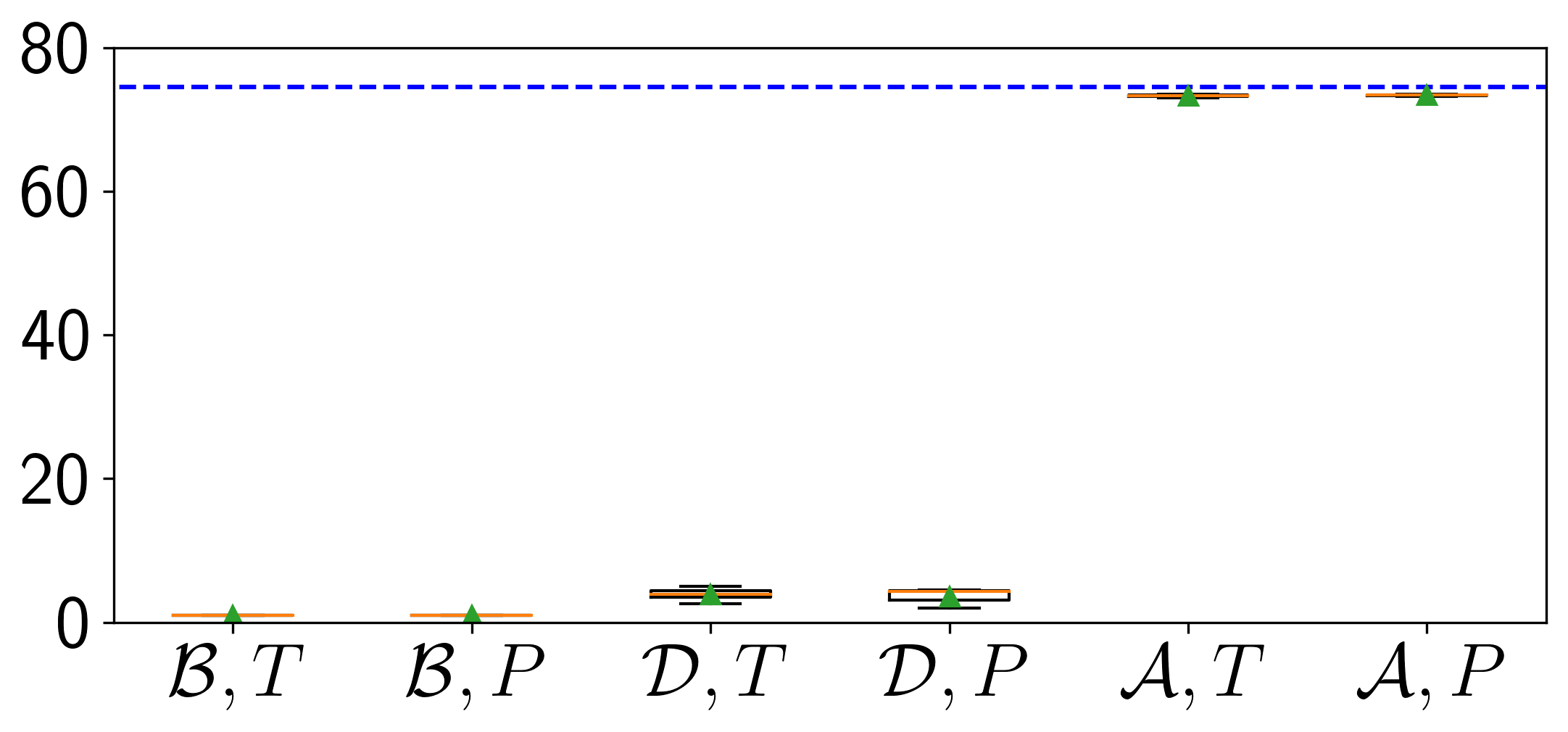}
\end{minipage}%
}
\subfigure[\textbf{E13.}]{
\begin{minipage}[htbp]{0.25\linewidth}
\includegraphics[width=4cm]{./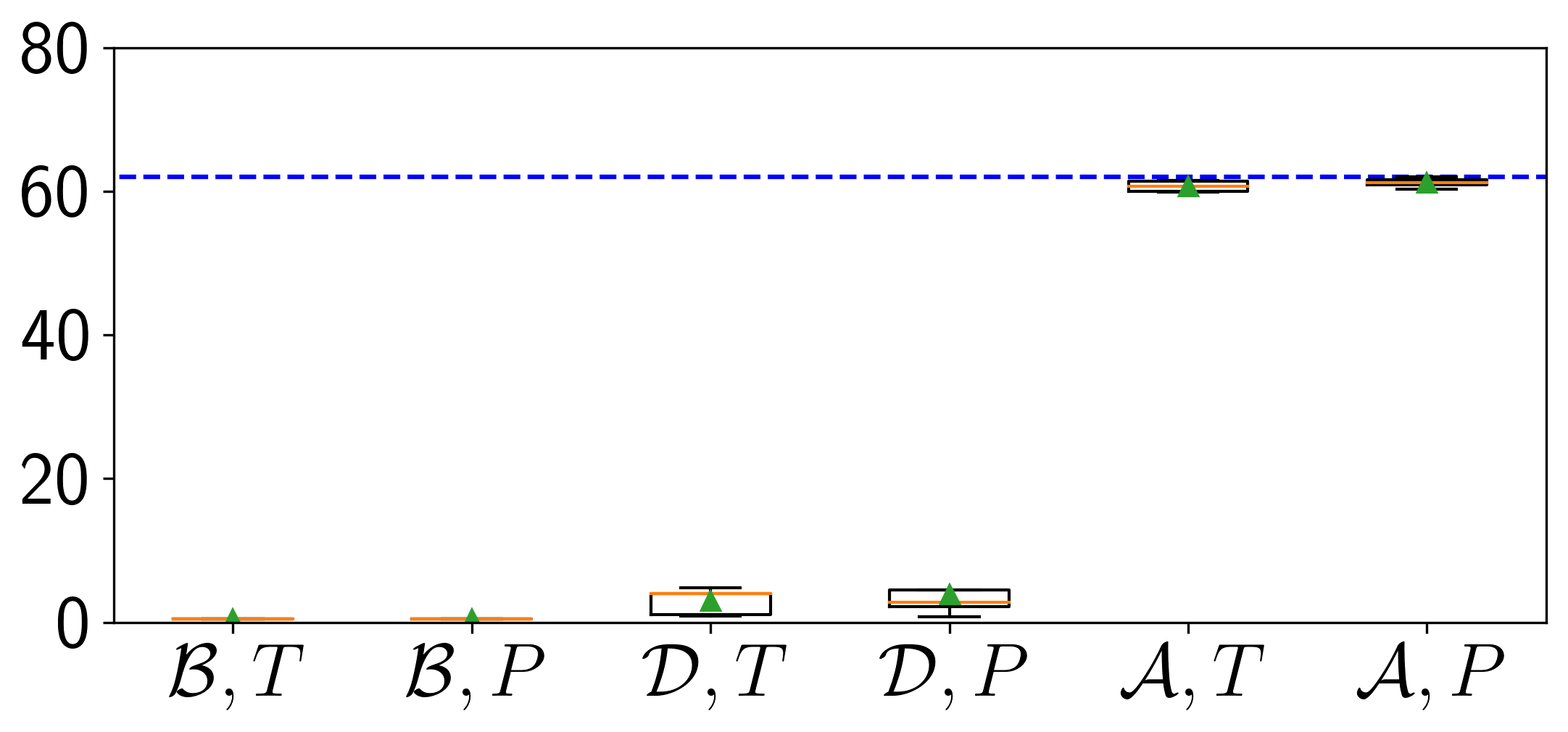}
\end{minipage}%
}%
\subfigure[\textbf{E14.}]{
\begin{minipage}[htbp]{0.25\linewidth}
\includegraphics[width=4cm]{./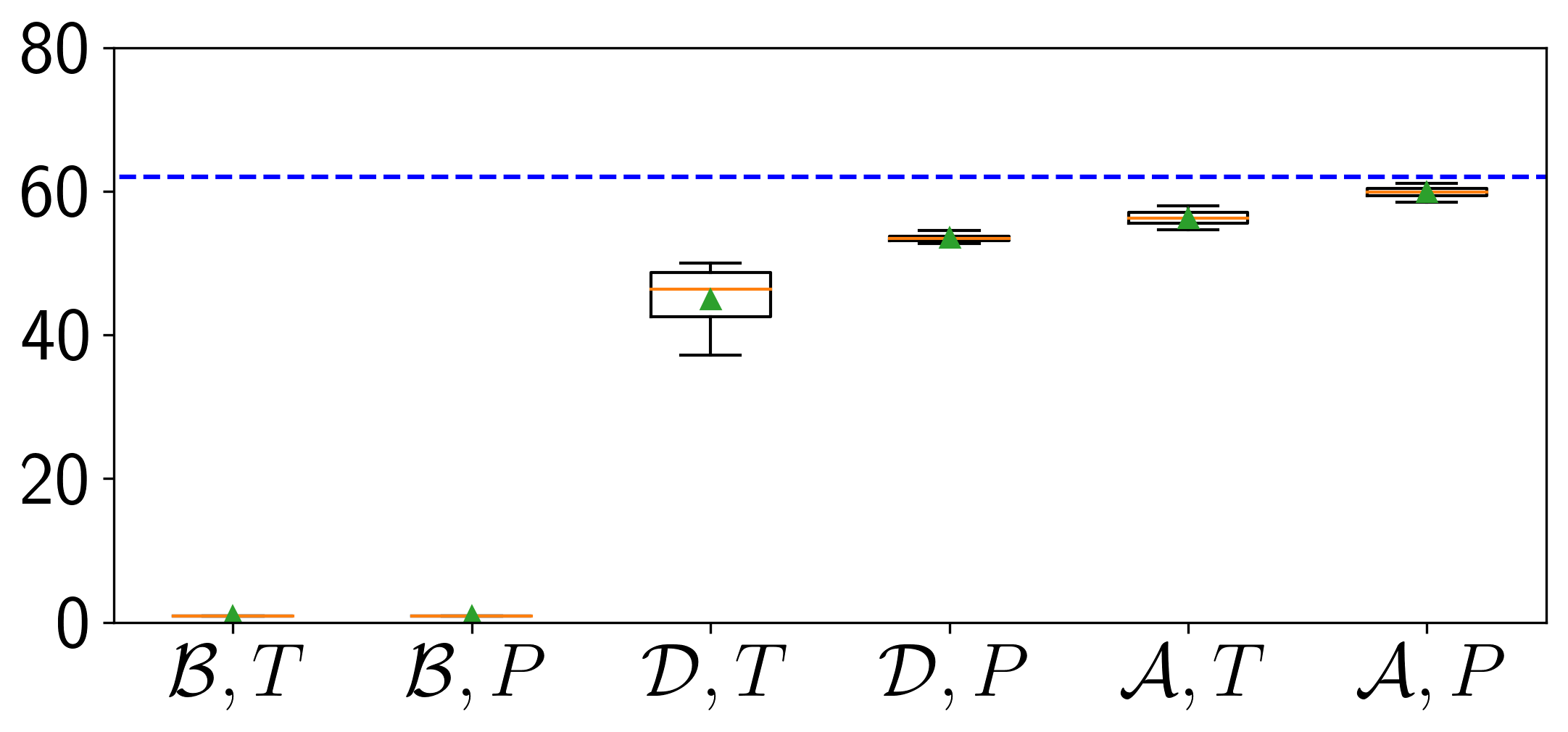}
\end{minipage}%
}%
\subfigure[\textbf{E15.}]{
\begin{minipage}[htbp]{0.25\linewidth}
\includegraphics[width=4cm]{./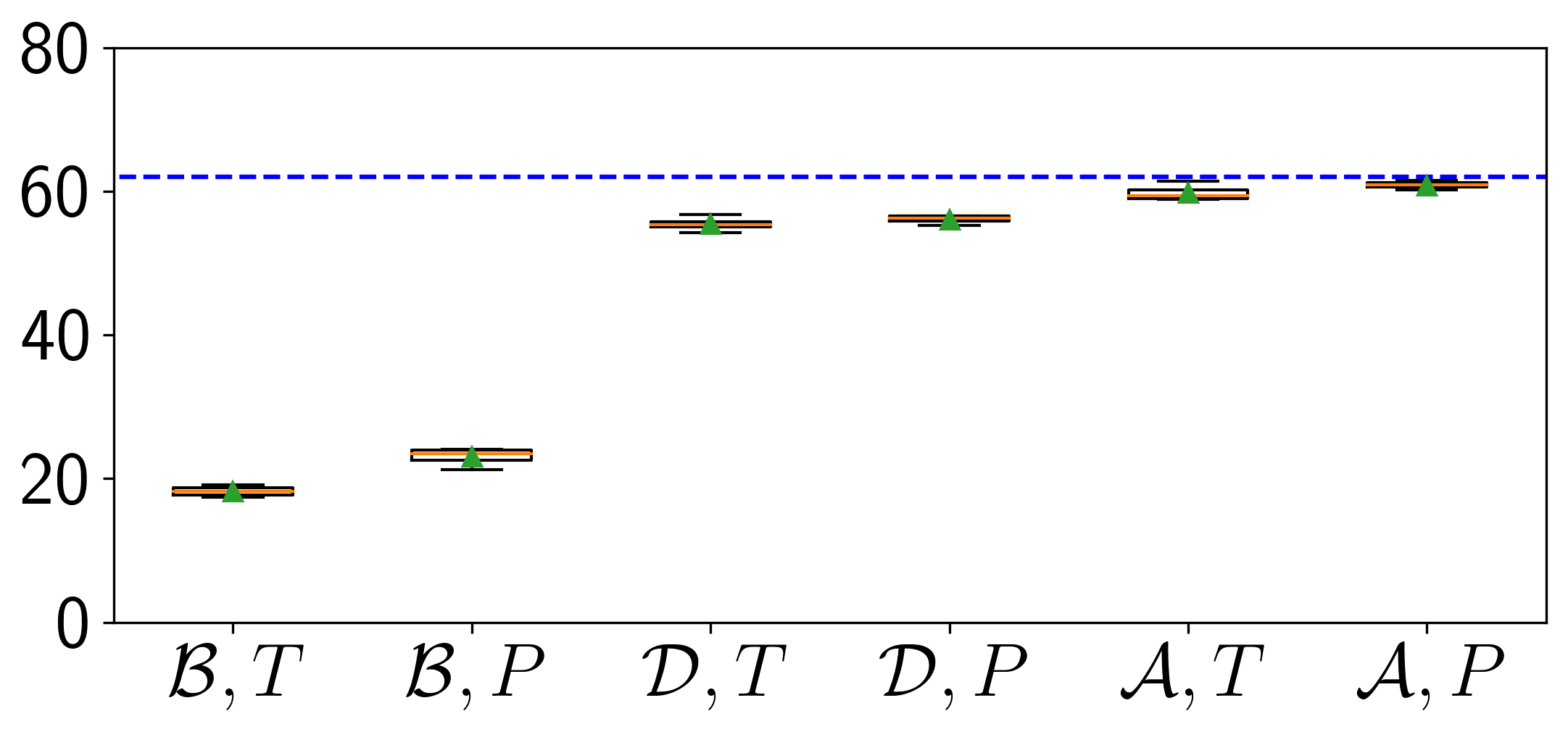}
\end{minipage}%
}%
\subfigure[\textbf{E16.}]{
\begin{minipage}[htbp]{0.25\linewidth}
\includegraphics[width=4cm]{./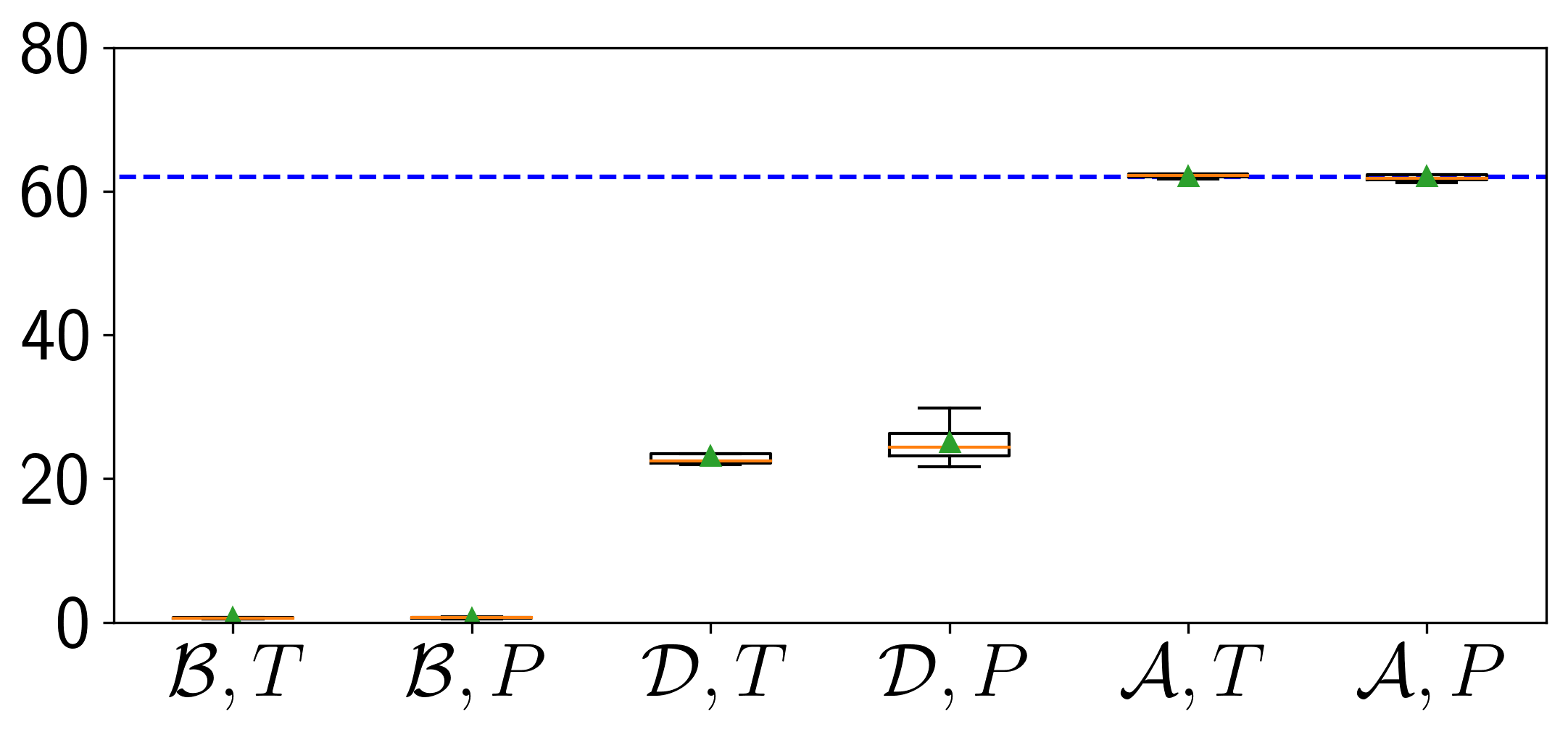}
\end{minipage}%
}%
\caption{The watermarked DNN's classification accuracy (\%) on the test dataset under different configurations. The performance of the clean DNN is marked in blue.}
\label{fig:exp1}
\end{figure*}

\begin{figure*}[!t]
\centering
\includegraphics[width=15.5cm]{./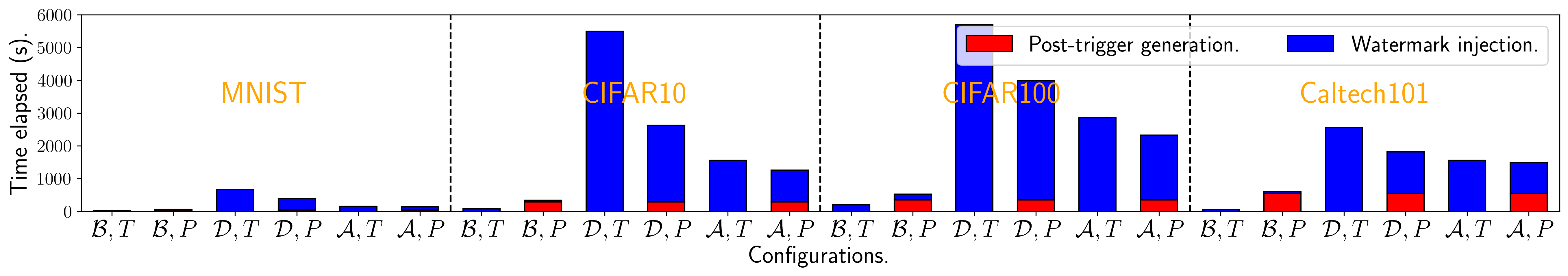}
\caption{The time consumption of watermark injection.}
\label{figure:time}
\end{figure*}

\begin{figure*}[!t]
\subfigure[\textbf{E1}-\textbf{E4}.]{
\begin{minipage}[htbp]{0.25\linewidth}
\includegraphics[width=4.3cm]{./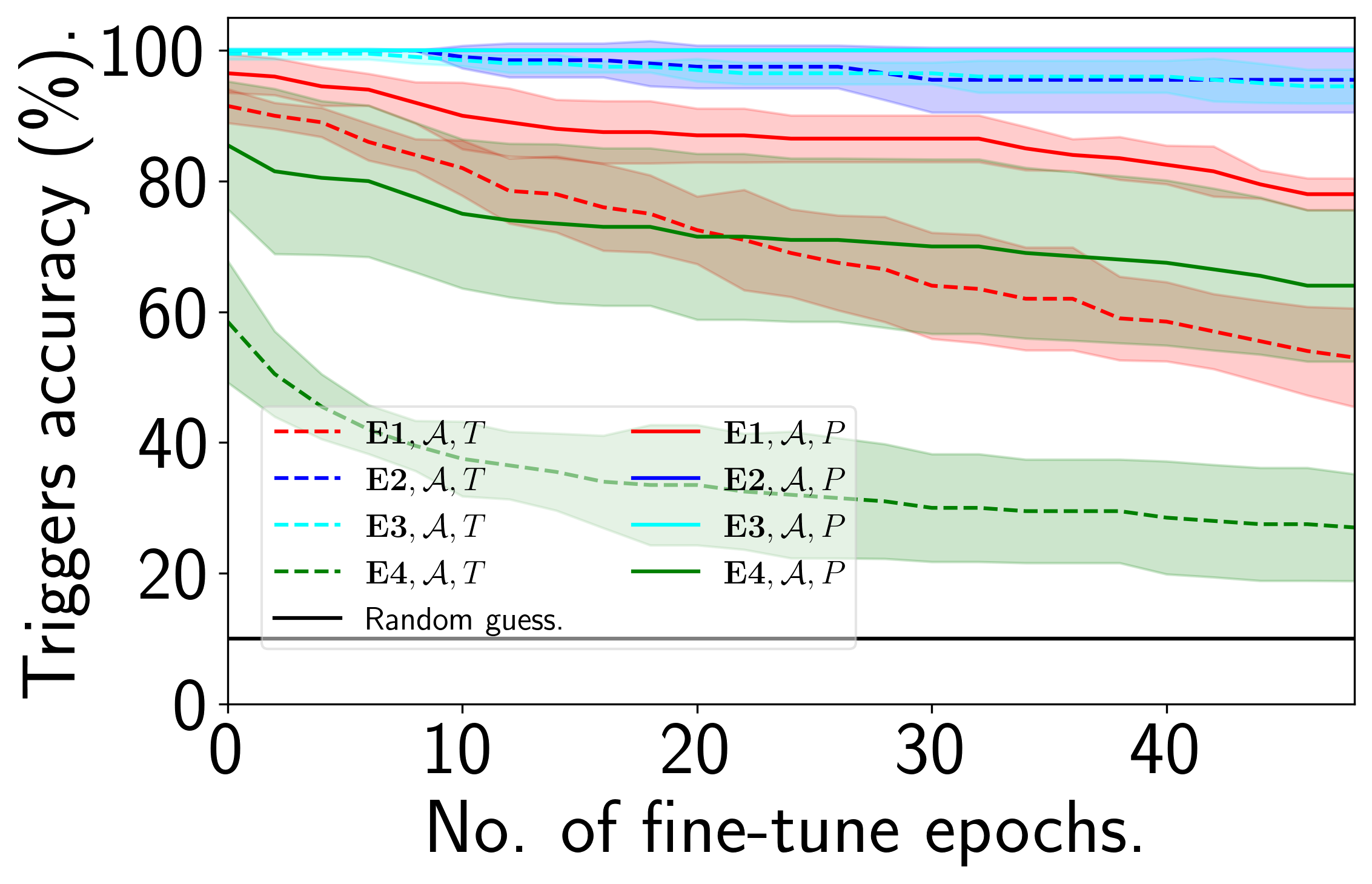}
\end{minipage}%
}%
\subfigure[\textbf{E5}-\textbf{E8}.]{
\begin{minipage}[htbp]{0.25\linewidth}
\includegraphics[width=4.3cm]{./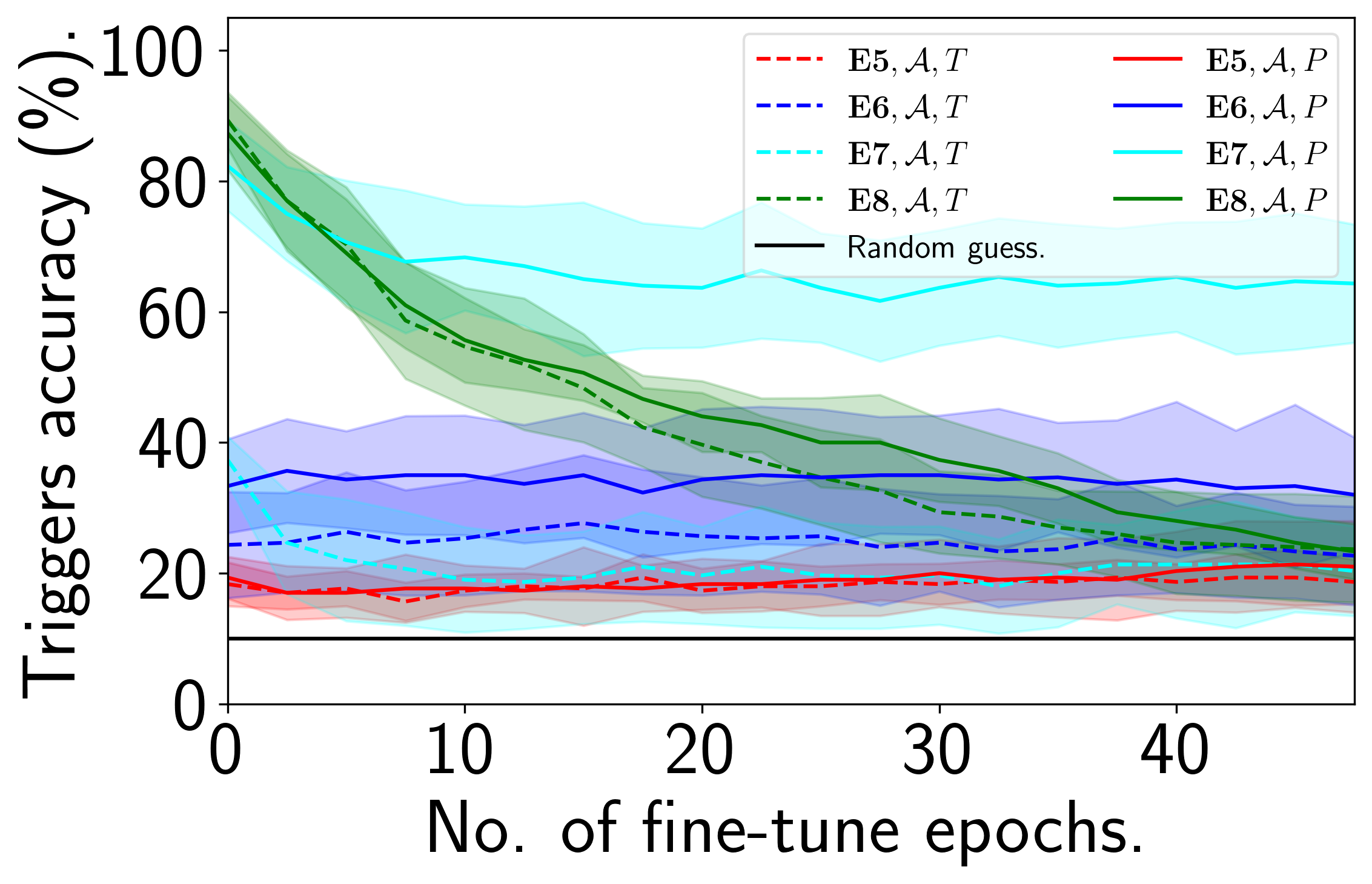}
\end{minipage}%
}%
\subfigure[\textbf{E9}-\textbf{E12}.]{
\begin{minipage}[htbp]{0.25\linewidth}
\includegraphics[width=4.3cm]{./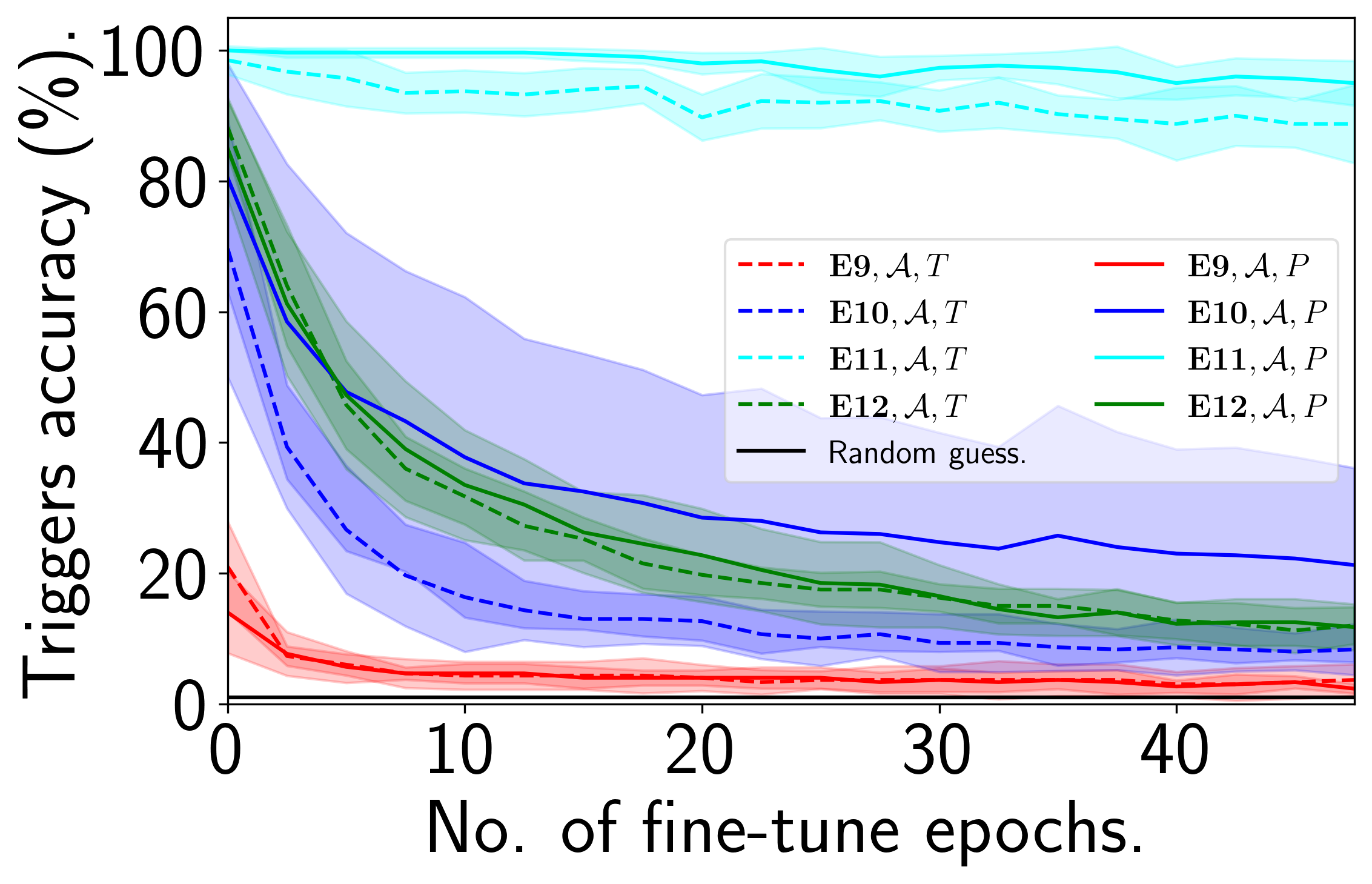}
\end{minipage}%
}%
\subfigure[\textbf{E13}-\textbf{E16}.]{
\begin{minipage}[htbp]{0.25\linewidth}
\includegraphics[width=4.3cm]{./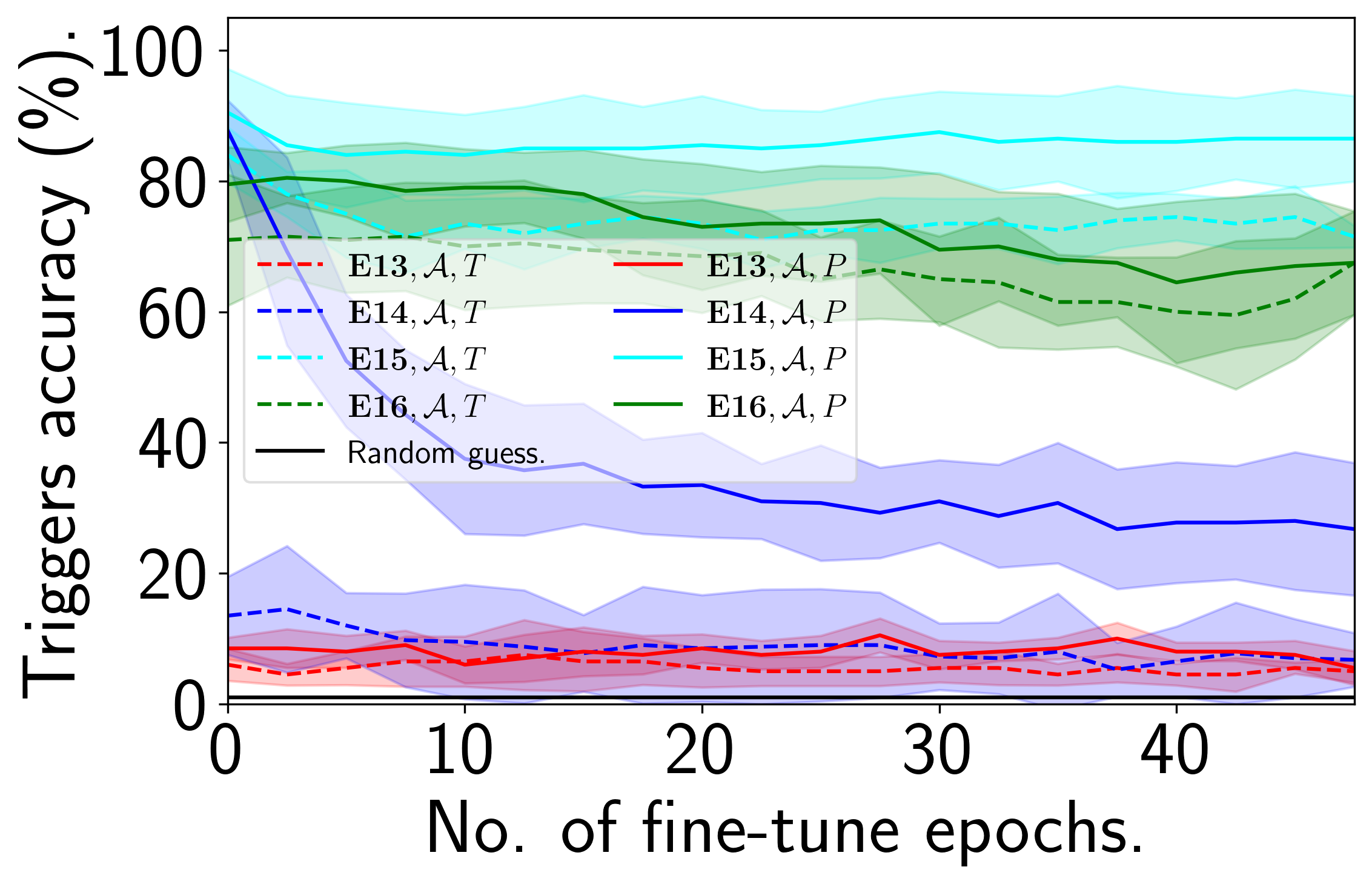}
\end{minipage}%
}%
\caption{The classification of (post-)triggers after fine-tuning.}
\label{fig:ft}
\end{figure*}

\begin{figure*}[!t]
\subfigure[\textbf{E1}-\textbf{E4}.]{
\begin{minipage}[htbp]{0.25\linewidth}
\includegraphics[width=4.3cm]{./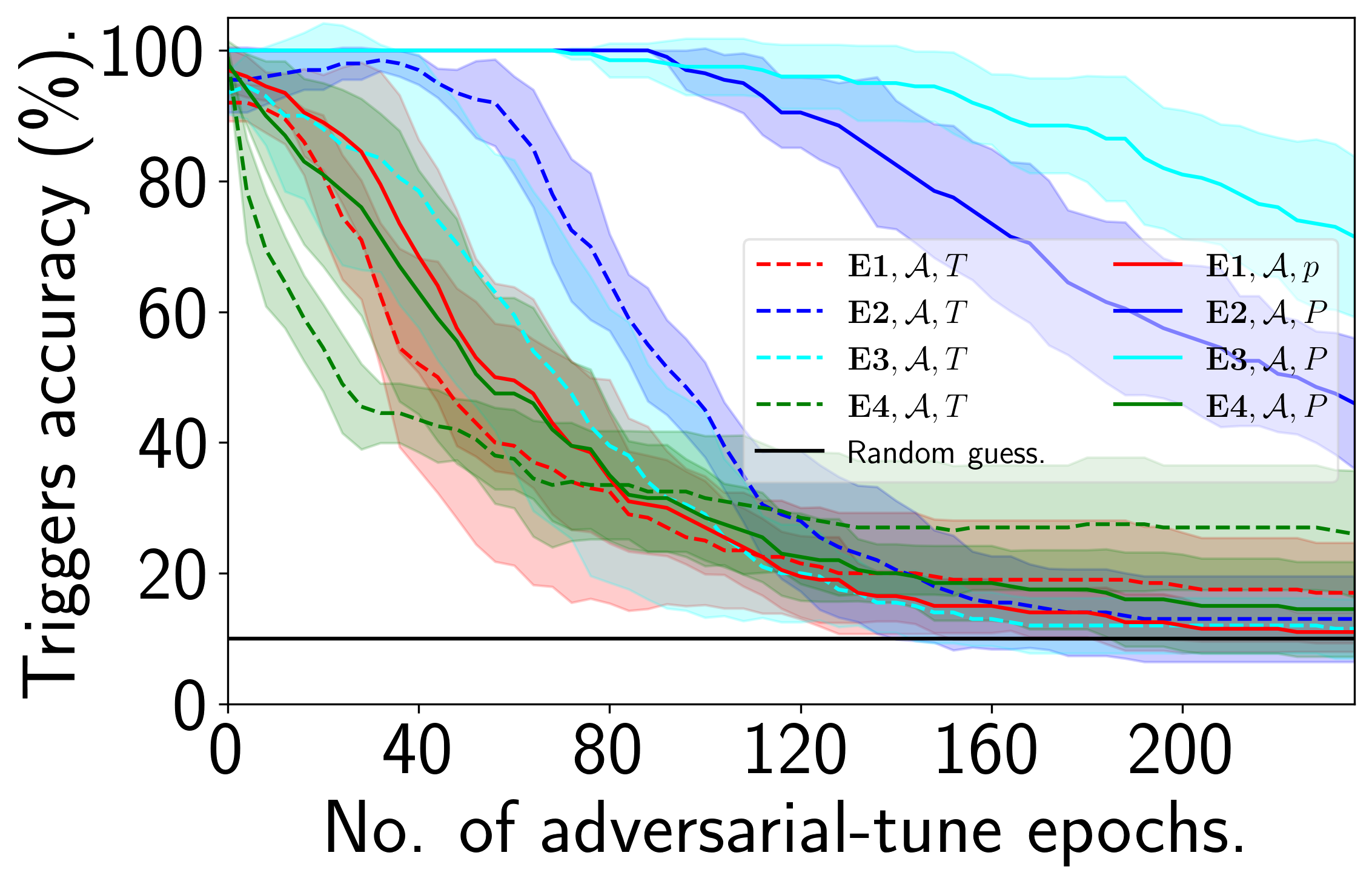}
\end{minipage}%
}%
\subfigure[\textbf{E5}-\textbf{E8}.]{
\begin{minipage}[htbp]{0.25\linewidth}
\includegraphics[width=4.3cm]{./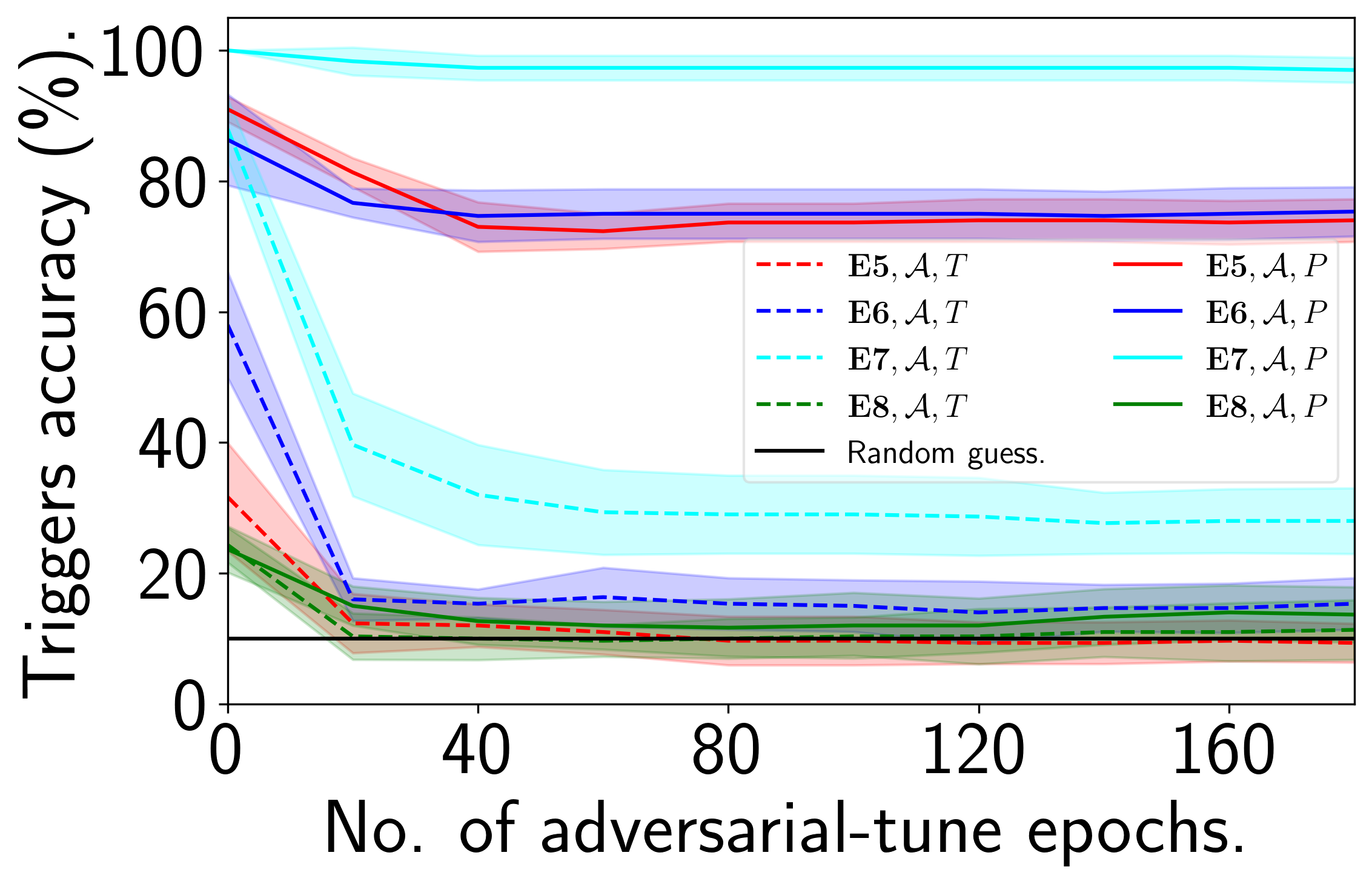}
\end{minipage}%
}%
\subfigure[\textbf{E9}-\textbf{E12}.]{
\begin{minipage}[htbp]{0.25\linewidth}
\includegraphics[width=4.3cm]{./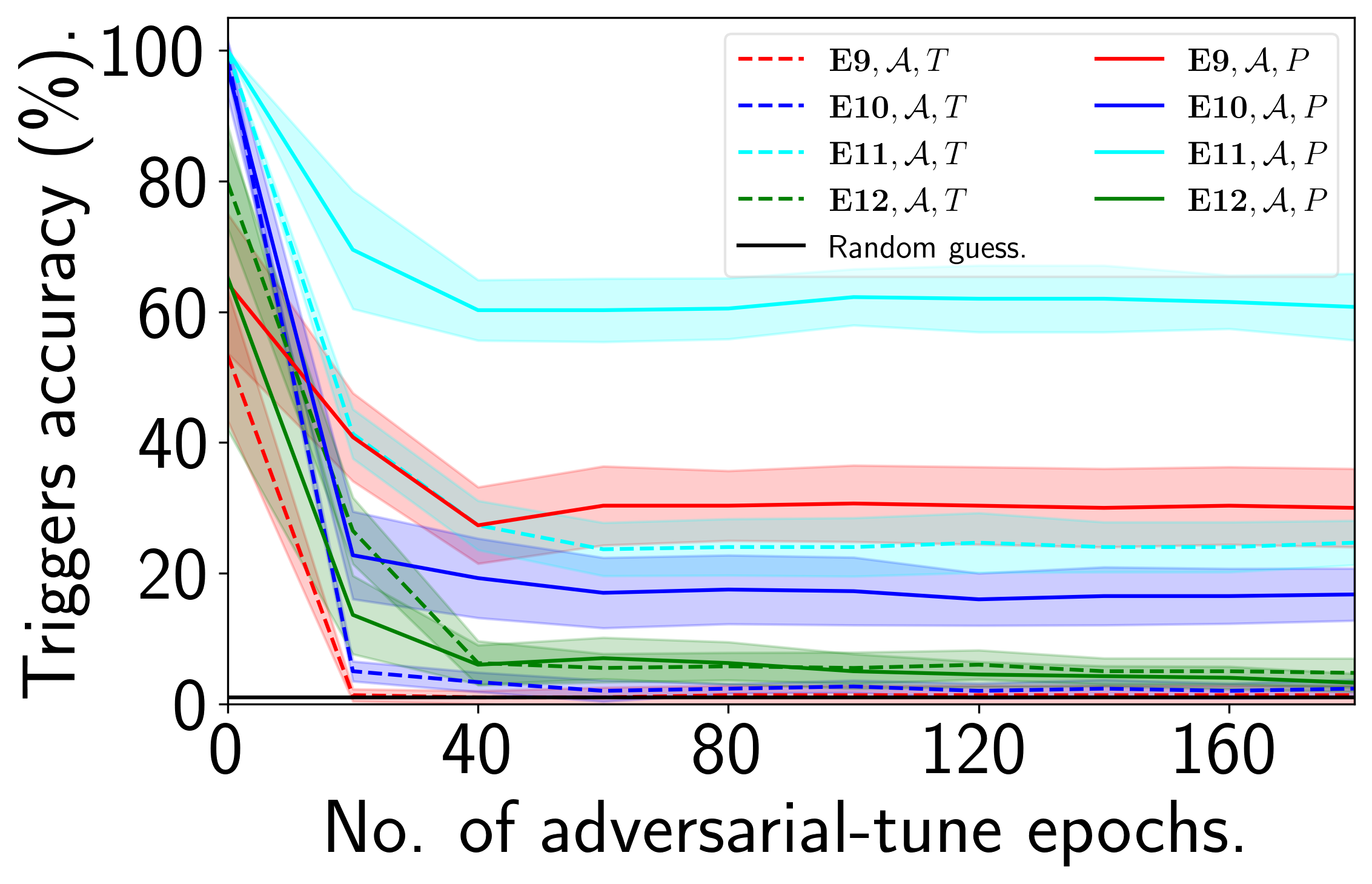}
\end{minipage}%
}%
\subfigure[\textbf{E13}-\textbf{E16}.]{
\begin{minipage}[htbp]{0.25\linewidth}
\includegraphics[width=4.3cm]{./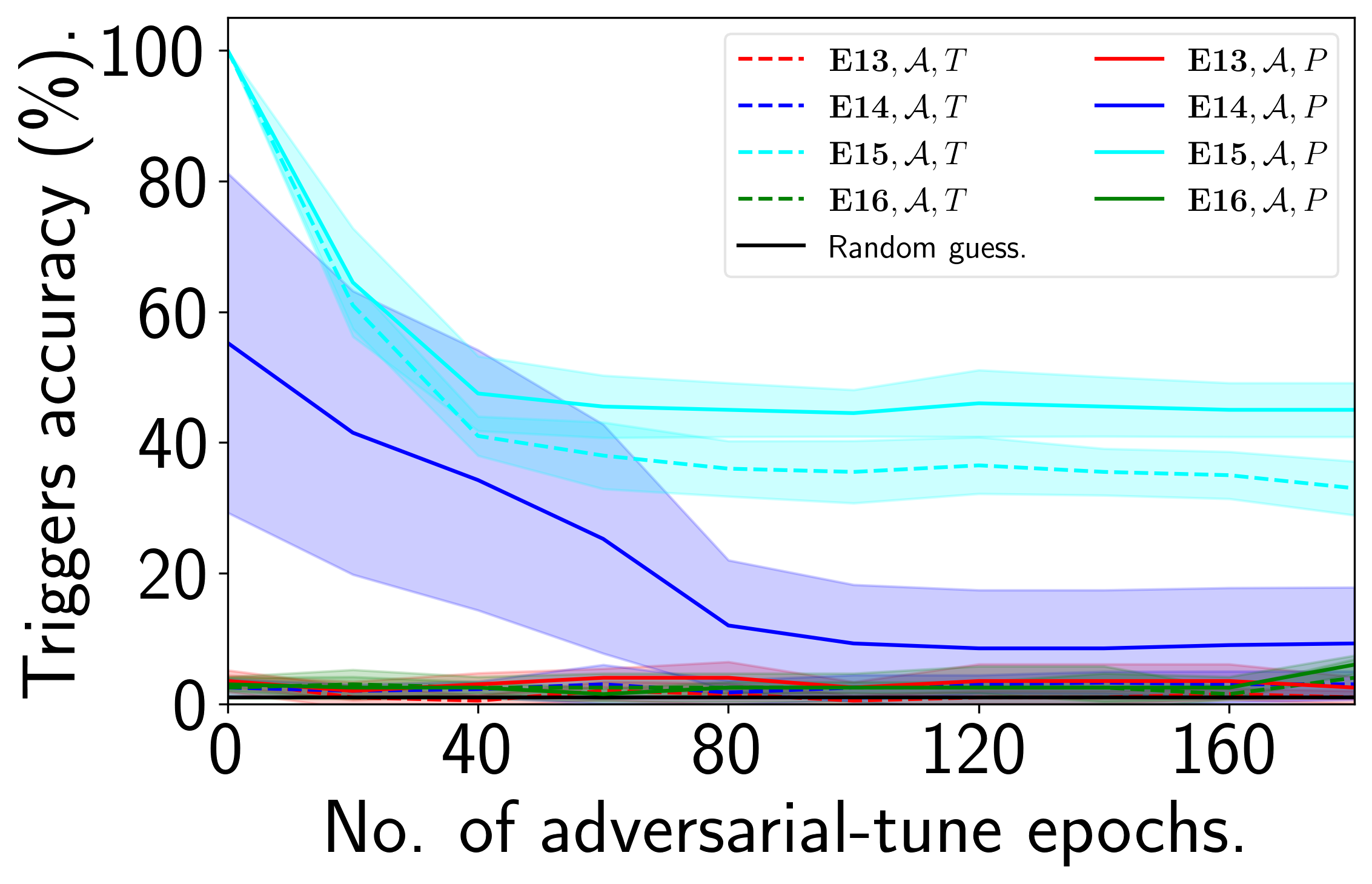}
\end{minipage}%
}%
\caption{The classification of (post-)triggers after adversarial tuning.}
\label{fig:adv}
\end{figure*}

\subsection{Watermark Injection}
The metric of the most interest is the impact of watermark injection on the DNN's classification accuracy on normal inputs.
During watermark injection defined by Eq.~\eqref{equation:lossembed}, we fixed $N$=$50$, $\lambda_{2}$=$5$, and $S$=$10\cdot N$ anchors were injected along with triggers.
The same $S$ random training samples were injected along with the trigger using the ordinary cross-entropy loss function for comparison.
In generating post-triggers, we set $Q$=$30$ and $R$=$100$.
The watermarked DNN's classification accuracy on the test dataset when the accuracy on the triggers reached 90\% was recorded, results are shown as Fig.~\ref{fig:exp1}, each entry contains the statistics of twenty folds of experiments.
Watermark injection along with anchors exerted uniformly the least impact on the DNN's performance as has been predicted, while injecting backdoor solely always devastated the model. 
Injection with training samples is not always the optimal choice, which fact indicates that tuning the model with poisoned datasets is a risky behavior unless we adopt delicate triggers or obtain extra knowledge on the configuration of the optimizer.
As a result, even equipped with knowledge of the training dataset, the IP manager is encouraged to inject triggers along with anchors and $l_{1}$ logit loss, which better preserves the commercial functionality of the DNN product.

We noticed that impacts of watermarking were different across different encoders, datasets, and network architectures, yet the ultimate performance with anchors injection remained the optimal level. 
The post-triggers generally preserved better performance than their counterpart triggers, since they have been tuned by Eq.~\eqref{equation:advloss} so their injection was less afflictive. 
This advantage is also reflected by the time consumption of watermark injection (averaged among four encoders) as shown in Fig.~\ref{figure:time}.
Injection with the original dataset took more time to stabilize the network, while the time cost in generating post-triggers could be compensated since they have been tuned to fit their respective labels.

\subsection{The Persistency of the Watermark}
Under the knowledge-independent assumption \textbf{(C1)}, we fixed the watermarking scheme's configuration among $(\mathcal{A},T)$, $(\mathcal{A},P)$ and examined the persistency of the injected watermark.
We firstly applied fine-tuning to the watermarked model and the classification accuracy of the triggers under a twenty-folded experiment is shown in Fig.~\ref{fig:ft}, from which we concluded that the impact of fine-tuning for triggers is uniformly larger than that for post-triggers.
This is because the post-triggers have been tuned to resist fine-tuning with anchors by Eq.~\eqref{equation:advloss}, such immunity is partially transferred to that against fine-tuning with the actual training dataset.
Even if an adversary has acquired the entire training dataset, it can hardly compromise the ownership proof by fine-tuning since the accuracy was uniformly above the random guess baseline.
We also found that (post-)triggers generated by the DFD generator preserved the optimal robustness against fine-tuning across different datasets and network architectures. 

\begin{figure}[!t]
\subfigure[1 epoch.]{
\begin{minipage}[htbp]{0.25\linewidth}
\includegraphics[width=2.2cm]{./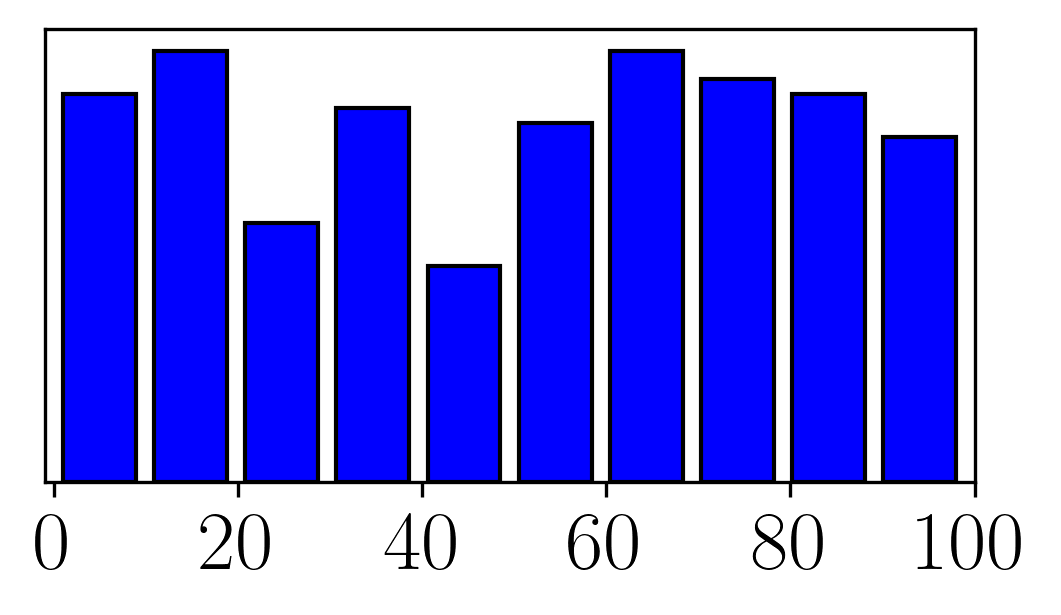}
\end{minipage}%
}%
\subfigure[10 epoches.]{
\begin{minipage}[htbp]{0.25\linewidth}
\includegraphics[width=2.2cm]{./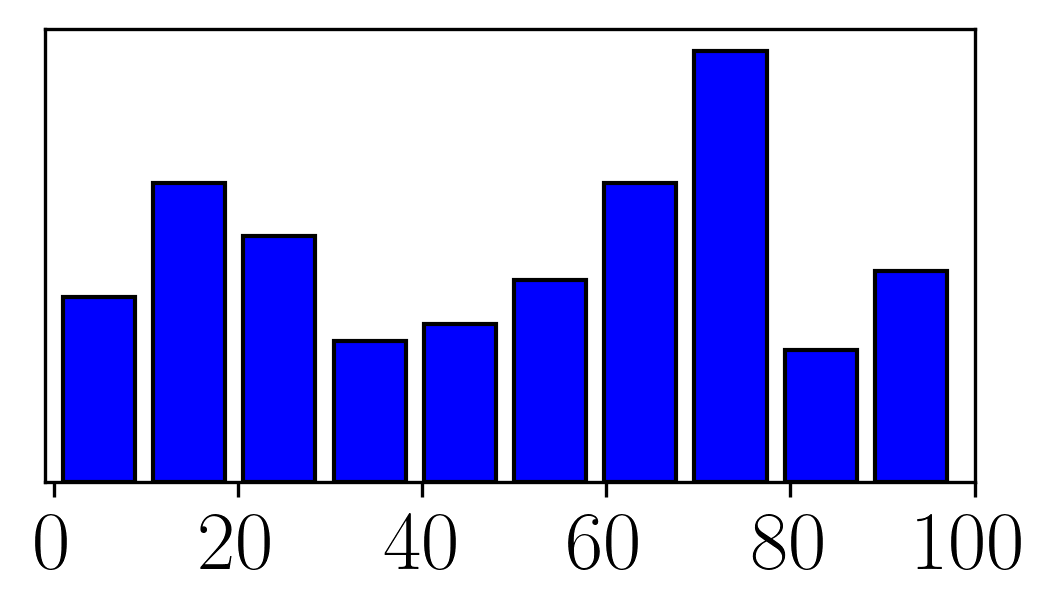}
\end{minipage}%
}%
\subfigure[20 epoches.]{
\begin{minipage}[htbp]{0.25\linewidth}
\includegraphics[width=2.2cm]{./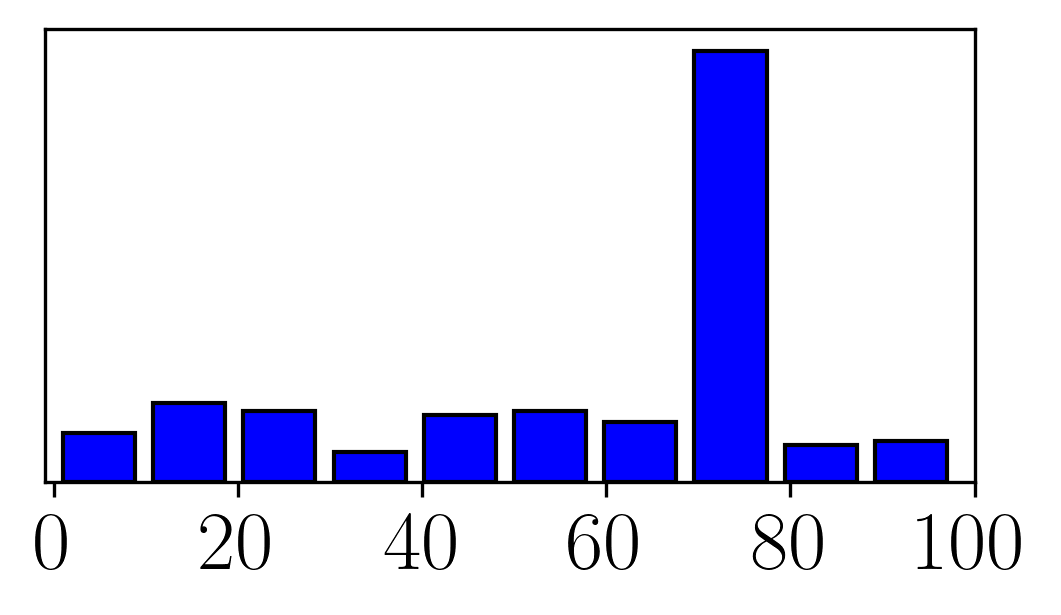}
\end{minipage}%
}%
\subfigure[30 epoches.]{
\begin{minipage}[htbp]{0.25\linewidth}
\includegraphics[width=2.2cm]{./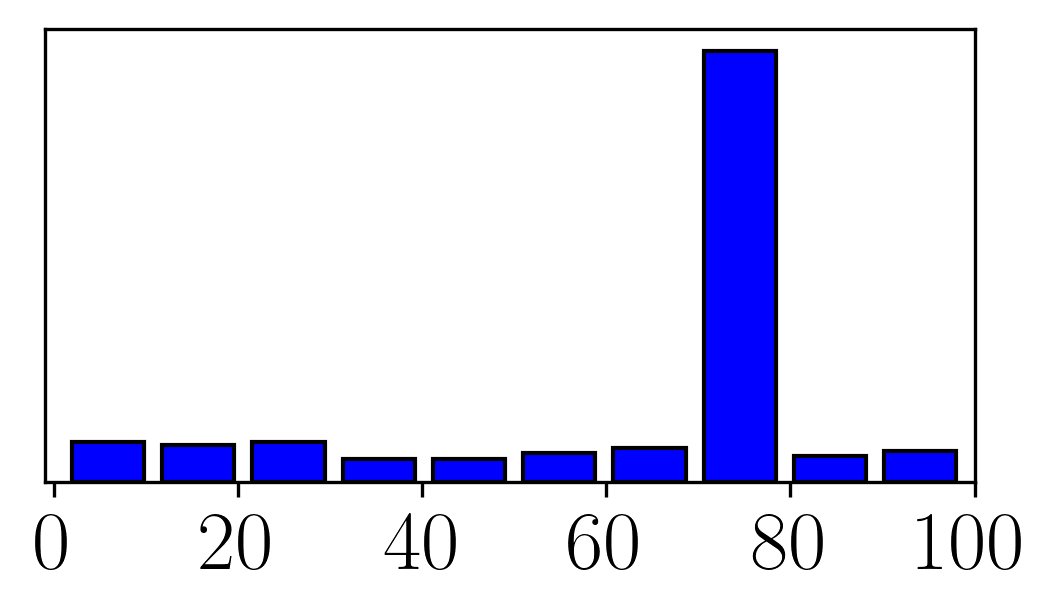}
\end{minipage}%
}
\caption{The distribution of predicted labels for a unseen set of triggers after $\left\{0,10,20,30 \right\}$ epochs of adversarial tuning.
The configuration is \textbf{E12}, $\mathcal{D}$, $T$, $c^{\text{adv}}$=$70$.}
\label{figure:advtune}
\end{figure}

\begin{figure}[!t]
\subfigure[MNIST.]{
\begin{minipage}[htbp]{0.5\linewidth}
\centering
\includegraphics[width=4.2cm]{./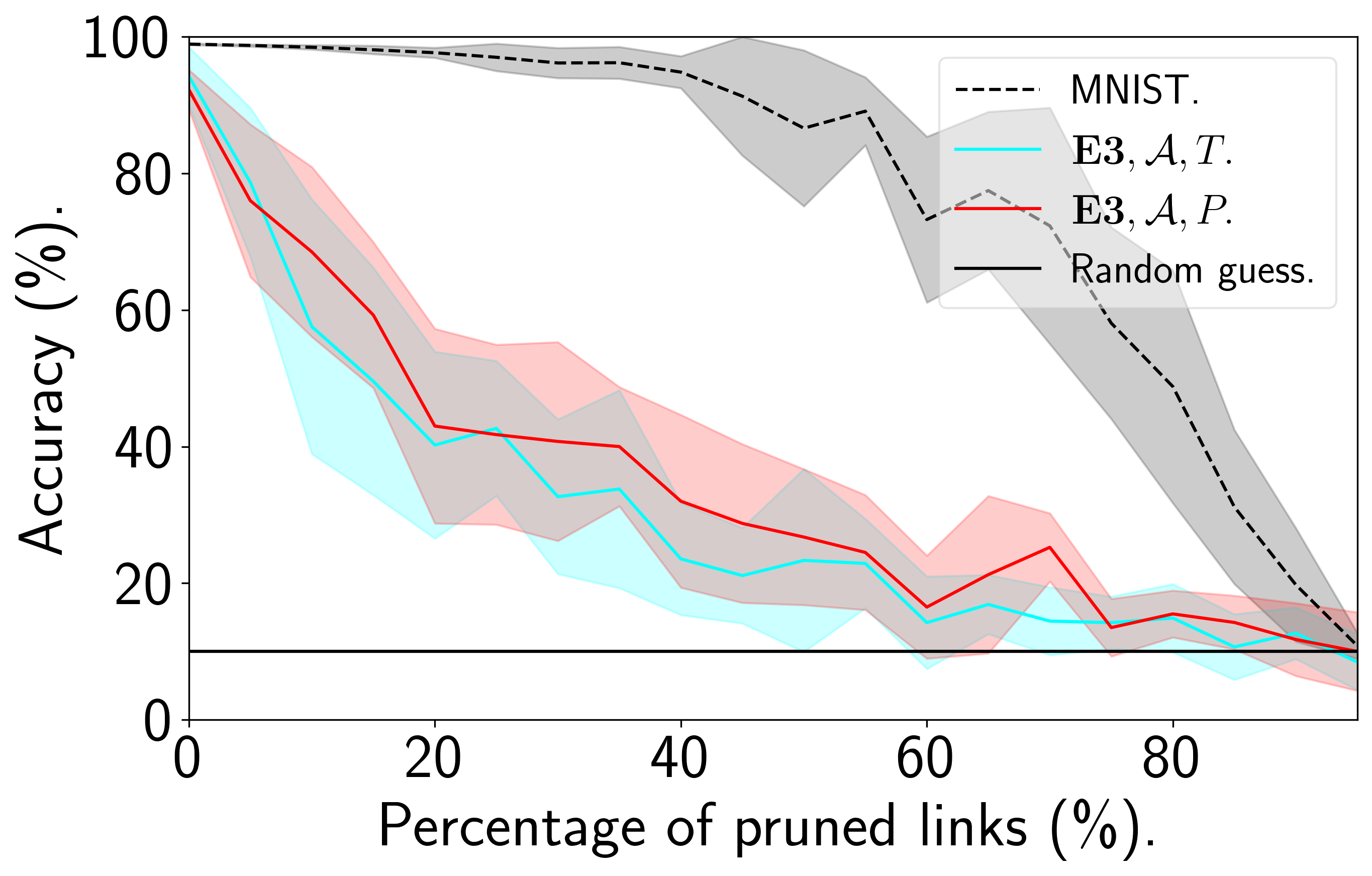}
\end{minipage}%
}%
\subfigure[CIFAR10.]{
\begin{minipage}[htbp]{0.5\linewidth}
\centering
\includegraphics[width=4.2cm]{./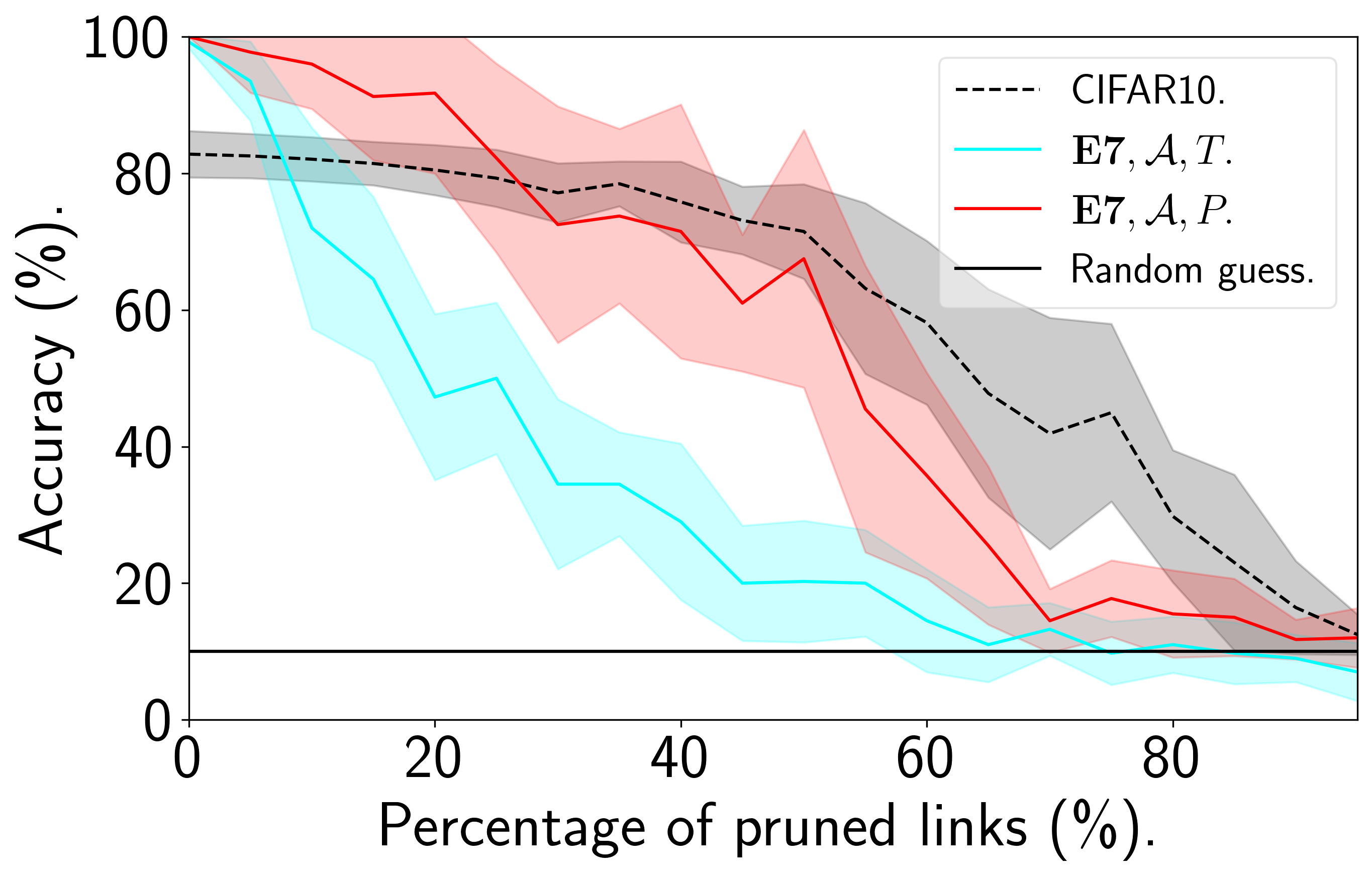}
\end{minipage}%
}
\subfigure[CIFAR100.]{
\begin{minipage}[htbp]{0.5\linewidth}
\centering
\includegraphics[width=4.2cm]{./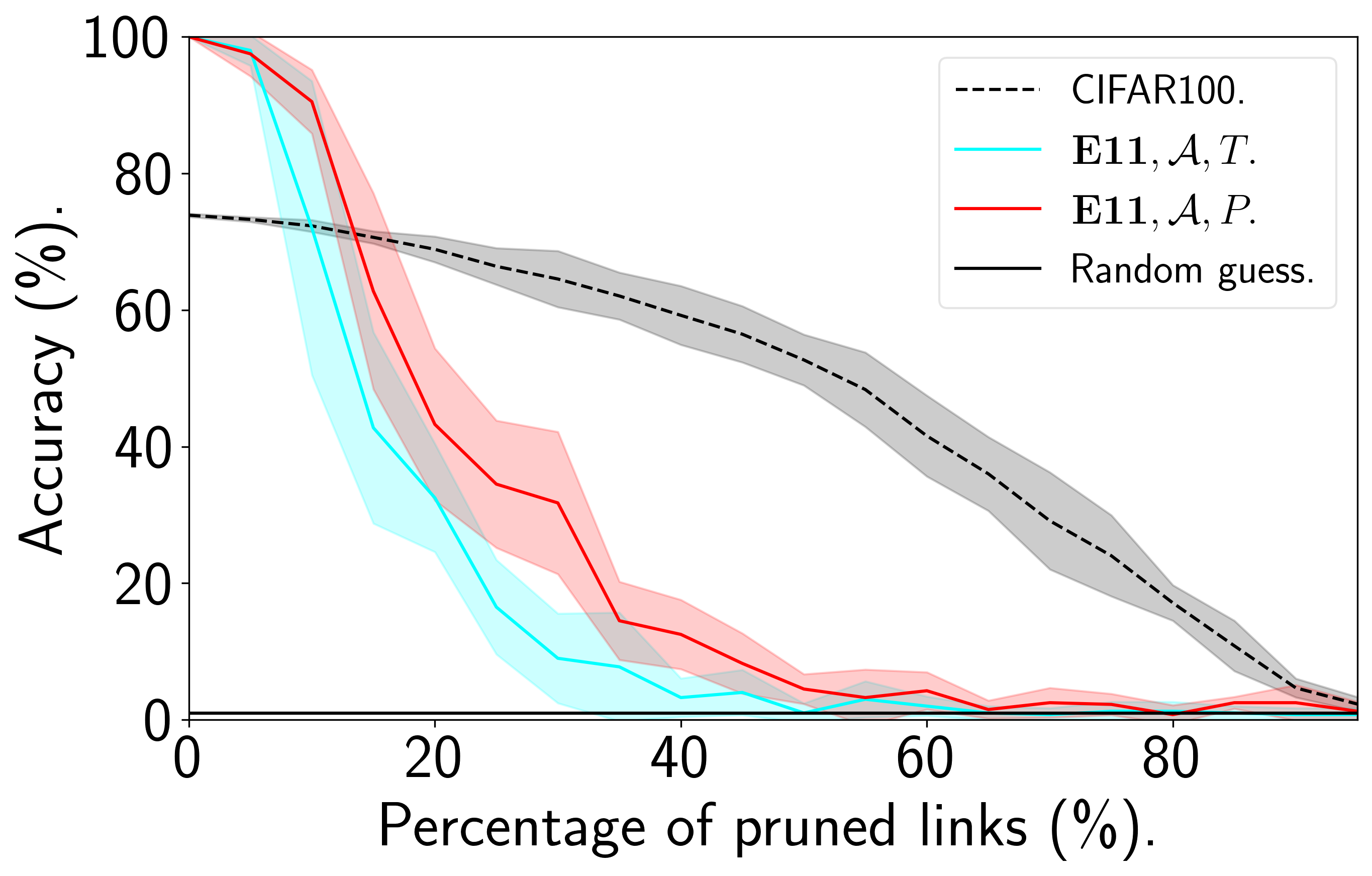}
\end{minipage}%
}%
\subfigure[Caltech101.]{
\begin{minipage}[htbp]{0.5\linewidth}
\centering
\includegraphics[width=4.2cm]{./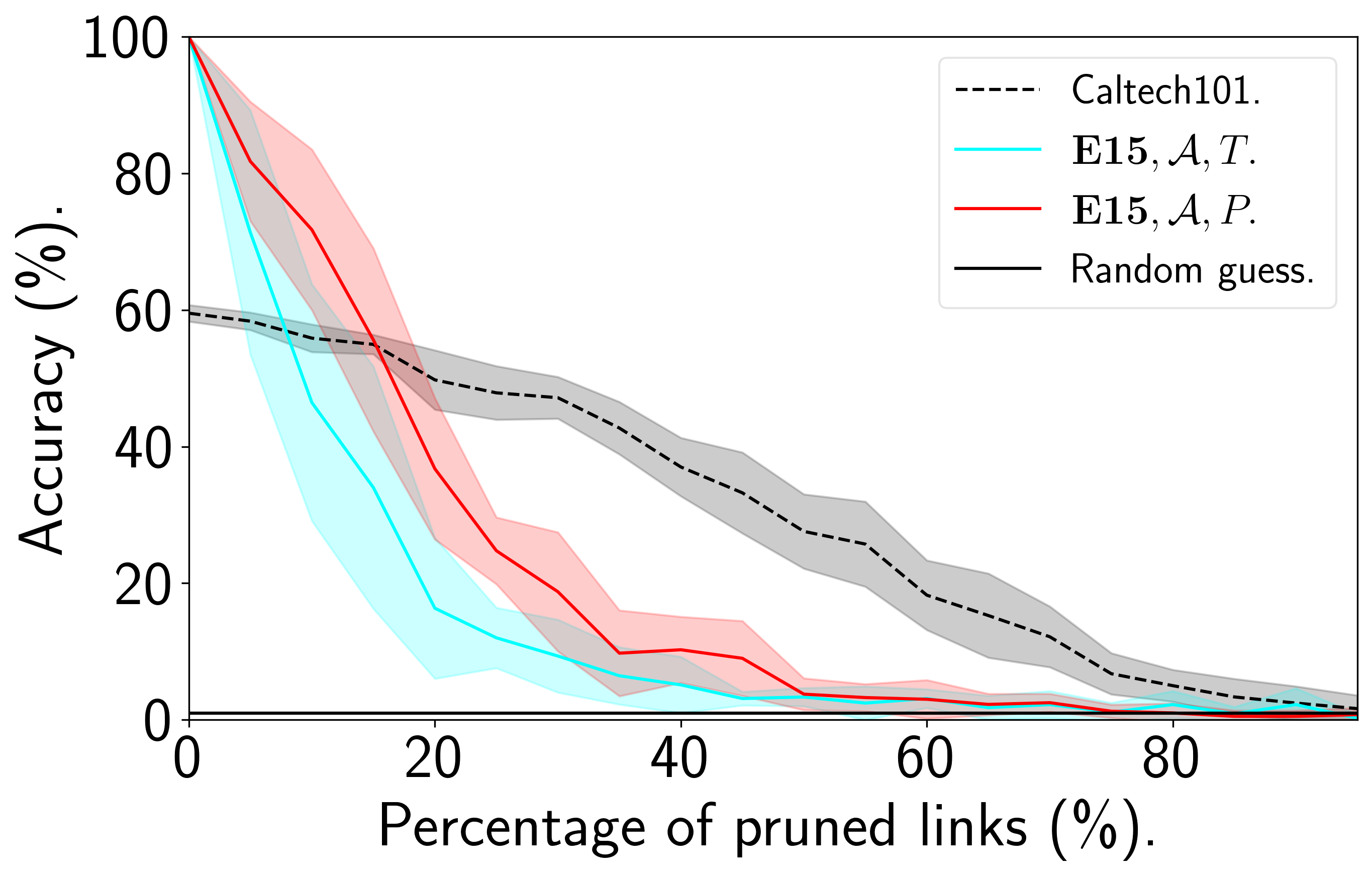}
\end{minipage}%
}
\caption{The change of classification accuracy on normal inputs, triggers, and post-triggers under neuron pruning. 
The configuration is DFD encoder with injection scheme $\mathcal{D}$.}
\label{fig:np}
\end{figure}

Nextly, we considered an adversary with knowledge of the watermarking scheme and evidence.
Concretely, such an adversary has access to the verifier program Algo.~\ref{algorithm:OV}. 
Having acquired the image encoder $T(\cdot)$, the adversary can conduct adversarial tuning to spoil the injected triggers.
In particular, the adversary uses $T(\cdot)$ to generate a series of triggers, assigns them with a random label, and tunes the watermarked DNN on this forged trigger set $\mathcal{B}^{\text{adv}}=\left\{(T(\textbf{u}_{w}^{\text{adv}}),c^{\text{adv}}) \right\}_{w=1}^{W}$ with $\textbf{u}_{w}^{\text{adv}}$ randomly sampled from ${U}$.
This adversarial tuning teaches the DNN to attribute all triggers into one specific class $c^{\text{adv}}$ and thence spoils the watermark.
We applied this adaptive attack and the accuracy of (post-)triggers in the spoiled DNN was shown in Fig.~\ref{fig:adv}.
Although all four watermarking schemes remained valid against the fine-tuning, their persistency was at risk against the adversarial tuning.
In many cases, the classification accuracy on triggers fell significantly around the random guess baseline, so the IP protection was compromised. 
This damage was particularly harmful for WF, whose triggers' pattern is too naive. 
By complicating the pattern of the backdoor, e.g., adopting the DFD generator and using post-triggers, the watermark's persistency against a knowledgeable adversary increased significantly.
Notice that the adversarial tuning simultaneously damages the DNN's performance on normal inputs, yet this attack remains practical since the adversary only needs to deploy the adversarially tuned DNN as a backdoor filter (but not the backend of the service) that blocks inputs with predicted label $c^{\text{adv}}$.
By correctly recognizing triggers as shown in Fig.~\ref{figure:advtune}, such filtering can invalidate the ownership proof by sacrificing the masked intact DNN's performance for at most $\frac{1}{C}$.

We also evaluated the robustness against pruning, where an adversary risks the pirated DNN's performance to invalidate ownership proof.
The decline of the DNN's classification accuracy on normal inputs and (post-)triggers is visualized in Fig.~\ref{fig:np}.
The decline of DNN's performance for four datasets when its accuracy triggers/post-triggers dropped below the random guess baseline is summarized in Table.~\ref{table:prune}. 
Such sacrifice is intolerable for commercial DNN products.

\begin{table}[!t]
\centering
\caption{The decline of DNN's performance when the ownership evidence is invalidated.}
\scalebox{0.8}{
\begin{tabular}{c|c|c|c|c}
\toprule
\textbf{Backdoor} & MNIST & CIFAR10 & CIFAR100 & Caltech101 \\
\toprule
$T$ & 12.3\% & 24.7\% & 11.8\% & 22.5\% \\
$P$ & 25.7\% & 40.9\% & 25.5\% & 41.3\% \\
\bottomrule
\end{tabular}}
\label{table:prune}
\end{table}

Finally, we visualize the $l_{2}$ distance between triggers and post-triggers under different $\lambda_{1}$s in Fig.~\ref{fig:conf}.
For each encoder, we produced 30 triggers, 30 corresponding post-triggers, and another 30 independent triggers.
The threshold $\epsilon$ in Eq.~\eqref{equation:fuzzyOV} for a fixed $\lambda_{1}$ was set to the maximal $l_{2}$ distance between a trigger and its post-trigger w.r.t. all encoders.
It is observed that no confusion was detected, even if the threshold was increased five times.
Therefore, the OV can be conducted safely and unambiguously using post-triggers and the original encoder.

\begin{figure}[!t]
\subfigure[$\lambda_{1}=0.15$.]{
\begin{minipage}[htbp]{0.25\linewidth}
\includegraphics[width=1.9cm]{./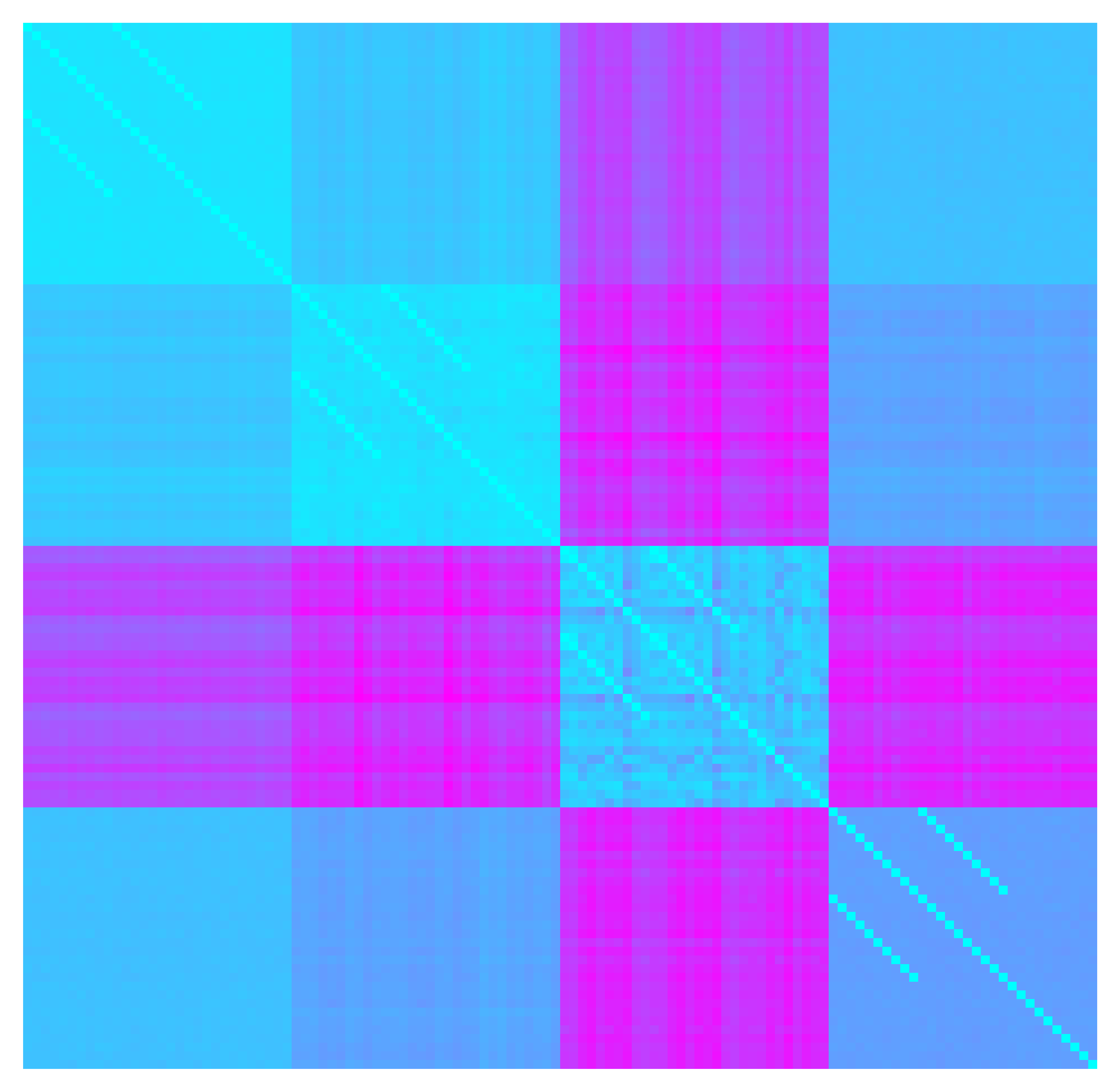}
\end{minipage}%
}%
\subfigure[$5\epsilon$.]{
\begin{minipage}[htbp]{0.25\linewidth}
\includegraphics[width=1.9cm]{./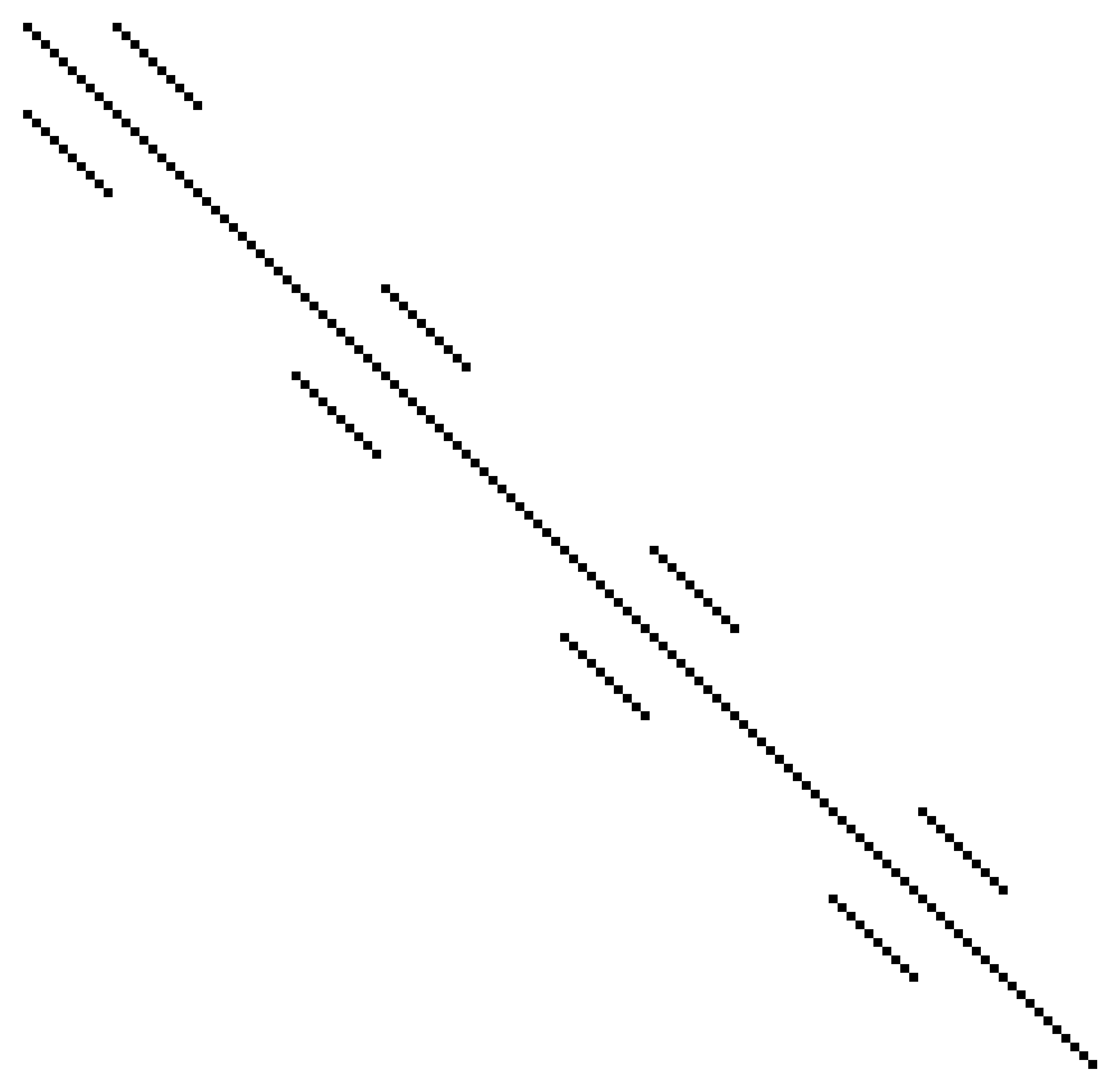}
\end{minipage}%
}%
\subfigure[$10\epsilon$.]{
\begin{minipage}[htbp]{0.25\linewidth}
\includegraphics[width=1.9cm]{./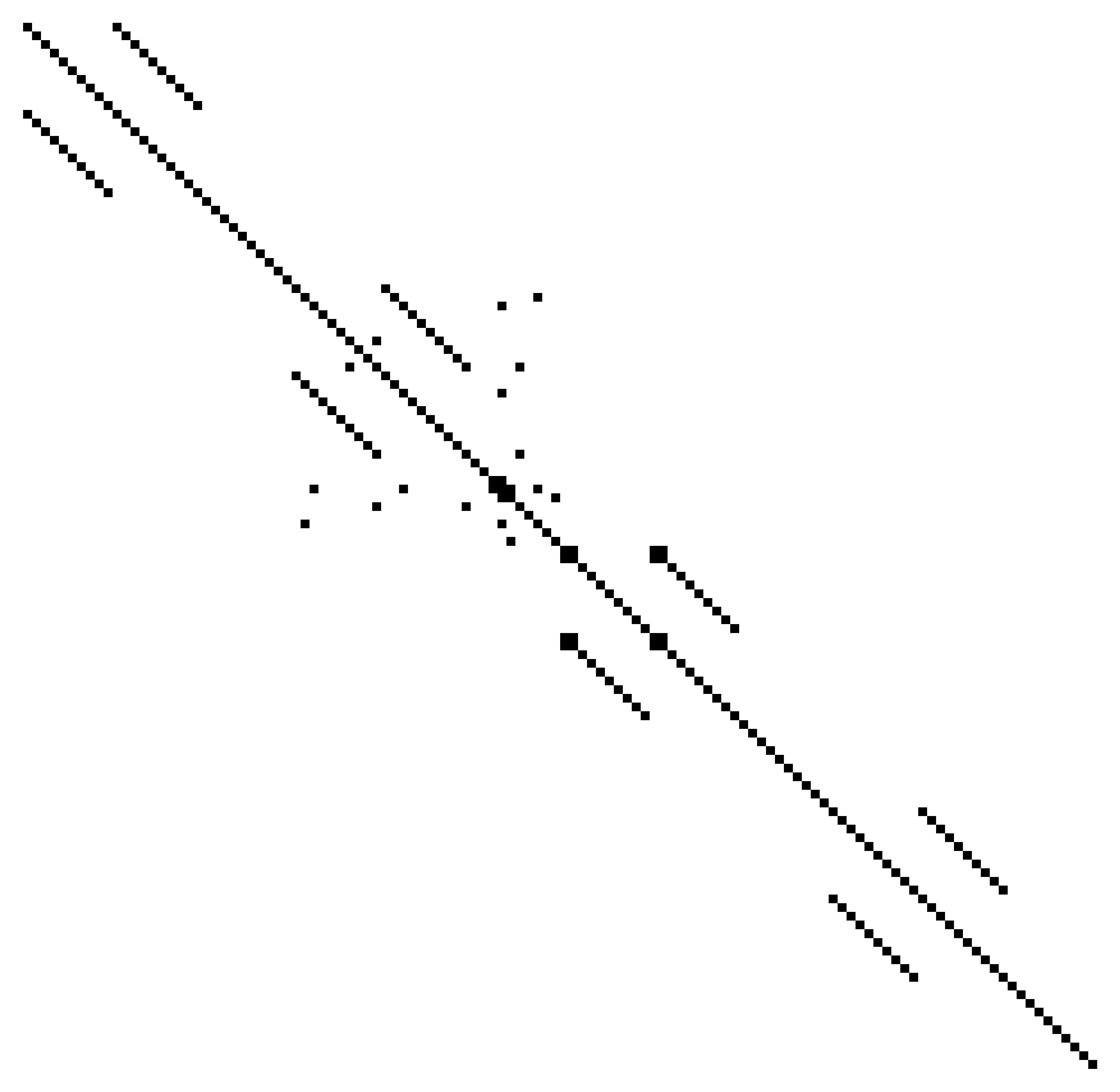}
\end{minipage}%
}%
\subfigure[$15\epsilon$.]{
\begin{minipage}[htbp]{0.25\linewidth}
\includegraphics[width=1.9cm]{./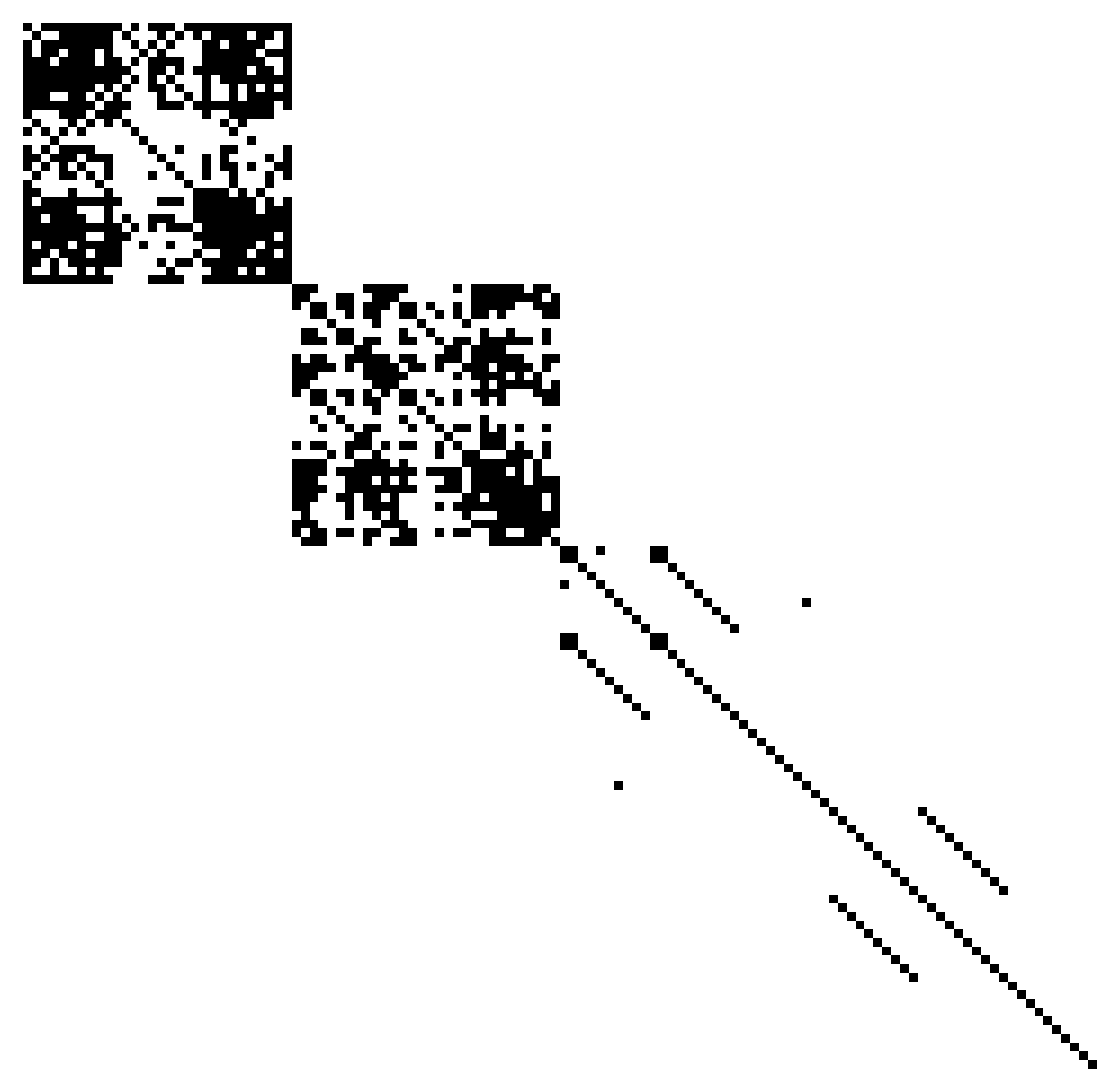}
\end{minipage}%
}
\subfigure[$\lambda_{1}=0.5$.]{
\begin{minipage}[htbp]{0.25\linewidth}
\includegraphics[width=1.9cm]{./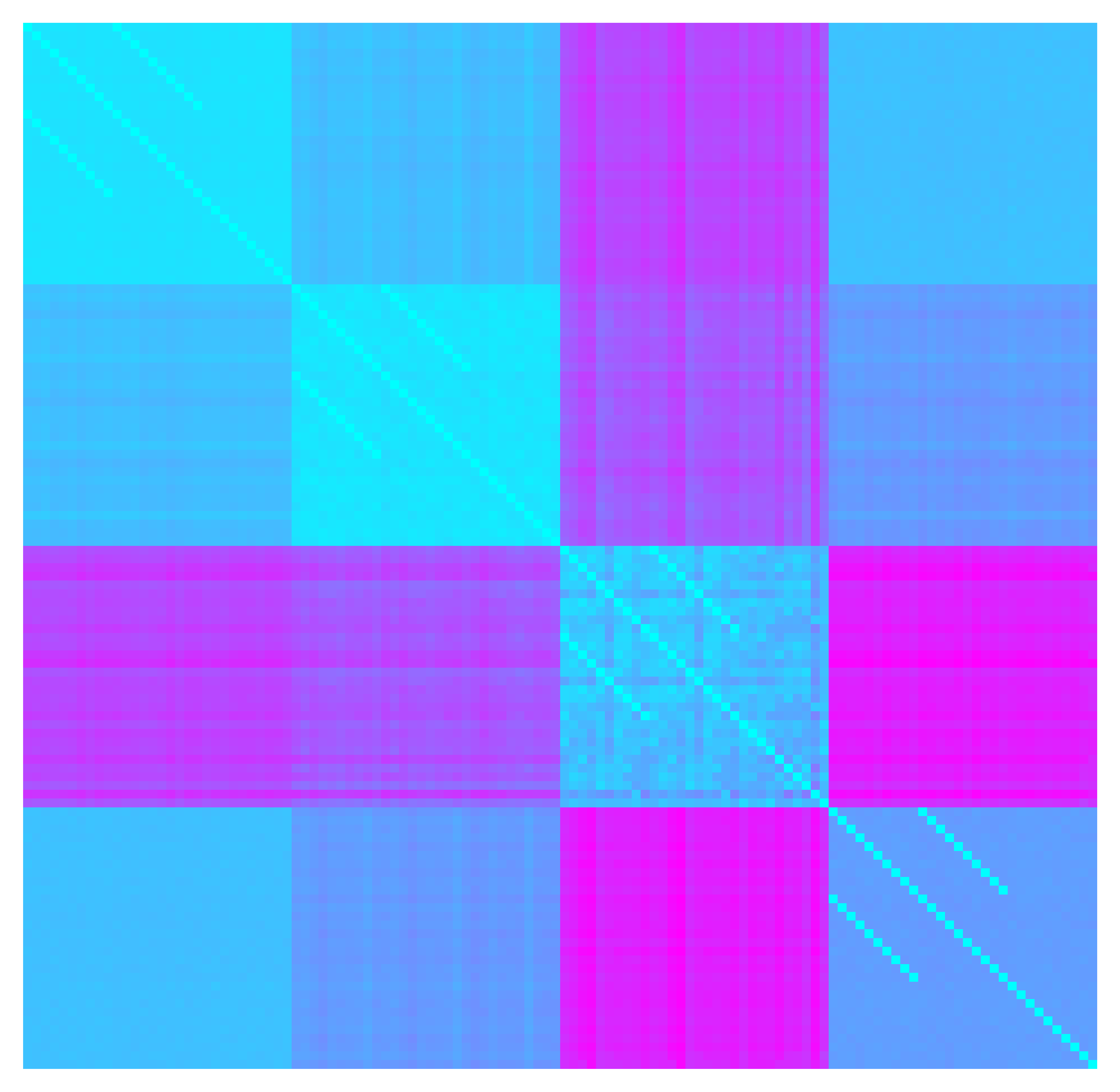}
\end{minipage}%
}%
\subfigure[$5\epsilon$.]{
\begin{minipage}[htbp]{0.25\linewidth}
\includegraphics[width=1.9cm]{./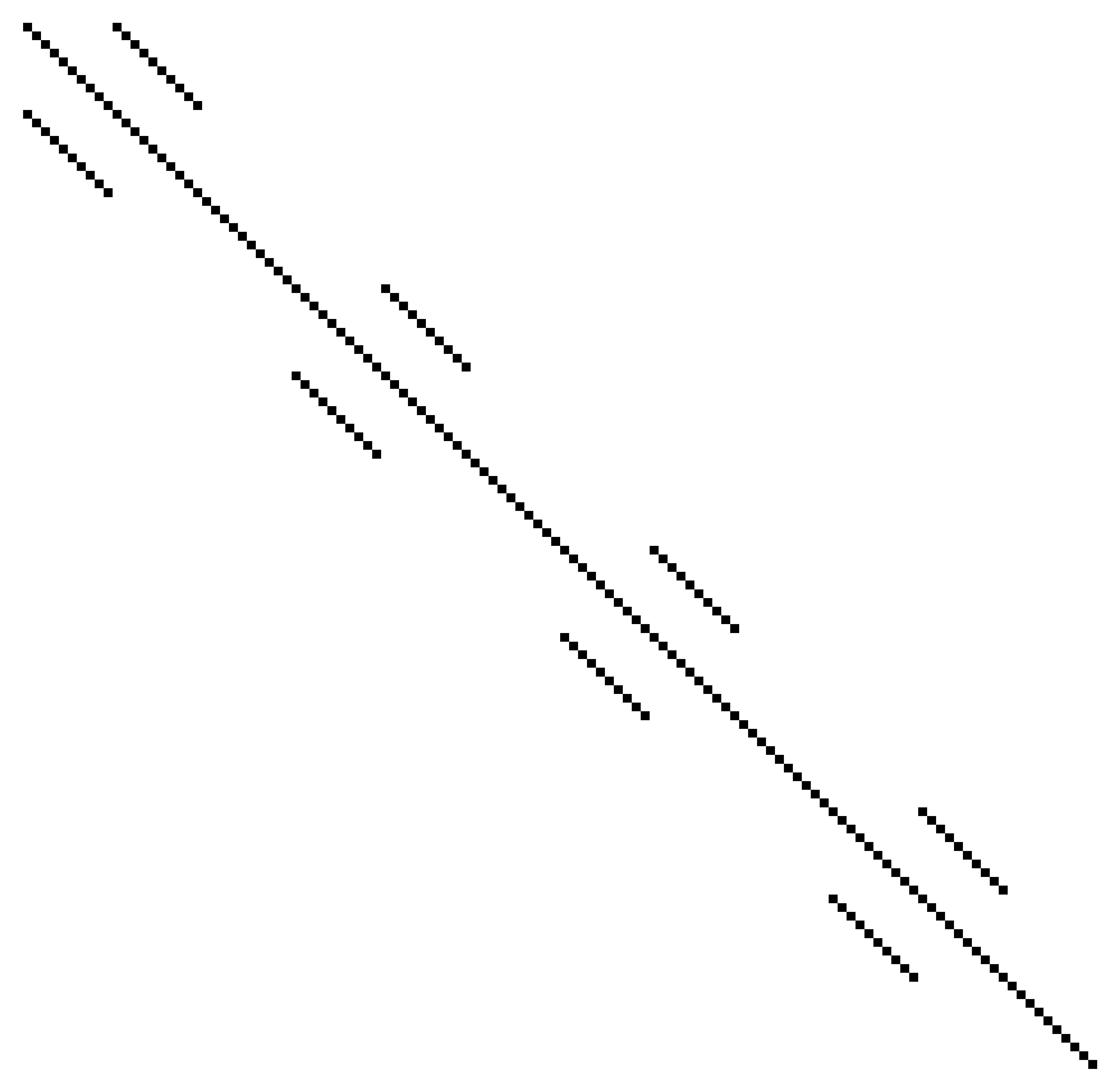}
\end{minipage}%
}%
\subfigure[$10\epsilon$.]{
\begin{minipage}[htbp]{0.25\linewidth}
\includegraphics[width=1.9cm]{./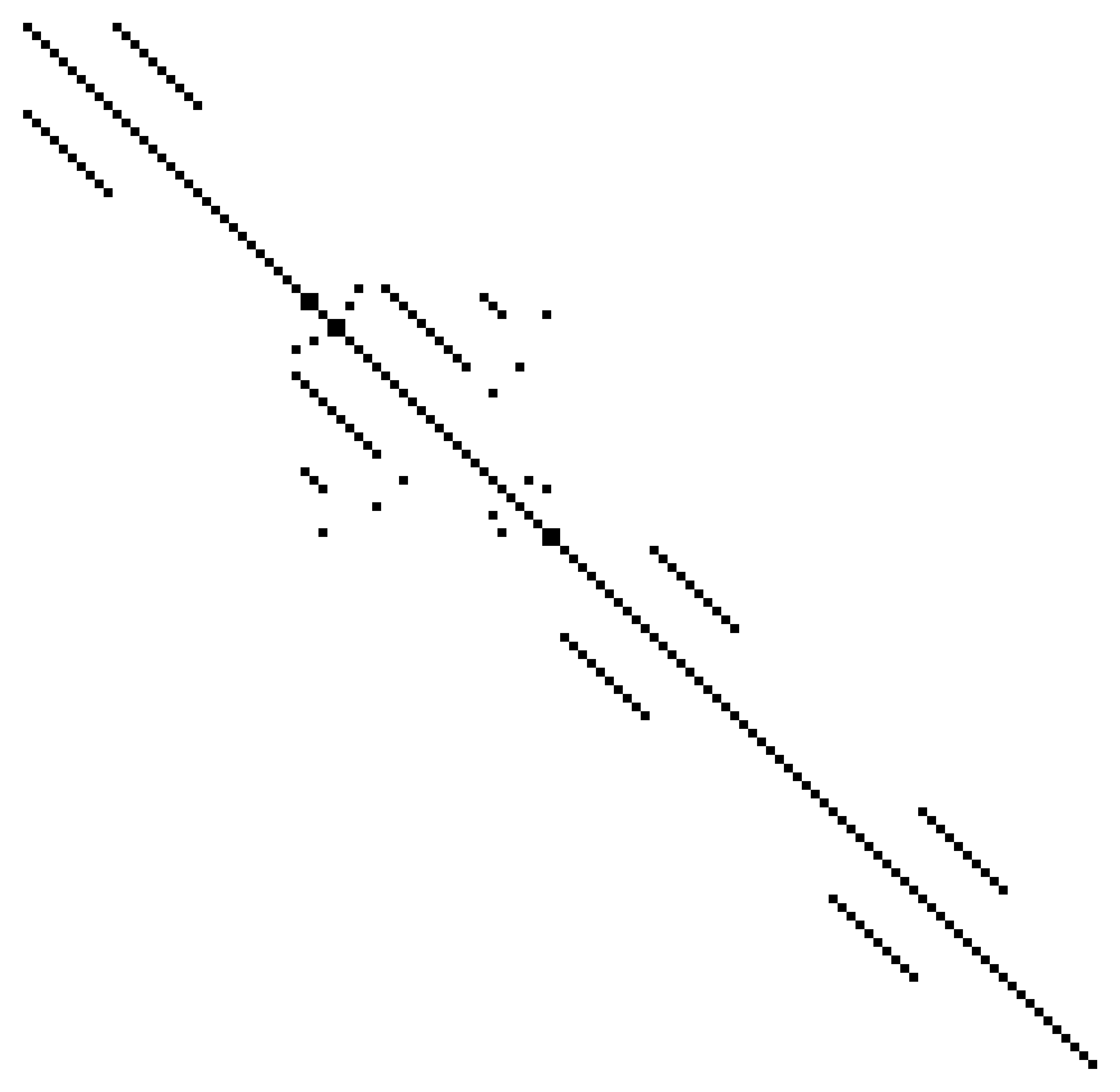}
\end{minipage}%
}%
\subfigure[$15\epsilon$.]{
\begin{minipage}[htbp]{0.25\linewidth}
\includegraphics[width=1.9cm]{./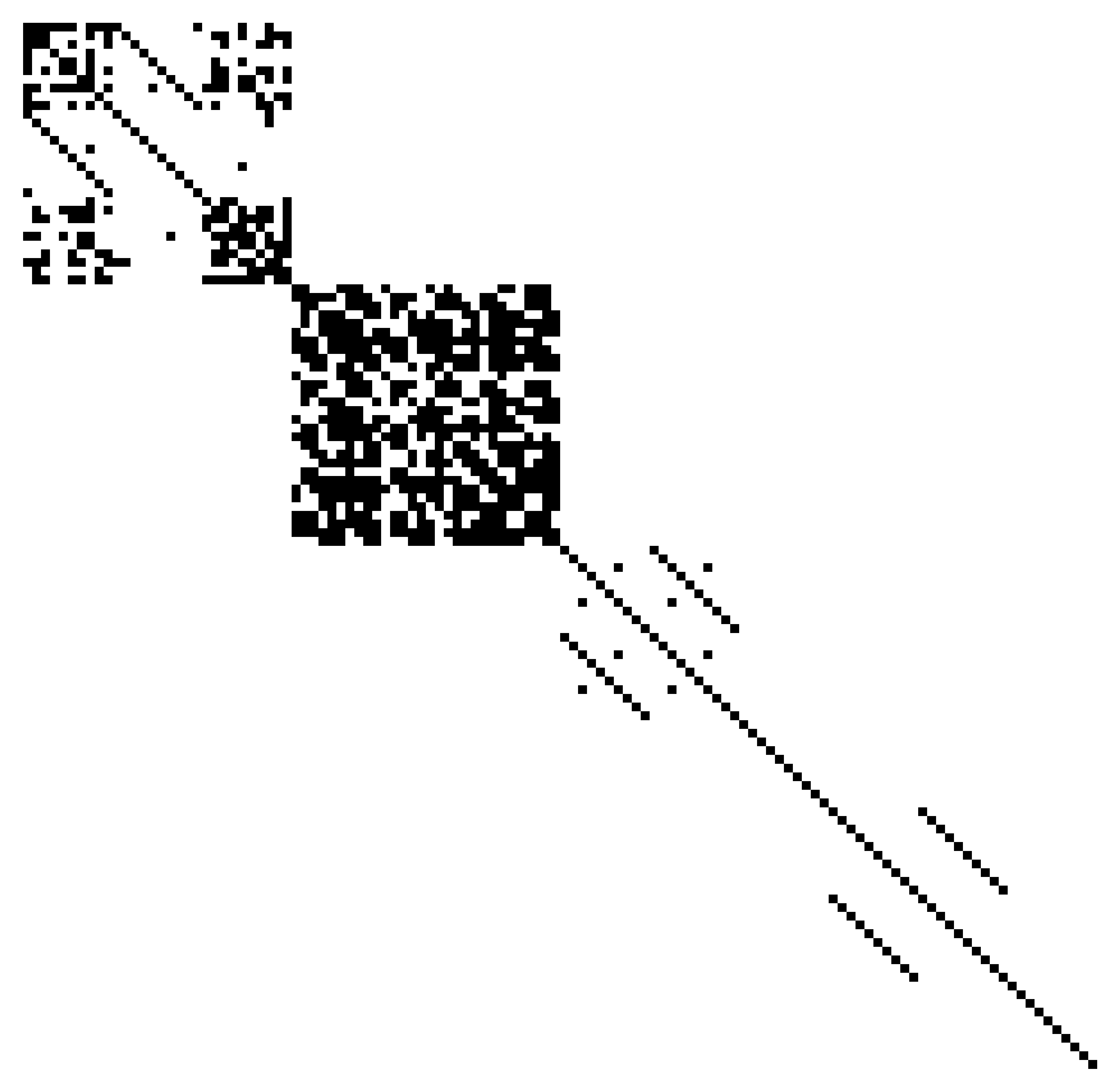}
\end{minipage}%
}
\subfigure[$\lambda_{1}=1$.]{
\begin{minipage}[htbp]{0.25\linewidth}
\includegraphics[width=1.9cm]{./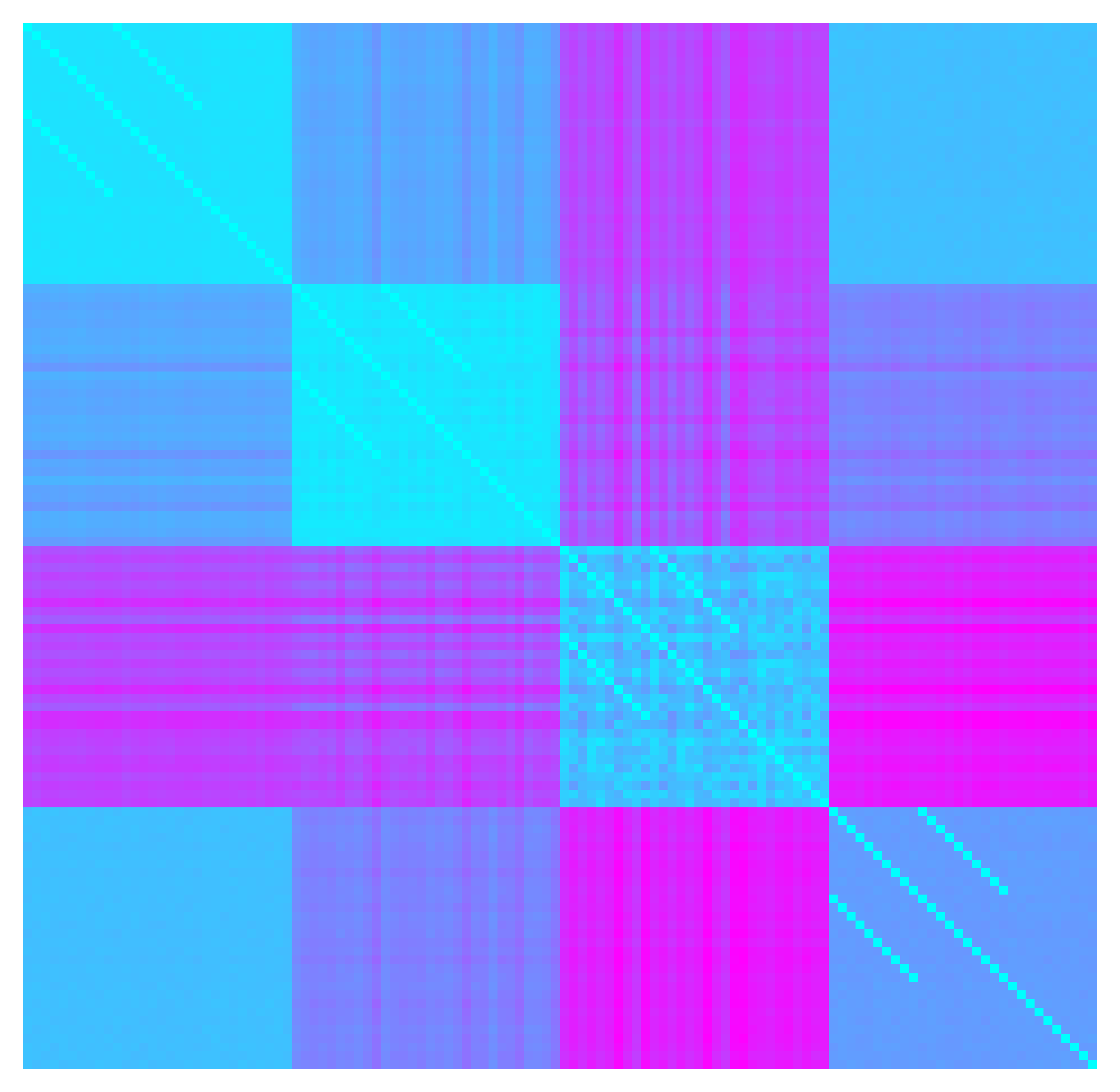}
\end{minipage}%
}%
\subfigure[$5*\epsilon$.]{
\begin{minipage}[htbp]{0.25\linewidth}
\includegraphics[width=1.9cm]{./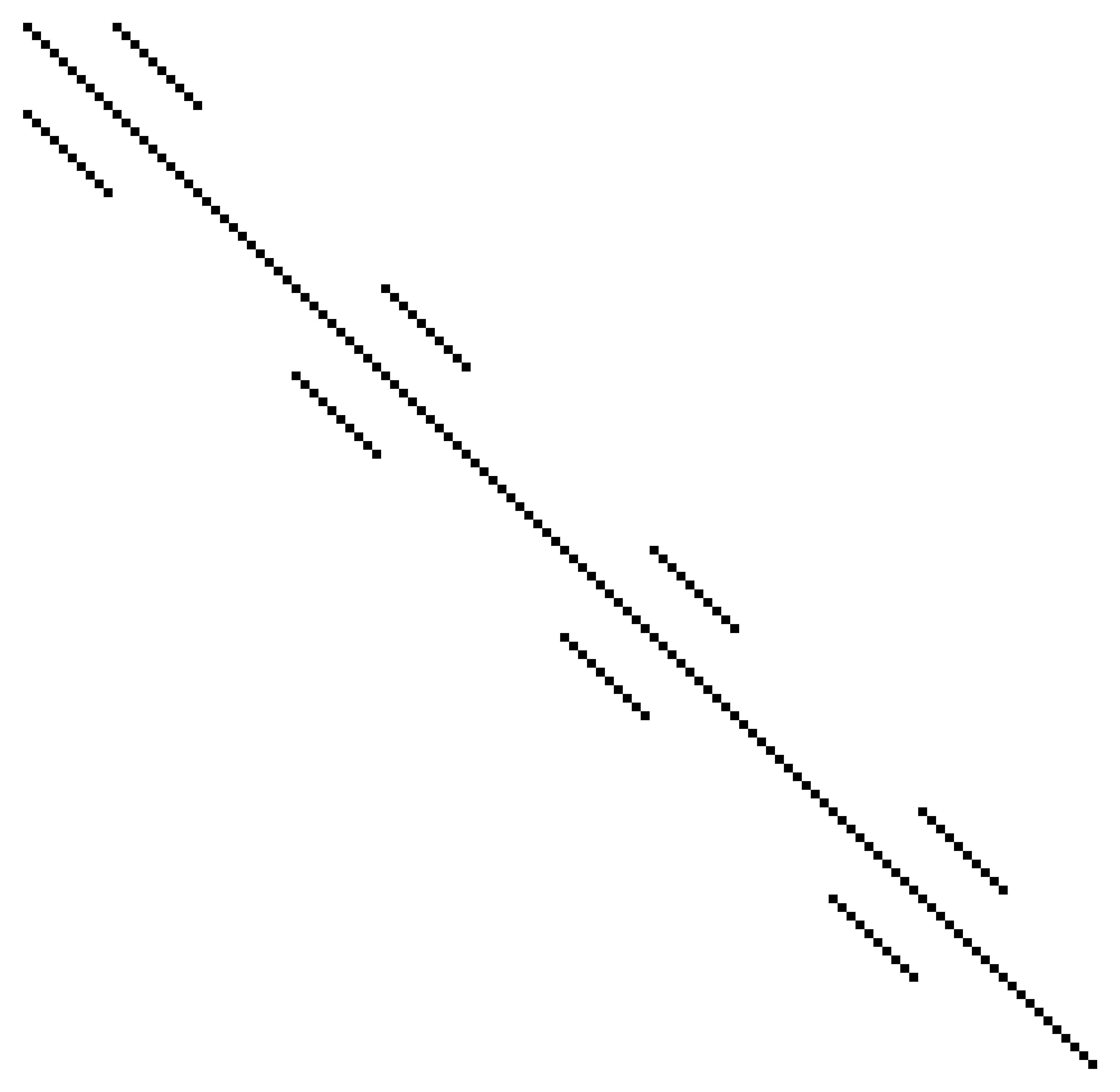}
\end{minipage}%
}%
\subfigure[$10*\epsilon$.]{
\begin{minipage}[htbp]{0.25\linewidth}
\includegraphics[width=1.9cm]{./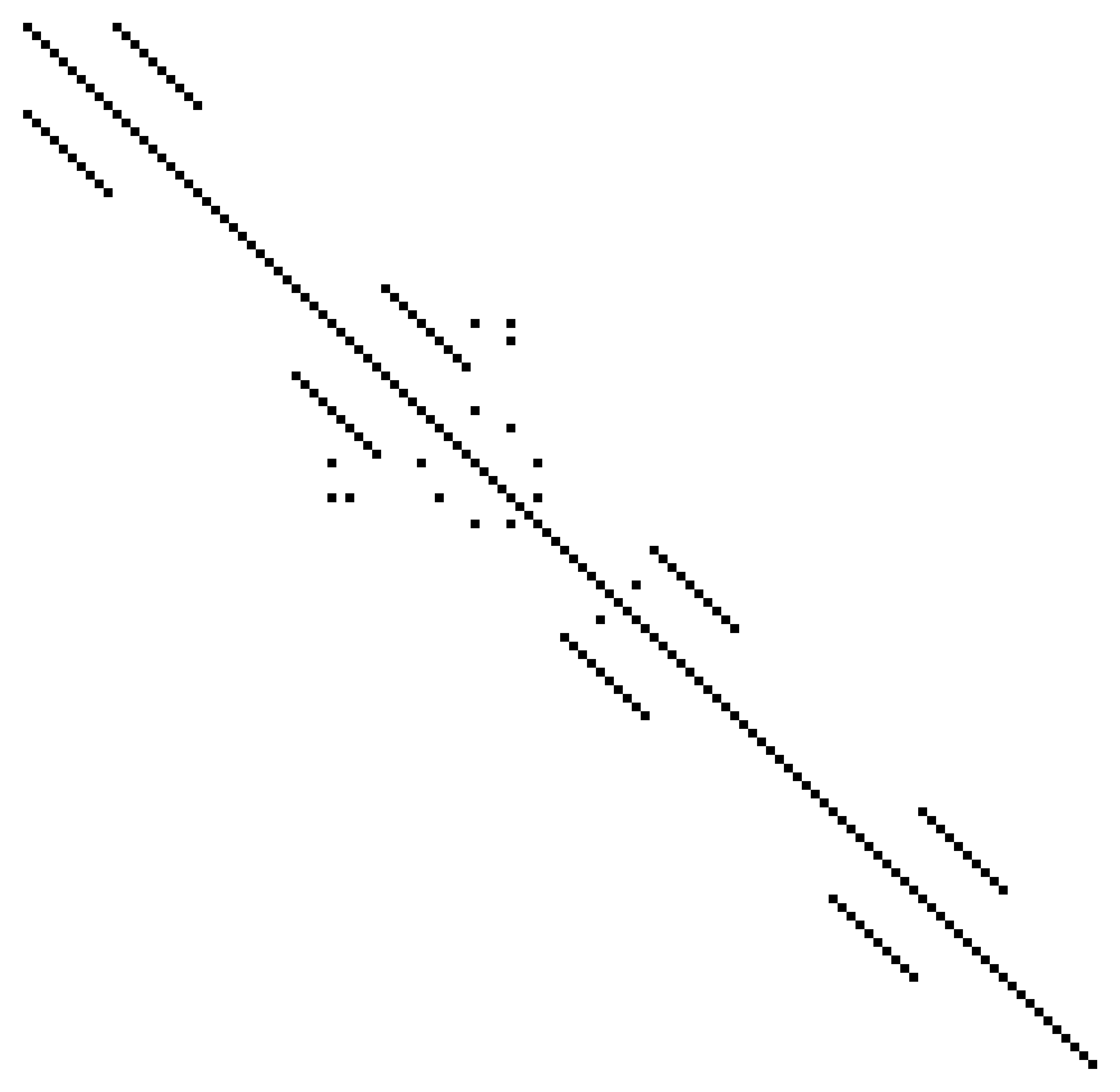}
\end{minipage}%
}%
\subfigure[$15*\epsilon$.]{
\begin{minipage}[htbp]{0.25\linewidth}
\includegraphics[width=1.9cm]{./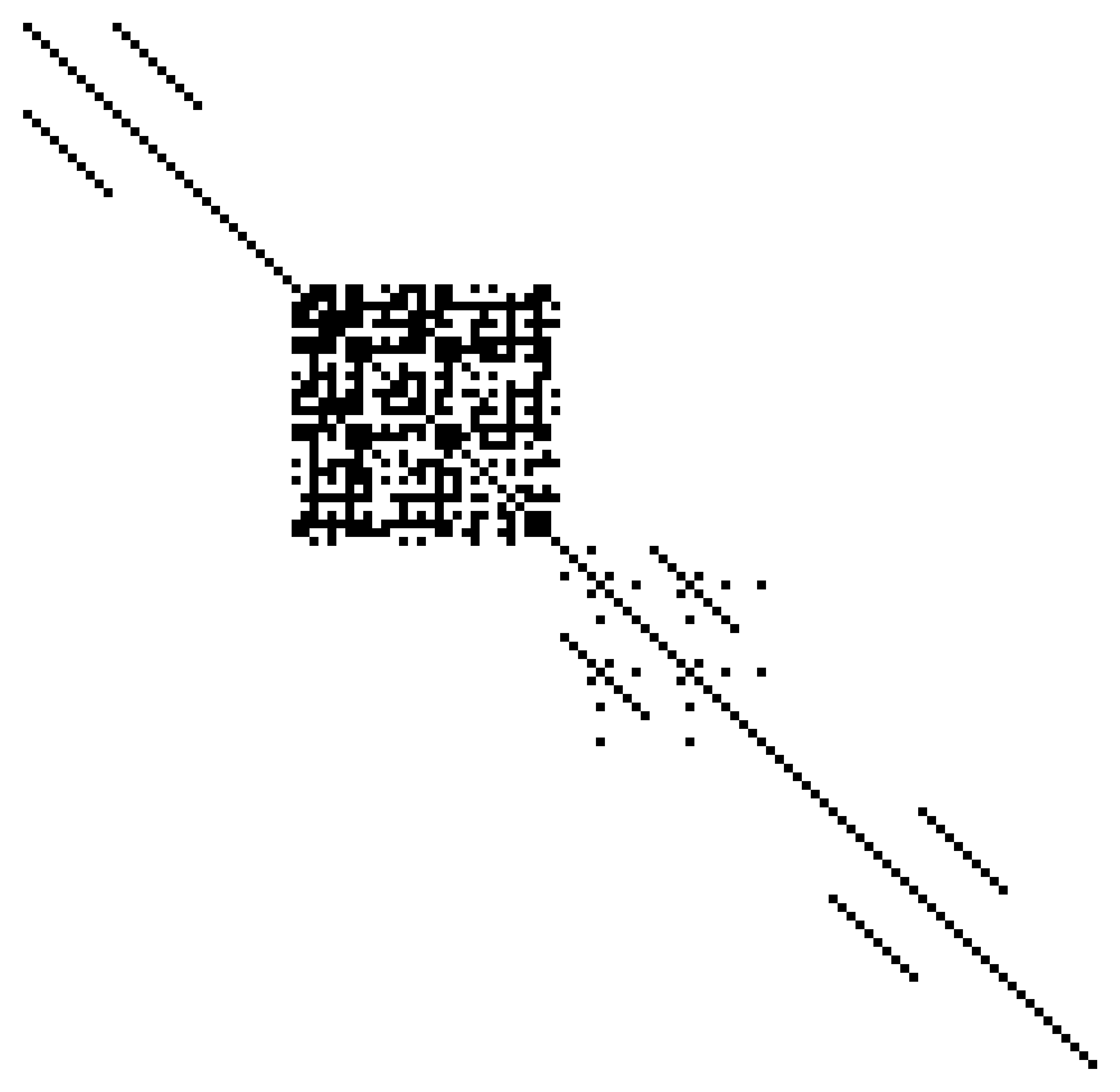}
\end{minipage}%
}%
\caption{The $l_{2}$ distance between triggers and post-triggers for different image encoders under $\lambda_{1}=0.15/0.5/1$ (a)/(e)/(i). The (post-)trigger that satisfies the fuzzy identical relationship defined by Eq.~\eqref{equation:fuzzyOV} are marked in (b)-(d)/(f)-(h)/(j)-(l).}
\label{fig:conf}
\end{figure}

\subsection{The Watermarking Capacity}
To delve into the watermarking capacity, we varied $N$ in range $[50,1000]$ with step 50, post-triggers were injected into the DNN with 50 ones as a group. 
We recorded the classification accuracy of the watermarked DNN on the normal test dataset and the lowest accuracy on post-trigger batches. 
The optimal configuration from previous discussions, the DFD encoder with injection scheme $\mathcal{D}$ and post-triggers $P$ was adopted and results are demonstrated in Fig.~\ref{fig:capacity}. 
The first conclusion is that when $N$ increased, the decline of the DNN's performance does not only rely on the dataset or the network architecture. 
For example, in MNIST and Caltech101, the increase of $N$ brought more damage. 
The second conclusion is that increasing $N$ has little influence to the accurate retrival of post-triggers, i.e., the second term in Eq.~\eqref{equation:bound}. 
From an empirical perspective, the bottleneck for watermarking capacity remains the damage of watermarking to the DNN's functionality. 

\begin{figure}[!t]
\subfigure[MNIST.]{
\begin{minipage}[htbp]{0.5\linewidth}
\centering
\includegraphics[width=4.2cm]{./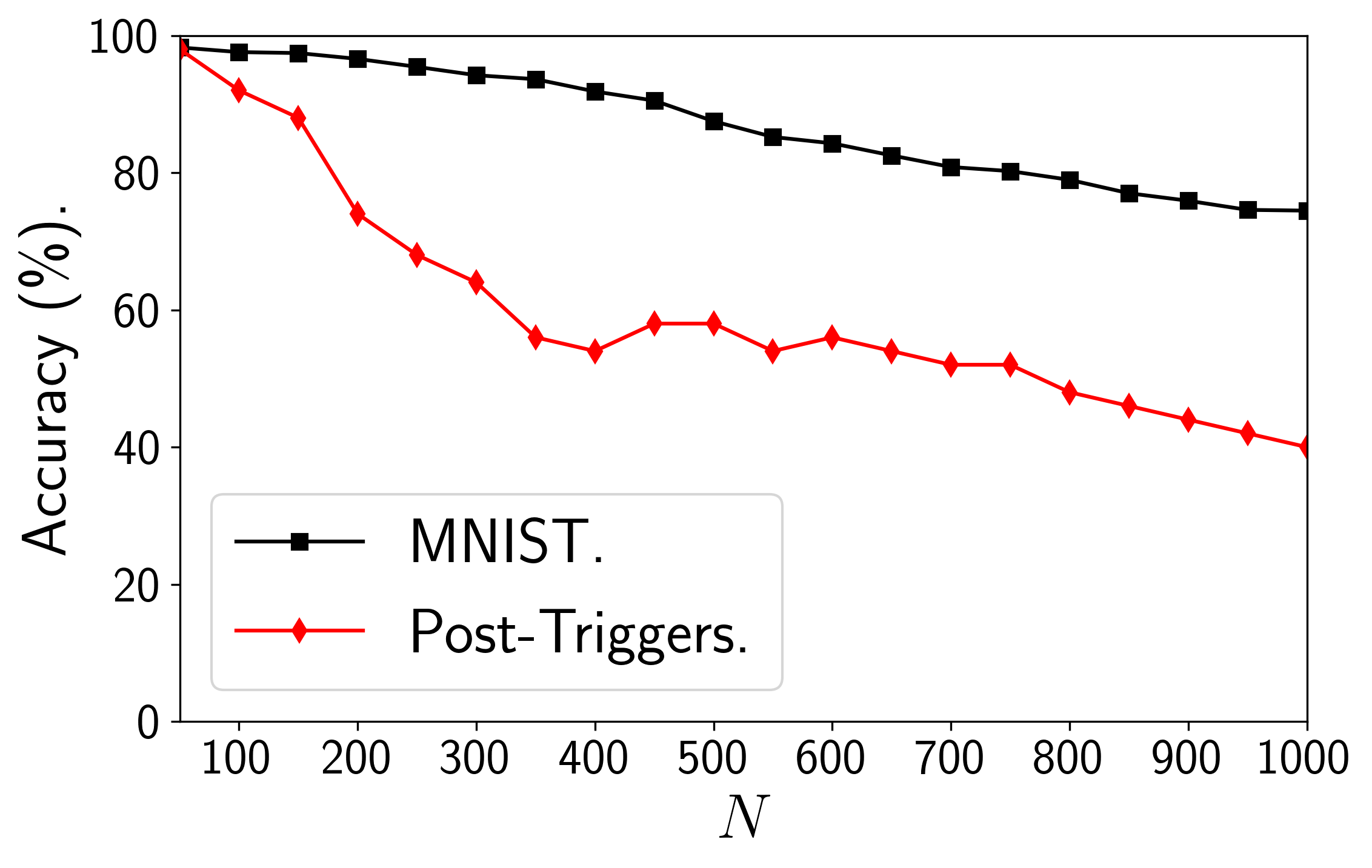}
\end{minipage}%
}%
\subfigure[CIFAR10.]{
\begin{minipage}[htbp]{0.5\linewidth}
\centering
\includegraphics[width=4.2cm]{./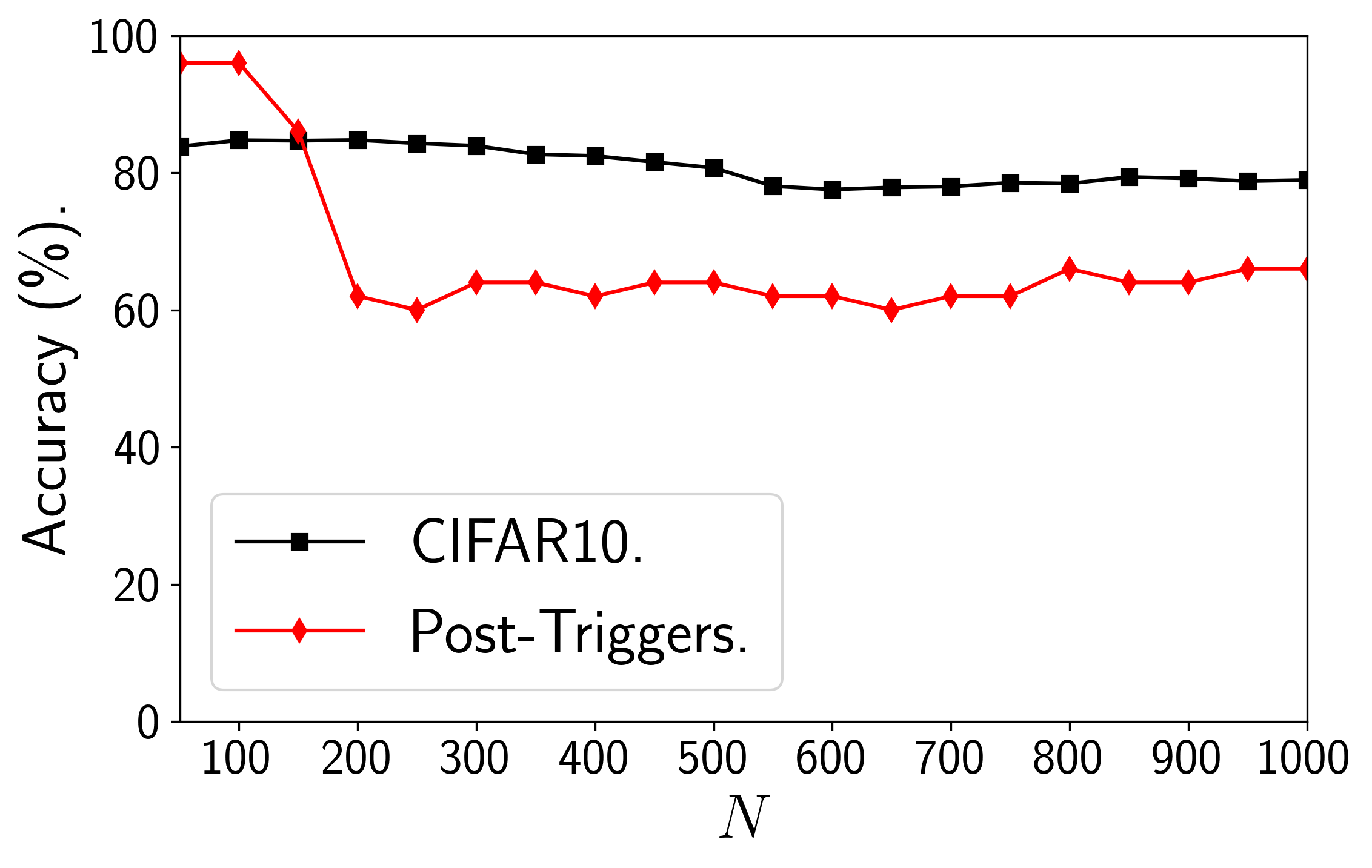}
\end{minipage}%
}
\subfigure[CIFAR100.]{
\begin{minipage}[htbp]{0.5\linewidth}
\centering
\includegraphics[width=4.2cm]{./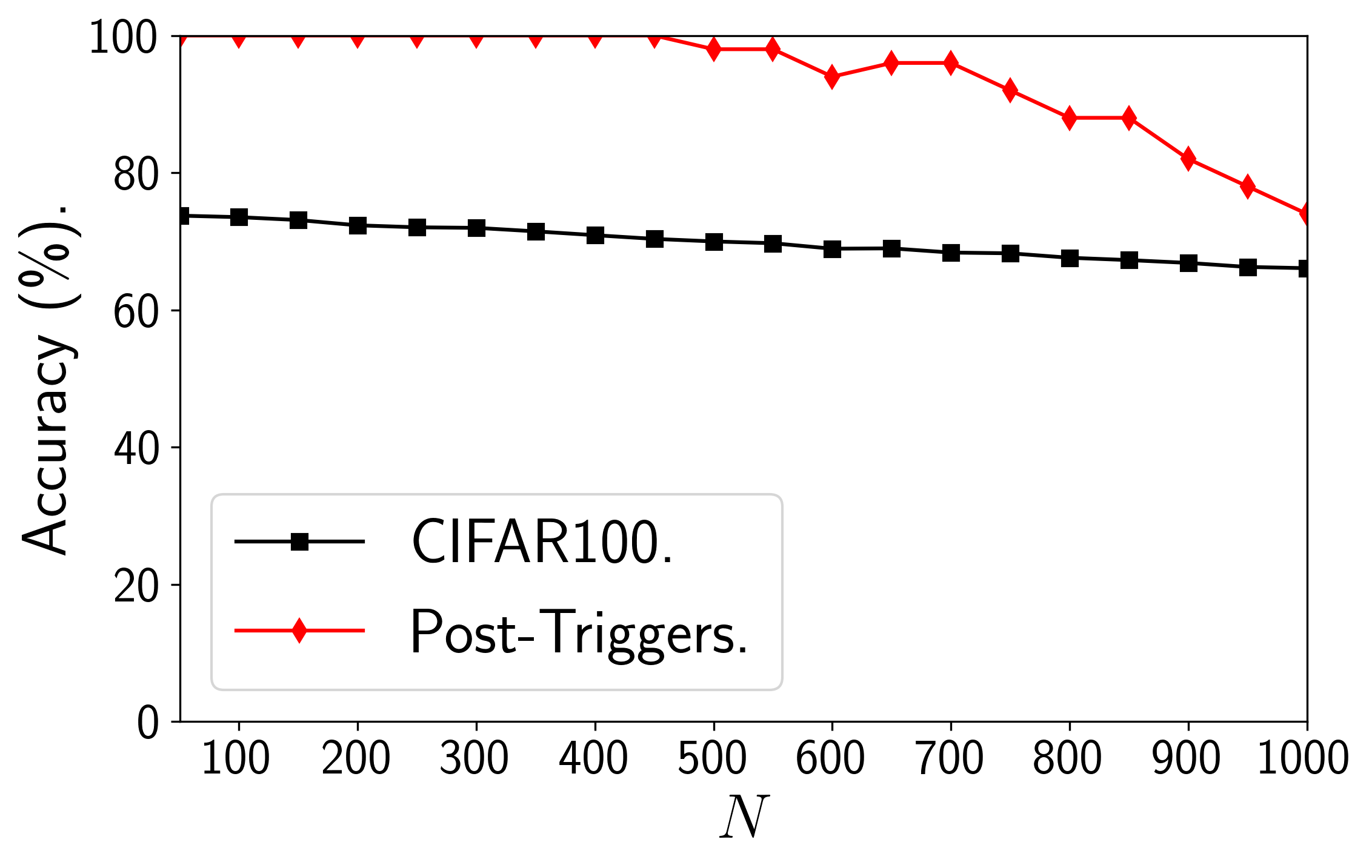}
\end{minipage}%
}%
\subfigure[Caltech101.]{
\begin{minipage}[htbp]{0.5\linewidth}
\centering
\includegraphics[width=4.2cm]{./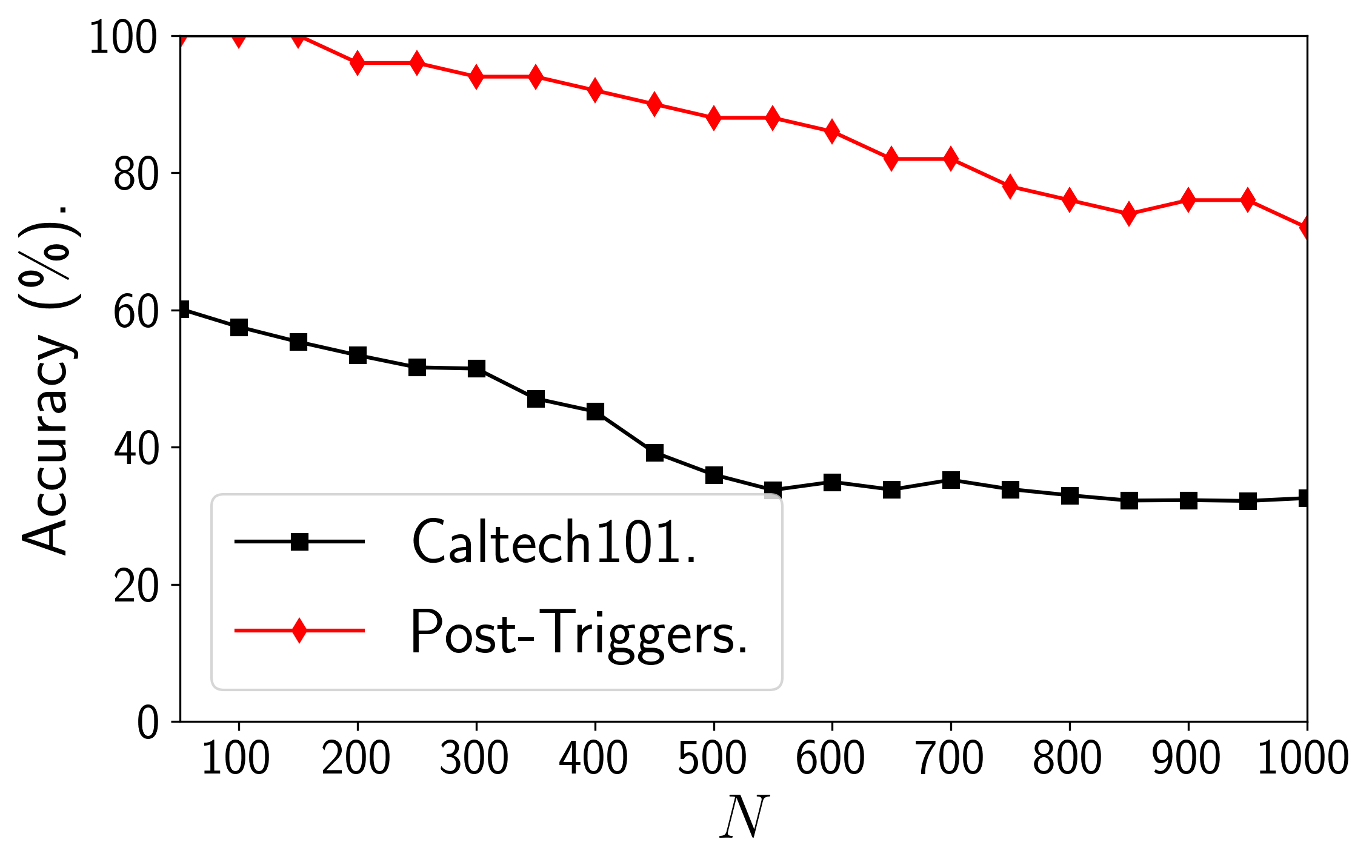}
\end{minipage}%
}
\caption{The accuracy on the test dataset and post-triggers for different $N$s.}
\label{fig:capacity}
\end{figure}

\subsection{Discussions}
Apart from injection along with anchors, regularizers can also be adopted to stabilize the DNN's performance during watermarking~\cite{ours}.
However, distinct regularizers and corresponding optimizers have to be designed for different DNN architectures, reducing their universality.

The general motivation behind performance-preserving DNN watermarking is to find the least perturbation to the network's decision boundary while encoding specific information.
The optimal choice is fixing the decision boundary on normal inputs and outputs.
When the knowledge of normal data is unavailable, the optimization target becomes finding a collection of samples with which the entire decision boundary can be preserved as intact as possible.
Notice that this optimization target is precisely that of DFD, therefore anchors given by the DFD generator outperform other candidates in preserving the DNN's performance during watermark injection. 

Remark that anouncing the generator would not reduce an adversary's cost in distilling the watermarked DNN. 
Fitting a new DNN with the watermarked DNN on samples generated by the generator would only yield a futile model. 
This is because during DFD, the generator and the student model are trained simultaneously, so the anchors are gradually shifted from simple images to hard images. 
Tuning a new model solely from hard images, i.e., anchors generated from the final generator when the DFD terminates, cannot result in a good student. 
Consequently, an adversary who distills the watermarked DNN still has to trade the model's overall performance for the freedom from IP regulation. 

\section{Conclusions}
\label{sec:5}
To regulate image classification DNN as intellectual properties in industrial applications, it is desirable that the watermarking scheme remains independent from the training dataset. 
Moreover, the security of ownership management should be built neither upon the secrecy of the watermarking scheme nor that of the channel in sending ownership evidence. 
In this paper, we propose a knowledge-free black-box watermarking scheme for image classification DNNs, together with the framework for ownership verification and post-trigger tuning. 
The proposed scheme preserves the functionality of the watermarked DNN without accessing the training dataset. 
The security of the ownership proof against a knowledgable adversary has also been verified. 
Experiments and analyses justify that our watermarking scheme achieves the optimal level of performance against the most powerful adaptive attacks and paves the way for DNN watermarking in industrial applications.

\bibliographystyle{IEEEbib}
\bibliography{WM.bib}

\end{document}